\documentclass[11pt]{article}

\usepackage[colorlinks, linkcolor=blue, citecolor=blue]{hyperref}            
\usepackage{color}
\usepackage{graphicx,subfigure,amsmath,amssymb,amsfonts,bm,epsfig,epsf,url,dsfont}
\usepackage{amsthm}
\usepackage{tikz}
\usepackage{bbm}      
\usepackage{booktabs}
\usepackage{cases}
\usepackage{fullpage}
\usepackage[small,bf]{caption}
\usepackage[backend=biber,style=alphabetic,maxbibnames=99]{biblatex}
\addbibresource{main.bib}
\usepackage[top=1in,bottom=1in,left=1in,right=1in]{geometry}
\usepackage{fancybox}
\usepackage[bottom]{footmisc}
\usepackage{appendix}

\usepackage{algorithmic}
\usepackage{amsmath}
\usepackage{amssymb}
\usepackage{color}
\usepackage{mathtools}
\usepackage{makecell}
\usepackage{bm}
\usepackage{todonotes}
\usepackage{diagbox}
\usepackage{tablefootnote}

\DeclarePairedDelimiter{\ceil}{\lceil}{\rceil}

\DeclareMathOperator{\poly}{poly}
\DeclareMathOperator{\polylog}{polylog}

\DeclareMathOperator{\Ber}{Ber}

\newcommand{\latestedits}[1]{{#1}}

\let\baraccent=\= %
\renewcommand{\=}[1]{\stackrel{#1}{=}} %

\providecommand{\RR}{\mathbb{R}}

\providecommand{\CC}{\mathbb{C}}
\providecommand{\FF}{\mathbb{F}}

\providecommand{\ZZ}{\mathbb{Z}}

\providecommand{\cA}{\mathcal{A}}

\providecommand{\cL}{\mathcal{L}}

\providecommand{\PP}{\mathbb{P}}

\providecommand{\eps}{\epsilon}

\mathchardef\mhyphen="2D %

\providecommand{\sm}{\setminus}

\newcommand{\interior}[1]{%
  {\kern0pt#1}^{\mathrm{o}}%
}

\newtheorem{theorem}{Theorem}[section]
\newtheorem{lemma}[theorem]{Lemma}
\newtheorem{definition}[theorem]{Definition}

\newtheorem{conjecture}[theorem]{Conjecture}
\newtheorem{claim}[theorem]{Claim}
\newtheorem{proposition}[theorem]{Proposition}
\newtheorem{corollary}[theorem]{Corollary}

\newenvironment{fminipage}%
  {\begin{Sbox}\begin{minipage}}%
  {\end{minipage}\end{Sbox}\fbox{\TheSbox}}

\newenvironment{algbox}[0]{\vskip 0.2in
\noindent 
\begin{fminipage}{6.3in}
}{
\end{fminipage}
\vskip 0.2in
}

\providecommand{\cl}{\mathrm{cl}}

\newcommand{\dtv}{d_{\text{TV}}}

\setcounter{tocdepth}{2}

\title{The Average-Case Complexity of Counting Cliques \\ in Erd\H{o}s-R\'enyi Hypergraphs}
\author{Enric Boix{-}Adser\`a\thanks{Massachusetts Institute of Technology. Department of EECS. Email: \texttt{eboix@mit.edu}.}
\and
Matthew Brennan\thanks{Massachusetts Institute of Technology. Department of EECS. Email: \texttt{brennanm@mit.edu}.}
\and 
Guy Bresler\thanks{Massachusetts Institute of Technology. Department of EECS. Email: \texttt{guy@mit.edu}.}}
\date{\today}

\begin{document}

\makeatletter
\def\@maketitle{%
  \newpage
  \null
  \vskip 2em%
  \begin{center}%
  \let \footnote \thanks
    {\LARGE \@title \par}%
    \vskip 1.5em%
    {\large
      \lineskip .5em%
      \begin{tabular}[t]{c}%
        \@author
      \end{tabular}\par}%
    \vskip 1em%
    {\large \@date}%
    \vskip 1em
    {\textit{\normalsize Dedicated to the memory of our dear colleague and friend, Matthew Brennan}}
  \end{center}%
  \par
  \vskip 0.5em}
\makeatother

\maketitle

\pagenumbering{roman}

\begin{abstract}
The complexity of clique problems on Erd\H{o}s-R\'{e}nyi random graphs has become a central topic in average-case complexity. Algorithmic phase transitions in these problems have been shown to have broad connections ranging from mixing of Markov chains and statistical physics to information-computation gaps in high-dimensional statistics.
We consider the problem of counting $k$-cliques in $s$-uniform Erd\H{o}s-R\'{e}nyi hypergraphs $G(n, c, s)$ with edge density $c$ and show that its fine-grained average-case complexity can be based on its worst-case complexity. We give a worst-case to average-case reduction for counting $k$-cliques on worst-case hypergraphs given a blackbox solving the problem on $G(n, c, s)$ with low error probability. Our approach is closely related to [Goldreich and Rothblum, FOCS18], which showed a worst-case to average-case reduction for counting cliques for an efficiently-sampleable distribution on graphs.

Our reduction has the following implications:
\begin{itemize}
\item \textit{Dense Erd\H{o}s-R\'{e}nyi graphs and hypergraphs}: Counting $k$-cliques on $G(n, c, s)$ with $k$ and $c$ constant matches its worst-case complexity up to a $\polylog(n)$ factor. Assuming randomized ETH, it takes $n^{\Omega(k)}$ time to count $k$-cliques in $G(n, c, s)$ if $k$ and $c$ are constant.
\item \textit{Sparse Erd\H{o}s-R\'{e}nyi graphs and hypergraphs}: When $c = \Theta(n^{-\alpha})$, we give several algorithms exploiting the sparsity of $G(n, c, s)$ that are faster than the best known worst-case algorithms. Complementing this, based on a fine-grained worst-case assumption, our reduction implies a different average-case phase diagram for each fixed $\alpha$ depicting a tradeoff between a runtime lower bound and $k$.
Surprisingly, in the hypergraph case ($s \ge 3$), these lower bounds are tight against our algorithms exactly when $c$ is above the Erd\H{o}s-R\'{e}nyi $k$-clique percolation threshold.
\end{itemize}
Our reduction is the first worst-case to average-case reduction for a problem over Erd\H{o}s-R\'{e}nyi hypergraphs and is the first mapping from a worst-case problem to an average-case problem with a different fine-grained complexity that we are aware of.
We also give a variant of our worst-case to average-case reduction for computing the parity of the $k$-clique count that requires a milder assumption on the error probability of the blackbox solving the problem on $G(n, c, s)$.
\end{abstract}

\pagebreak

\pagebreak

\pagenumbering{arabic}

\section{Introduction}
We consider the average-case complexity of counting $k$-cliques in $s$-uniform Erd\H{o}s-R\'{e}nyi hypergraphs $G(n, c, s)$, where every $s$-subset of the $n$ vertices is a hyperedge independently with probability $c$. Our main result is a reduction for counting $k$-cliques on worst-case hypergraphs given a blackbox algorithm solving the problem on $G(n, c, s)$ with low error probability.  Our approach is closely related to the recent work  \cite{goldreich2018counting}, which showed a worst-case to average-case reduction for counting cliques for a particular efficiently-samplable distribution on graphs.
Our reduction yields two different sets of average-case lower bounds for counting $k$-cliques in graphs sampled from the natural distribution $G(n, c, s)$ in the dense and sparse cases of $c = \Theta(1)$ and $c = \Theta(n^{-\alpha})$, with tradeoffs between runtime and $c$.
We also show that these average-case lower bounds often match algorithmic upper bounds.

The complexity of clique problems on Erd\H{o}s-R\'{e}nyi random graphs has become a central topic in average-case complexity, discrete probability and high-dimensional statistics. A body of work has analyzed algorithms for finding large cliques in Erd\H{o}s-R\'{e}nyi graphs\footnote{In both ordinary Erd\H{o}s-R\'{e}nyi graphs and the planted clique model.} \cite{kuvcera1995expected,alon1998finding, feige2000finding, mcsherry2001spectral, feige2010finding, ames2011nuclear, dekel2014finding, deshpande2015finding, chen2016statistical}, and hardness results have been shown for 
greedy algorithms \cite{karp1976, grimmett1975colouring, jerrum1992large, mcdiarmid1984colouring, pittel1982probable}, local algorithms \cite{gamarnik2014limits, coja2015independent, rahman2017local}, query models \cite{feige2018finding}, bounded-depth circuits \cite{rossman2008constant}, monotone circuits \cite{rossman2010monotone}, low-degree sum of squares (SOS) relaxations \cite{barak2016nearly}, statistical query algorithms \cite{feldman2013statistical}, and resolution \cite{atserias2018clique}. The hardness of clique problems on Erd\H{o}s-R\'enyi graphs has been used as an average-case assumption in cryptography \cite{juels2000hiding} and to show information-computation gaps in a variety of statistical problems \cite{berthet2013complexity, koiran2014hidden, chen2015incoherence, hajek2015computational, ma2015computational, brennan2018reducibility, brennan2019universality, brennan2019optimal}.

All of the above lower bounds for clique problems on Erd\H{o}s-R\'{e}nyi random graphs are against restricted classes of algorithms. One reason for this is that there are general obstacles to basing average-case complexity on worst-case complexity. For example, natural approaches to polynomial-time worst-case to average-case reductions for NP-complete problems fail unless coNP $\subseteq$ NP/poly \cite{feigenbaum1993random, bogdanov2006worst, bogdanov2006average}. The objective of this work is to show that this worst-case characterization of average-case complexity is possible in a fine-grained sense for the natural problem of counting $k$-cliques in $s$-uniform Erd\H{o}s-R\'{e}nyi hypergraphs $G(n, c, s)$ with edge density $c$. 

A motivating recent work by Goldreich and Rothblum \cite{goldreich2018counting} also considered worst-case to average-case reductions for $k$-clique counting. They provided such a reduction mapping to an efficiently sampleable distribution on graphs with a high min-entropy of $\tilde{\Omega}(n^2)$. In contrast to \cite{goldreich2018counting}, our objectives are to: (1) map precisely to the natural distribution $G(n, c, s)$ for different edge densities $c$, including $c = \Theta(1)$ and the sparse case $c = \Theta(n^{-\alpha})$; and (2) to characterize the tradeoff between the time-complexity of counting $k$-cliques in $G(n, c, s)$ and the sparsity parameter $\alpha$. Achieving this requires new ingredients for the self-reducibility of counting $k$-cliques as a low-degree polynomial and a tight analysis of random biased binary expansions over $\mathbb{F}_p$ with finite Fourier analysis. 

However, our techniques also come at the cost of requiring a low error probability ($1/\text{polylog}(n)$ in the dense case and $1/\poly(n)$ in the sparse case) for the average-case blackbox solving $k$-clique counting on $G(n, c, s)$. This is in contrast to \cite{goldreich2018counting}, where a very high error probability of $1 - 1/\text{polylog}(n)$  is tolerated. It remains an interesting open problem to extend our results for $G(n, c, s)$ to tolerate higher error blackboxes. This error tolerance and open problem are discussed further in Sections \ref{subsec:reductionthmstatements} and \ref{sec:openproblems}, and how our techniques relate to those in \cite{goldreich2018counting} is discussed in Sections \ref{subsec:wcacoverview} and \ref{sec:averagecasereductionproof}. As a step towards increasing the allowed blackbox error, we also give a variant of our reduction for computing the \emph{parity} of the $k$-clique count that only requires a constant bound on the error probability (for each fixed $k$) of the blackbox algorithm solving the problem on $G(n, c, s)$ when $c = 1/2$. We now give an overview of our contributions.

\subsection{Overview of Main Results}

We provide two complementary main results on the fine-grained average-case complexity of counting $k$-cliques in $G(n, c, s)$. The precise formulations of the problems we consider are in Section \ref{sec:worstcasehardnessconjectures}.

\paragraph{Worst-case to average-case reduction} We give a worst-case to average-case reduction from counting $k$-cliques in worst-case $s$-uniform hypergraphs to counting $k$-cliques in hypergraphs drawn from $G(n, c, s)$. The key guarantees of this reduction are summarized in the following simplified version of our main theorem.

\begin{theorem}[Simplified Main Result]
If $2 \leq s \le k$ are constant integers and $c = c(n)$ satisfies $0 < c \le 1 - \Omega(1)$, then there is a parameter $\Upsilon_{\#} = c^{-\binom{k}{s}} (\log n)^{O(1)}$ such that the following holds. If there is a randomized algorithm counting $k$-cliques in time $O(n^t)$ with error probability less than $1/\Upsilon_{\#}$ on hypergraphs drawn from $G(n, c, s)$, then there is a randomized algorithm counting $k$-cliques on worst-case $s$-uniform hypergraphs with error probability less than $1/3$ running in time $O\left(\Upsilon_{\#} \cdot n^{\max\{t, s\}} \right)$.
\end{theorem}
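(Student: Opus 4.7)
The plan is to treat the $k$-clique count
\[
f(x) \;=\; \sum_{S \in \binom{[n]}{k}} \prod_{e \in \binom{S}{s}} x_{e}
\]
as a multilinear polynomial of degree $D = \binom{k}{s}$ in $N = \binom{n}{s}$ binary variables, and to run a Lipton-style random self-reduction modulo a prime $p = \Theta(n^{k})$ chosen large enough that $f$ is recoverable from $f \bmod p$. Given a worst-case $x^{*} \in \{0,1\}^{N}$, the reduction produces $M = O(\Upsilon_{\#})$ query hypergraphs $X^{(1)}, \ldots, X^{(M)} \in \{0,1\}^{N}$, feeds each into the blackbox, and recovers $f(x^{*}) \bmod p$ from the returned values by a fixed linear combination (Lagrange interpolation at $T = 0$).

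\textbf{Random biased binary expansion.} The delicate step is to design the joint distribution of queries so that (a) each $X^{(i)}$ is marginally within total variation $o(1/\Upsilon_{\#})$ of $G(n,c,s)$, and (b) an explicit linear combination of $f(X^{(1)}), \ldots, f(X^{(M)})$ equals $f(x^{*}) \bmod p$. For each edge $e$ independently I would draw a block of $L = O(\log n)$ iid $\mathrm{Bern}(c)$ bits $Y_{e,1}, \ldots, Y_{e,L}$ and set $X^{(i)}_{e}$ to a $\{0,1\}$-valued function of $(x^{*}_{e}, Y_{e,\cdot}, i)$ chosen so that the tuple $(X^{(1)}_{e}, \ldots, X^{(M)}_{e})$ coincides with $M$ evaluations of a low-degree curve $\phi_{e} \in \mathbb{F}_{p}[T]$ with $\phi_{e}(0) = x^{*}_{e}$. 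Because $\phi_{e}$ has bounded degree and $f$ is multilinear of degree $D$, the composed univariate polynomial $F(T) = f(\phi_{1}(T), \ldots, \phi_{N}(T))$ has degree $O(LD)$, so $F(0) = f(x^{*})$ is recoverable by Lagrange interpolation from $O(LD)$ values $F(T_{i}) = f(X^{(i)})$. Closeness of each per-query marginal to $\mathrm{Bern}(c)$ is quantified by finite Fourier analysis on $\mathbb{F}_{p}^{L}$: each non-trivial additive character contributes a strictly sub-unit multiplicative factor per bit, and compounding the resulting decay across the $\binom{k}{s}$ edges incident to a potential clique is what forces the $c^{-\binom{k}{s}}$ dependence in $\Upsilon_{\#}$.

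\textbf{Error amplification, runtime, and main obstacle.} Granted the marginal bound in (a), a union bound shows that with probability at least $1 - M/\Upsilon_{\#} - o(1)$ the blackbox answers all $M$ queries correctly, and the Lagrange combination in (b) then recovers $f(x^{*}) \bmod p$ exactly. Rerunning the whole reduction $O(1)$ times with fresh randomness and taking the majority mod-$p$ answer drives the failure probability below $1/3$. Each iteration costs $M \cdot O(n^{t})$ time for blackbox calls plus $M \cdot O(n^{s})$ time to write down the $\binom{n}{s}$ entries of each query, which gives the claimed total runtime $O(\Upsilon_{\#} \cdot n^{\max\{t,s\}})$. I expect the main obstacle to be the sharp Fourier-analytic total variation estimate underlying (a): because the blackbox error budget is only $1/\Upsilon_{\#}$, the distance from the induced query distribution to $G(n,c,s)$ must be controlled very tightly, and obtaining precisely the $c^{-\binom{k}{s}}$ exponent---rather than a looser bound exponential in $k$---requires coupling the biased Fourier decomposition of the per-edge construction to the multilinear structure of $f$. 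This is precisely the new ingredient flagged in the introduction as distinguishing this reduction from \cite{goldreich2018counting}.
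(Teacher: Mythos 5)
There is a genuine gap, and it sits exactly at the step you flag as ``delicate.'' You ask for per-edge curves $\phi_e \in \mathbb{F}_p[T]$ of bounded degree with $\phi_e(0) = x^*_e$ whose evaluations at the $M$ query points are all in $\{0,1\}$, so that the queries are simultaneously honest hypergraphs and points on a low-degree curve through the worst-case input. Such curves do not exist: if $\phi_e$ has degree $d$ and takes values in $\{0,1\}$ at more than $2d$ points of $\mathbb{F}_p$, then $\phi_e(\phi_e - 1)$ vanishes identically, so $\phi_e$ is the constant $0$ or $1$, forcing every query to equal $x^*$ itself (no randomness, hence nowhere near $G(n,c,s)$); and if instead $M \le 2d$, then $F(T) = f(\phi(T))$ has degree $dD \ge MD/2 > M$, so Lagrange interpolation from $M$ values cannot recover $F(0)$. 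So the central mechanism of your reduction---making the polynomial self-reduction and the ``binary-ness'' of the queries happen in one shot---cannot work. The paper keeps these two steps separate: the Gemmell--Sudan self-reduction produces queries that are uniform in $\mathbb{F}_p^N$ (not binary), and a distinct step then writes each uniform $\mathbb{F}_p$ entry as a binary expansion $\sum_{l} 2^l Y^{(l)}$ of $t = \Theta(c^{-1}(1-c)^{-1}\log p \log(\cdot))$ biased Bernoulli bits (the Fourier-analytic lemma), after which the clique polynomial is expanded and regrouped into $t^{\binom{k}{s}}$ evaluations at genuine Bernoulli$(c)$ hypergraphs; that product $t^{\binom{k}{s}}$, not Fourier decay along a clique, is where $c^{-\binom{k}{s}}$ comes from.

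Even if you replaced your curve construction by this two-step route, your choice of the non-partite polynomial $f(x) = \sum_S \prod_{e \in \binom{S}{s}} x_e$ would still break the regrouping: after substituting the binary expansion, the digit index attached to a variable $x_e$ inside a monomial must depend only on the ``position'' of $e$ within the clique, which is not well-defined when the same edge plays different roles in different cliques $S$, so the terms cannot be collected into evaluations of $f$ at well-defined binary hypergraphs. This is precisely why the paper first reduces to $k$-partite hypergraphs and the polynomial $P_{n,k,s}$ (and then needs a separate inclusion--exclusion step to come back from $k$-partite Erd\H{o}s-R\'enyi to $G(n,c,s)$, which your proposal omits entirely). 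Two further, more minor issues: the paper works with $O(\log n / \log\log n)$ primes of size $\Theta(\log n)$ via the Chinese Remainder Theorem rather than a single $p = \Theta(n^k)$, because the binary-expansion sampler runs in time polynomial in $p$ (and the expansion length scales with $\log p$), so your large prime would inflate the query-generation cost well beyond $\tilde O(n^s)$ per query; and with $M = O(\Upsilon_{\#})$ queries each erring with probability $1/\Upsilon_{\#}$, your union bound gives only a constant success probability per interpolation, so the constants in the error budget need the careful accounting the paper does across its five steps.
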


We discuss the necessity of the error tolerance and the multiplicative slowdown in our worst-case to average-case reduction in Section \ref{subsec:reductionthmstatements}. This result has a number of consequences for basing the average-case fine-grained complexity of $k$-clique counting over Erd\H{o}s-R\'{e}nyi hypergraphs on its worst-case complexity, which we now overview.

Counting $k$-cliques in worst-case hypergraphs is known to take $n^{\Omega(k)}$ time for randomized algorithms assuming the randomized Exponential Time Hypothesis (rETH)\footnote{rETH asserts that any randomized algorithm takes at least $2^{c n}$ time to solve 3-SAT in the worst-case, for some constant $c > 0$.} if $k$ does not grow with $n$ \cite{chen2006strong,calabro2008complexity}. The best known worst-case algorithms up to subpolynomial factors are the $O\left(n^{\omega \lceil k/3 \rceil}\right)$ time algorithm of \cite{nevsetvril1985complexity} in the graph case of $s = 2$ and exhaustive $O(n^k)$ time search on worst-case hypergraphs with $s \ge 3$. Here, $\omega \leq 2.373$ denotes the matrix multiplication constant. Our reduction is the first worst-case to average-case reduction to Erd\H{o}s-R\'{e}nyi hypergraphs. It has different implications for the cases of dense and sparse hypergraphs because of the factor $\Upsilon_{\#}$, as described next.
\begin{enumerate}
\item \textit{Dense Erd\H{o}s-R\'{e}nyi graphs and hypergraphs.} When $k$ and $c$ are constant, our reduction constructs an efficient $k$-clique counting algorithm that succeeds on a worst-case input hypergraph with high probability, using $\polylog(n)$ queries to an average-case oracle that correctly counts $k$-cliques on a $1 - 1/\polylog(n)$ fraction of Erd\H{o}s-R\'enyi hypergraphs drawn from $G(n,c,s)$. This essentially shows that $k$-clique counting in the worst-case matches that on dense Erd\H{o}s-R\'{e}nyi hypergraphs. More precisely, $k$-clique counting on $G(n, c, s)$ with $k, c$ and $s$ constant must take $\tilde{\Omega}\left( n^{\omega \lfloor k/3 \rfloor} \right)$ time when $s = 2$ and $\tilde{\Omega}(n^k)$ time when $s \ge 3$, unless there are faster worst-case algorithms. Furthermore, our reduction shows that it is rETH-hard to count $k$-cliques in $n^{o(k)}$ time on $G(n, c, s)$ with $k, c$ and $s$ constant.
\item \textit{Sparse Erd\H{o}s-R\'{e}nyi graphs and hypergraphs.} Our reduction also applies with a different multiplicative slowdown and error tolerance to the sparse case of $c = \Theta(n^{-\alpha})$, where the fine-grained complexity of $k$-clique counting on $G(n, c, s)$ is very different than on worst-case inputs. Our reduction implies fine-grained lower bounds of $\tilde{\Omega}\left(n^{\omega \lceil k/3 \rceil - \alpha \binom{k}{2}} \right)$ when $s = 2$ and $\tilde{\Omega}\left(n^{k - \alpha \binom{k}{s}} \right)$ when $s \ge 3$ for inputs drawn from $G(n, c, s)$, unless there are faster worst-case algorithms. We remark that in the hypergraph case of $s \ge 3$, this lower bound matches the expectation of the quantity being counted, the number of $k$-cliques in $G(n, c, s)$, up to $\polylog(n)$ factors.\footnote{For the sub-class of algorithms that enumerate $k$-cliques one by one, the $k$-clique count is a trivial lower bound on the runtime. Our general lower bound matches this heuristic lower bound.}
\end{enumerate}
Precise statements of our results can be found in Section \ref{subsec:reductionthmstatements}. For simplicity, our results should be interpreted as applying to algorithms that succeed with probability $1 - (\log n)^{-\omega(1)}$ in the dense case and $1 - n^{-\omega(1)}$ in the sparse case.

We also give a second worst-case to average-case reduction for computing the parity of the number of $k$-cliques which has a weaker requirement of $1 - \Theta_{k, s}(1)$ on the error probability for the blackbox solving the problem on $G(n, c, s)$ in the dense case of $c = 1/2$. We provide an overview of our multi-step worst-case to average-case reduction in Section \ref{subsec:wcacoverview}. The steps are described in detail in Section \ref{sec:averagecasereductionproof}.

\paragraph{Algorithms for $k$-clique counting on $G(n, c, s)$} We also analyze several natural algorithms for counting $k$-cliques in sparse Erd\H{o}s-R\'{e}nyi hypergraphs. These include an extension of the natural greedy algorithm mentioned previously from $k$-\textsc{clique} to counting $k$-cliques, a modification to this algorithm using the matrix multiplication step of \cite{nevsetvril1985complexity} and an iterative algorithm achieving nearly identical guarantees. These algorithms count $k$-cliques in $G(n, c, s)$ when $c = \Theta(n^{-\alpha})$ with several different runtimes, the best of which are as follows:
\begin{itemize}
\item $\tilde{O}\left( n^{k + 1 - \alpha \binom{k}{s}} \right)$ if $s \ge 3$ and $k < \tau + 1$;
\item $\tilde{O}\left( n^{\tau + 2 - \alpha \binom{\tau + 1}{s}} \right)$ if $s \ge 3$ and $\tau + 1 \le k \le \kappa + 1$; and
\item $\tilde{O}\left( n^{\omega \lceil k/3 \rceil + \omega - \omega \alpha \binom{\lceil k/3 \rceil}{2}} \right)$ if $s = 2$ and $k \le \kappa + 1$.
\end{itemize}
Here, $\tau$ and $\kappa$ are the largest positive integers satisfying that $\alpha \binom{\tau}{s - 1} < 1$ and $\alpha \binom{\kappa}{s - 1} < s$. The thresholds $\kappa$ and $\tau$ have natural interpretations as roughly the clique number and most frequent clique size in the graph $G(n, c, s)$, respectively. Throughout, we restrict our attention to $k$ with $k \le \kappa + 1$ since the probability that the largest clique in $G$ has size $\omega(G) > \kappa + 1$ is $1/\text{poly}(n)$.

\latestedits{The threshold $\tau + 1$ also has a natural interpretation as the $k$-clique percolation threshold \cite{derenyi2005clique, palla2007critical, dorogovtsev2008critical}, defined below. Given a hypergraph $G$, define two $k$-cliques of $G$ to be adjacent if they share $(k - 1)$ of their $k$ vertices. This induces a hypergraph $G_k$ on the set of $k$-cliques. For graphs $G$ drawn from $G(n, c)$, \cite{derenyi2005clique} introduced the $k$-clique percolation threshold of $c = \frac{1}{k - 1} \cdot n^{-\frac{1}{k - 1}}$, above which a giant component emerges in $G_k$. This threshold and extensions were rigorously established in \cite{bollobas2009clique}. In the graph case of $s=2$, this threshold matches $\tau + 1$, which is the largest integer $k$ such that $\alpha < \frac{1}{k - 1}$. Following the same heuristic as in \cite{derenyi2005clique}, our threshold $\tau + 1$ is a natural extension of the $k$-clique percolation threshold to the hypergraph case of $s \ge 3$. In other words, $\tau + 1$ roughly corresponds to the largest value of $k$ at which a local search algorithm can explore all the cliques in the hypergraph starting from any given clique.}

\begin{figure*}[t!]
\centering
\begin{tikzpicture}[scale=0.065]
\tikzstyle{every node}=[font=\footnotesize]
\def\xmin{0}
\def\xmax{108}
\def\ymin{0}
\def\ymax{60}

\fill [green!20, domain=0:100, variable=\x]
(0, 0)
-- plot ({\x}, {2.373*\x/3 - 2.373*(2/99)*(\x/3)*(\x/3 - 1)/2})
-- (100, 58) -- (0, 58) -- (0, 0);

\fill [gray!20, domain=0:100, variable=\x]
(0, 0)
-- plot ({\x}, {2.373*\x/3 - 2.373*(2/99)*(\x/3)*(\x/3 - 1)/2})
-- (100, 0) -- (0, 0);

\fill [blue!20, domain=0:79.35, variable=\x]
(0, 0)
-- plot ({\x}, {2.373*\x/3 - (2/99)*(\x)*(\x - 1)/2})
-- (0, 0);

\node at (20, 60) [below] {Graphs ($s = 2$)};
\node at (12, 40) {feasible};
\node at (40, 5) {infeasible};
\node at (90, 40) {open};
\node at (70, 15) {$\frac{\omega k}{3} - \alpha \binom{k}{2}$};
\node [rotate=28] at (50, 38) {$\frac{\omega k}{3} - \frac{\omega \alpha}{9} \binom{k}{2}$};

\draw[->] (\xmin,\ymin) -- (\xmax,\ymin) node[right] {$k$};
\draw[->] (\xmin,\ymin) -- (\xmin,\ymax) node[above] {$\log_n T$};

\draw[domain=0:100,smooth,variable=\x,blue] plot ({\x}, {2.373*\x/3 - 2.373*(2/99)*(\x/3)*(\x/3 - 1)/2});
\draw[domain=0:79.35,smooth,variable=\x,blue] plot ({\x}, {2.373*\x/3 - (2/99)*(\x)*(\x - 1)/2});

\node at (98, 0) [below] {$\omega(G)$};
\end{tikzpicture}
\begin{tikzpicture}[scale=0.065]
\tikzstyle{every node}=[font=\footnotesize]
\def\xmin{0}
\def\xmax{108}
\def\ymin{0}
\def\ymax{60}

\fill [green!20, domain=0:61.981, variable=\x]
(0, 58) -- (0, 0)
-- plot ({\x}, {\x - (\x/97)*((\x - 1)/96)*((\x - 2)/95)*(\x - 3)})
-- (98, 46.866)
-- (98, 58)
-- (0, 58);

\fill [blue!20, domain=0:98, variable=\x]
(0, 0)
-- plot ({\x}, {\x - (\x/97)*((\x - 1)/96)*((\x - 2)/95)*(\x - 3)})
-- (0, 0);

\fill [gray!20, domain=61.981:98, variable=\x]
plot ({\x}, {\x - (\x/97)*((\x - 1)/96)*((\x - 2)/95)*(\x - 3)})
-- (98, 46.866)
-- (61.981, 46.866);

\draw[->] (\xmin,\ymin) -- (\xmax,\ymin) node[right] {$k$};
\draw[->] (\xmin,\ymin) -- (\xmin,\ymax) node[above] {$\log_n T$};
\draw[dashed] (61.981, 0) -- (61.981, 46.866);
\node at (61.981, 0) [below] {$k$-clique percolation};
\node at (96, 0) [below] {$\omega(G)$};
\node at (26, 60) [below] {Hypergraphs ($s \ge 3$)};
\node at (12, 40) {feasible};
\node at (35, 5) {infeasible};
\node at (90, 40) {open};
\node [rotate=45] at (20, 25) {$k - \alpha \binom{k}{s}$};
\node at (80, 51) {$\tau + 1 - \alpha \binom{\tau + 1}{s}$};
\draw[blue] (61.981, 46.866) -- (98, 46.866);

\draw[domain=0:98,smooth,variable=\x,blue] plot ({\x}, {\x - (\x/97)*((\x - 1)/96)*((\x - 2)/95)*(\x - 3)});
\end{tikzpicture}

\caption{Comparison of our algorithms and average-case lower bounds for counting $k$-cliques in sparse Erd\H{o}s-R\'{e}nyi Hypergraphs $G(n, c, s)$ with $c = \Theta(n^{-\alpha})$. Green denotes runtimes $T$ feasible for each $k$, blue denotes $T$ infeasible given that the best known worst-case algorithms are optimal and gray denotes $T$ for which the complexity of counting $k$-cliques is open after this work. The left plot shows the graph case of $s = 2$ and the right plot shows the hypergraph case of $s \ge 3$. For simplicity, all quantities shown are up to constant $O_{k, \alpha}(1)$ additive error.}
\label{fig:upperlowerbds}
\end{figure*}

\paragraph{Comparing our upper and lower bounds} A comparison of our algorithmic guarantees and average-case lower bounds based on the best known worst-case algorithms for counting $k$-cliques is shown in Figure \ref{fig:upperlowerbds}.
\begin{enumerate}
\item \textit{Graph Case $(s = 2)$.} In the graph case, our lower and upper bounds have the same form and show that the exponent in the optimal running time is $\frac{\omega k}{3} - C \alpha \binom{k}{2} + O_{k, \alpha}(1)$ where $\frac{\omega}{9} \le C \le 1$ as long as $k \le \kappa + 1 = 2\alpha^{-1} + 1$. As shown in Figure \ref{fig:upperlowerbds}, our upper and lower bounds approach each other for $k$ small relative to $\kappa + 1$.
\item \textit{Hypergraph Case $(s \ge 3)$.} In the hypergraph case of $s \ge 3$, the exponents in our lower and upper bounds are nearly identical at $k - \alpha \binom{k}{s} + O_{k, \alpha}(1)$ up to the $k$-clique percolation threshold.  After this threshold, our lower bounds slowly deteriorate relative to our algorithms until they become trivial at the clique number of $G$ by $k = \kappa + 1$.
\end{enumerate}
Because we consider sparse Erd\H{o}s-R\'{e}nyi hypergraphs, for each $n, k$, and $s$ we actually have an entire family of problems parametrized by the edge probability $c$ and the behavior changes as a function of $c$; this is the first worst-to-average-case hardness result we are aware of for which the complexity of the same problem over worst-case versus average-case inputs is completely different and can be sharply characterized over the whole range of $c$ starting from the same assumption. 
It is surprising that our worst-case to average-case reduction techniques -- which range from the self-reducibility of polynomials to random binary expansions -- together yield tight lower bounds matching our algorithms in the hypergraph case.

Two interesting problems left open by our work are to show average-case lower bounds with an improved constant $C$ in the graph case and to show tight average-case lower bounds beyond the $k$-clique percolation threshold in the case $s \ge 3$. These, other open problems and some extensions of our methods are discussed in Section \ref{sec:openproblems}.

\subsection{Overview of Reduction Techniques}\label{subsec:wcacoverview}

For clarity of exposition, in this section we will restrict our discussion to the graph case $s = 2$, as well as the case of constant $k$. A key step of our worst-case to average-case reduction uses the random self-reducibility of multivariate low-degree polynomials -- i.e., evaluating a polynomial on any worst-case input can be efficiently reduced to evaluating it on several random inputs. This result follows from a line of work \cite{lipton1989new,feigenbaum1993random,gemmell1991self,gemmell1992highly} that provides a method to efficiently compute a polynomial $P : \FF^N \to \FF$ of degree $d \leq |\FF|/20$ on any worst-case input $x \in \FF^N$, given an oracle $\tilde{P} : \FF^N \to \FF$ that agrees with $P$ on a $\frac{1}{2} + \frac{1}{\poly(N)}$ fraction of inputs. Thus, for any low-degree polynomial over a large enough finite field, evaluating the polynomial on a random element in the finite field is roughly as hard as evaluating the polynomial on any adversarially chosen input.

\paragraph{Random self-reducibility for counting $k$-cliques} With the random self-reducibility of polynomials in mind, a natural approach is to express the number of $k$-cliques in a graph as a low-degree polynomial of the $n \times n$ adjacency matrix $A$
$$P(A) = \sum_{\substack{S \subset [n] \\ |S| = k}} \Big(\prod_{i < j \in S} A_{ij}\Big).$$
This polynomial has been used in a number of papers, including by Goldreich and Rothblum \cite{goldreich2018counting} to construct a distribution on dense graphs for which counting $k$-cliques is provably hard on average. However, their techniques are primarily focused on the error probability requirement for the average-case blackbox. As a result, the distribution they obtain is far from Erd\H{o}s-R\'enyi and their approach does not yield tight bounds for sparse graphs. 

The significant obstacle that arises in applying the random self-reducibility of $P$ is that one needs to work over a large enough finite field $\FF_p$, so evaluating $P$ on worst-case graph inputs in $\{0,1\}^{\binom{n}{2}}$ 
only reduces to evaluating $P$ on uniformly random inputs in $\FF_p^{\binom{n}{2}}$.
In order to further reduce to evaluating $P$ on graphs, given a random input $A \in \FF_p^{\binom{n}{2}}$  \cite{goldreich2018counting} uses several gadgets (including replacing vertices by independent sets and taking disjoint unions of graphs) in order to create a larger unweighted random graph $A'$ whose $k$-clique count is equal to $k! \cdot P(A)\pmod{p}$ for appropriate $p$. However, any nontrivial gadget-based reduction seems to have little hope of arriving at something close to the Erd\H{o}s-R\'enyi distribution, because gadgets inherently create non-uniform structure.

\paragraph{Reducing to $k$-partite graphs} We instead consider a different polynomial for graphs on $nk$ vertices with $nk \times nk$ adjacency matrix $A$,
$$P'(A) = \sum_{v_1 \in [n]} \sum_{v_2 \in [2n] \sm [n]} \dots \sum_{v_k \in [kn] \sm [(k-1)n]} \left(\prod_{1 \leq i < j \leq k} A_{v_i v_j}\right).$$ The polynomial $P'$ correctly counts the number of $k$-cliques if $A$ is $k$-partite with vertex $k$-partition $[n] \sqcup ([2n] \sm [n]) \sqcup \dots \sqcup ([kn] \sm [(k-1)n])$. We first reduce clique-counting in the worst case to computing $P'$ in the worst case; this is a simple step, because it is a purely worst-case reduction. 
Next, we construct a recursive counting procedure that reduces evaluating $P'$ on Erd\H{o}s-R\'enyi graphs to counting $k$-cliques in Erd\H{o}s-R\'enyi graphs. Therefore, it suffices to prove that if evaluating $P'$ is hard in the worst case, then evaluating $P'$ on Erd\H{o}s-R\'enyi graphs is also hard.

Applying the Chinese remainder theorem as well as the random self-reducibility of polynomials, computing $P'$ on worst-case inputs in $\{0,1\}^{\binom{nk}{2}}$ reduces to computing $P'$ on several uniformly random inputs in $\FF_p^{\binom{nk}{2}}$, for several different primes $p$ each on the order of $\Theta(\log n)$.  The main question is: how can one evaluate $P'$ on inputs $X \sim \mathrm{Unif}[\FF_p^{\binom{nk}{2}}]$ using an algorithm that evaluates $P'$ on $G(n,c,2)$ Erd\H{o}s-R\'enyi graphs (i.e., inputs $Z \sim \Ber(c)^{\otimes \binom{nk}{2}}$)? 

\paragraph{Eliminating weights with random sparse binary expansions} We solve this by decomposing the random weighted graph $X \sim \mathrm{Unif}[\FF_p^{\binom{nk}{2}}]$ into a weighted sum of graphs $Z^{(0)},\ldots,Z^{(t)} \in \{0,1\}^{\binom{nk}{2}}$ such that each $Z^{(i)}$ is close to Erd\H{o}s-R\'enyi $G(n,c,2)$. Specifically, this additive decomposition satisfies $X \equiv \sum_{i=0}^t 2^i Z^{(i)} \pmod{p}$, i.e., that we can write $X$ as a binary expansion modulo $p$ of Erd\H{o}s-R\'enyi graphs. Importantly, in Section \ref{sec:randombinaryexpansions} we derive near-optimal bounds on $t$ and prove that we can take $t$ to be quite small, growing only as $\poly(c^{-1}(1-c)^{-1} \log(p))$. This technique seems likely to have applications elsewhere. For the unbiased case of $c = 1/2$, a version of this binary expansions technique appeared previously in \cite{goldreich2017worst}.

Now, using the binary expansion decomposition of $X$, we algebraically manipulate $P'$ as follows:
\begin{align*}P'(X) &= \sum_{v_1 \in [n]} \sum_{v_2 \in [2n] \sm [n]} \dots \sum_{v_k \in [kn] \sm [(k-1)n]} \prod_{1 \leq i < j \leq k} \left(\sum_{l \in \{0,\ldots,t\}} 2^l \cdot Z^{(l)}_{v_i v_j}\right) \\ &= \sum_{f \in \{0,\ldots,t\}^{\binom{k}{2}}} \left(\prod_{1 \leq i \leq j \leq k} 2^{f_{ij}}\right) \times \left(\sum_{v_1 \in [n]} \sum_{v_2 \in [2n] \sm [n]} \dots \sum_{v_k \in [kn] \sm [(k-1)n]} \prod_{1 \leq i < j \leq k} Z^{(f_{ij})}_{v_iv_j} \right)
\\&= \sum_{f \in \{0,\ldots,t\}^{\binom{k}{2}}} \left(\prod_{1 \leq i \leq j \leq k} 2^{f_{ij}}\right) P'\left(Z^{(f)}\right).\end{align*}
Here $Z^{(f)}$ is the $nk$-vertex graph with entries given by $Z^{(f_{\bar{a}\bar{b}})}_{ab}$ for $1\leq a< b\leq nk$, where $\bar{a} = \ceil{a/n}$ and $\bar b = \ceil{b/n}$. 
We thus reduce the computation of $P'(X)$ to the computation of a weighted sum of $\poly(c^{-1}(1-c)^{-1} \log(n))^{\binom{k}{2}}$ different evaluations of $P'$ at graphs close in total variation to $G(n,c,2)$. This concludes our reduction.\footnote{If we had instead worked with $P$, then this argument would fail. The argument uses the $k$-partiteness structure of $P'$ as follows: for every pair of vertices $a,b \in [nk]$ and $f \in \{0,\ldots,t\}^{\binom{k}{2}}$, the term $Z_{ab}^{(f_{ij})}$ appearing in the sum is uniquely determined by $a \in [ik] \sm [(i-1)k]$ and $b \in [jk] \sm [(j-1)k]$. So given $f$ we can define a graph $Z^{(f)}$ uniquely. On the other hand, running the same argument with the polynomial $P$, the term $Z_{ab}^{(f_{ij})}$ for many different $i,j$ would appear in the sum, and there is no way to uniquely define a graph $Z^{(f)}$.}

We remark that an important difference between our reduction and the reduction in \cite{goldreich2018counting} is the number of and structure of the calls to the average-case blackbox. Our reduction requires many successful calls to the blackbox in order to obtain a single correct evaluation of the polynomial $P'(A)$, which is where our low error probability requirement comes from. The gadgets in \cite{goldreich2018counting} are specifically designed to only require a single successful call to obtain a single correct evaluation of $P(A)$. Thus even given a blackbox with a constant error probability, the Berkelamp-Welch algorithm can recover $P(A)$ in the case of \cite{goldreich2018counting}.

We also give a different worst-case to average-case reduction for determining the parity of the number of $k$-cliques in Erd\H{o}s-R\'{e}nyi hypergraphs, as discussed in Sections \ref{subsec:reductionthmstatements} and \ref{sec:averagecasereductionproof}.

\subsection{Related Work on Worst-Case to Average-Case Reductions}

The random self-reducibility of low-degree polynomials serves as the basis for several worst-case to average-case reductions found in the literature. One of the first applications of this method was to prove that the permanent is hard to evaluate on random inputs, even with polynomially-small probability of success, unless $\mathsf{P^{\#P}} = \mathsf{BPP}$ \cite{sudan1997decoding,cai1999hardness}. (Under the slightly stronger assumption that $\mathsf{P^{\#P}} \neq \mathsf{AM}$, and with different techniques, \cite{feige1992hardness} proved that computing the permanent on large finite fields is hard even with exponentially small success probability.) Recently, \cite{ball2017average} used the polynomial random self-reducibility result in the fine-grained setting in order to construct polynomials that are hard to evaluate on most inputs, assuming fine-grained hardness conjectures for problems such as \textsc{3-SUM}, \textsc{Orthogonal-Vectors}, and/or \textsc{All-Pairs-Shortest-Paths}. The random self-reducibility of polynomials was also used by Gamarnik and K{\i}z{\i}lda{\u{g}} \cite{gamarnik2018computing} in order to prove that exactly computing the partition function of the Sherrington-Kirkpatrick model in statistical physics is hard on average.

If a problem is random self-reducible, then random instances of the problem are essentially as hard as worst-case instances, and therefore one may generate a hard instance of the problem by simply generating a random instance. Because of this, random self-reducibility plays an important role in cryptography: it allows one to base cryptographic security on random instances of a problem, which can generally be generated efficiently. A prominent example of a random-self reducible problem with applications to cryptography is the problem of finding a short vector in a lattice. In a seminal paper, Ajtai \cite{ajtai1996generating} gave a worst-case to average-case reduction for this short-vector problem. His ideas were subsequently applied to prove the average-case hardness of the Learning with Errors (LWE) problem, which underlies lattice cryptography \cite{ajtai1996generating,regev2009lattices}. A good survey covering worst-case to average-case reductions in lattice cryptography is \cite{regev2010learning}.

There are known restrictions on problems that are self-reducible. For example, non-adaptive worst-case to average-case reductions for $\mathsf{NP}$-complete problems fail unless $\mathsf{coNP} \subseteq \mathsf{NP/poly}$ \cite{feigenbaum1993random, bogdanov2006worst, bogdanov2006average}.

\latestedits{\paragraph{Subsequent work} Several new results have been proved subsequent to the first appearance of our work. Goldreich \cite{goldreich2020counting} provided a simpler reduction for counting the parity of the number of cliques in the uniform $G(n,1/2)$ Erd\H{o}s-R\'enyi graph case. Goldreich obtained error tolerance $\exp(-k^2)$ in this case, which is an improvement over the error tolerance $\exp(-\tilde{O}(k^2))$ in our Theorem~\ref{thm:averagecasehardnessparity}. Hirahara and Shimizu \cite{hirahara2021nearly} studied the average-case complexity of counting bicliques in uniformly random bipartite graphs, obtaining near-optimal runtime bounds assuming the Strong Exponential Time Hypothesis. And Dalirrooyfard, Lincoln, and Vassilevska Williams \cite{dalirrooyfard2020new} extended our techniques to obtain average-case hardness for counting the number of copies of any graph $H$ as an induced subgraph of an Erd\H{o}s-R\'enyi graph $G(n,1/2)$; they also used these techniques to show that simple variations of the orthogonal vectors, $3$-sum and zero-weight $k$-clique problems are hard to count on average for uniform inputs.}

\subsection{Notation and Preliminaries}

A $s$-uniform hypergraph $G = (V(G), E(G))$ consists of a vertex set $V(G)$ and a hyperedge set $E(G) \subseteq \binom{V(G)}{s}$. A $k$-clique $C$ in $G$ is a subset of vertices $C \subset V(G)$ of size $|C| = k$ such that all of the possible hyperedges between the vertices are present in the hypergraph: $\binom{C}{s} \subseteq E(G)$. We write $\cl_k(G)$ to denote the set of $k$-cliques of the hypergraph $G$. One samples from the Erd\H{o}s-R\'enyi distribution $G(n,c,s)$ by independently including each of the $\binom{n}{s}$ hyperedges with probability $c$.

We denote the law of a random variable $X$ by $\cL(X)$. We use $T(A,n)$ to denote the worst-case run-time of an algorithm $A$ on inputs of size parametrized by $n$\latestedits{; for simplicity we assume throughout that $T(A,n)$ is non-decreasing in $n$}.
All algorithms in this paper are randomized, and each (possibly biased) coin flip incurs constant computational cost.

\section{Problem Formulations and Average-Case Lower Bounds}
\label{sec:overview}

\subsection{Clique Problems and Worst-Case Fine-Grained Conjectures} \label{sec:worstcaseproblemslist}
\label{sec:worstcasehardnessconjectures}

In this section, we formally define the problems we consider and the worst-case fine-grained complexity conjectures off of which our average-case lower bounds are based. We focus on the following computational problems.

\begin{definition}
\textsc{\#$(k,s)$-clique} denotes the problem of counting the number of $k$-cliques in an $s$-uniform hypergraph $G$.
\end{definition}

\begin{definition} 
\textsc{Parity-$(k,s)$-clique} denotes the problem of counting the number of $k$-cliques up to parity in an $s$-uniform hypergraph $G$.
\end{definition}

\begin{definition}
\textsc{Decide-$(k,s)$-clique} denotes the problem of deciding whether or not an $s$-uniform hypergraph $G$ contains a $k$-clique.
\end{definition}

Both \textsc{\#$(k,s)$-clique} and \textsc{Decide-$(k,s)$-clique} are fundamental problems that have long been studied in computational complexity theory and are conjectured to be computationally hard in the worst-case setting. When $k$ is allowed to be an unbounded input to the problem, \textsc{Decide-$(k,s)$-clique} is known to be NP-complete \cite{karp1972reducibility} and \textsc{\#$(k,s)$-clique} is known to be \#P-complete \cite{valiant1979complexity}. In this work, we consider the fine-grained complexity of these problems, where $k$ either can be viewed as a constant or a very slow-growing parameter compared to the number $n$ of vertices of the hypergraph. In this context, \textsc{Parity-$(k,s)$-clique} can be interpreted as an intermediate problem between the other two clique problems that we consider. The worst-case reduction from \textsc{Parity-$(k,s)$-clique} to \textsc{\#$(k,s)$-clique} is immediate. As we show in Appendix \ref{sec:decidetoparityreduction}, in the worst-case setting, \textsc{Decide-$(k,s)$-clique} also reduces to \textsc{Parity-$(k,s)$-clique} with a multiplicative overhead of $O(k2^k)$ time.

When $k$ is a constant, the trivial brute-force search algorithms for these problems are efficient in the sense that they take polynomial time. However, these algorithms do not remain efficient under the lens of fine-grained complexity since brute-force search requires $\Theta(n^k)$ time, which can grow significantly as $k$ grows. In the hypergraph case of $s \ge 3$, no algorithm taking time $O(n^{k-\eps})$ on any of these problems is known, including for \textsc{Decide-$(k,s)$-clique} \cite{yuster2006finding}. In the graph case of $s = 2$, the fastest known algorithms for all of these problems take $\Theta(n^{\omega \ceil{k /3}})$ time, where $2 \leq \omega < 2.4$ is the fast matrix multiplication constant \cite{itai1978finding,nevsetvril1985complexity}. Since this is the state of the art, one may conjecture that \textsc{Decide-$(k,s)$-clique} and \textsc{\#$(k,s)$-clique} take $n^{\Omega(k)}$ time in the worst case. 

Supporting this conjecture, Razborov \cite{razborov1985lower} proves that monotone circuits require $\tilde{\Omega}(n^k)$ operations to solve \textsc{Decide-$(k,2)$-clique} in the case of constant $k$. Monotone circuit lower bounds are also known in the case when $k = k(n)$ grows with $n$ \cite{alon1987monotone,amano2005superpolynomial}. In \cite{downey1995fixed}, \textsc{Decide-$(k,2)$-clique} is shown to be $\mathsf{W[1]}$-hard. In other words, this shows that if \textsc{Decide-$(k,2)$-clique} is fixed-parameter tractable -- admits an algorithm taking time $f(k) \cdot \poly(n)$ -- then any algorithm in the parametrized complexity class $\mathsf{W[1]}$ is also fixed-parameter-tractable. This provides further evidence that \textsc{Decide-$(k,2)$-clique} is intractable for large $k$. Finally, \cite{chen2006strong} shows that solving \textsc{Decide-$(k,2)$-clique} in $n^{o(k)}$ time is ETH-hard for constant $k$\footnote{These hardness results also apply to \textsc{Decide-$(k,s)$-clique} for $s \ge 3$ since there is a reduction from \textsc{Decide-$(k,2)$-clique} to \textsc{Decide-$(k,s)$-clique} in $n^s$ time. The reduction proceeds by starting with a graph $G$ and constructing an $s$-uniform hypergraph $G'$ that contains a $s$-hyperedge for every $s$-clique in $G$. The $k$-cliques of $G$ and $G'$ are in bijection. This construction also reduces \textsc{\#$(k,2)$-clique} to \textsc{\#$(k,s)$-clique}.}. We therefore conjecture that the $k$-clique problems take $n^{\Omega(k)}$ time on worst-case inputs when $k$ is constant, as formalized below.

\begin{conjecture}[Worst-case hardness of \textsc{\#$(k,s)$-clique}] \label{conj:weakworstcasecounting} Let $k$ be constant. Any randomized algorithm $A$ for \textsc{\#$(k,s)$-clique} with error probability less than $1/3$ takes time at least $n^{\Omega(k)}$ in the worst case for hypergraphs on $n$ vertices.
\end{conjecture}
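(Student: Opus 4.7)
Since the final statement is a \emph{conjecture}, no unconditional proof is available: a genuine unconditional proof would entail a super-polynomial lower bound for a problem in $\P$, which is far beyond current techniques. The sensible plan is to give a conditional proof based on the randomized Exponential Time Hypothesis (rETH), which is exactly what the preceding text hints at. All the ingredients are already cited, so the ``proof'' consists of assembling them carefully.

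First, I would invoke the theorem of Chen--Huang--Kanj--Xia (cited in the paragraph just before the conjecture), which states that under rETH, \textsc{Decide-$(k,2)$-clique} requires $n^{\Omega(k)}$ time even for randomized algorithms, whenever $k$ is constant. Second, I would upgrade this from graphs to $s$-uniform hypergraphs using the $O(n^s)$-time reduction sketched in the footnote just above the conjecture: given a graph $G$ on $n$ vertices, form an $s$-uniform hypergraph $G'$ on the same vertex set whose $s$-hyperedges are exactly the $s$-cliques of $G$, so that $k$-cliques of $G$ and $G'$ are in natural bijection. This transfers the $n^{\Omega(k)}$ lower bound to \textsc{Decide-$(k,s)$-clique} for every $s \ge 2$ (absorbing the additive $n^s$ into $n^{\Omega(k)}$ using $k \ge s$).

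Third, I would chain two reductions on top of the decision lower bound. The paper's Appendix \ref{sec:decidetoparityreduction} reduces \textsc{Decide-$(k,s)$-clique} to \textsc{Parity-$(k,s)$-clique} with multiplicative overhead $O(k 2^k)$, and \textsc{Parity-$(k,s)$-clique} reduces trivially to \textsc{\#$(k,s)$-clique} by reading off the parity of the count. Composing, a hypothetical randomized algorithm $A$ for \textsc{\#$(k,s)$-clique} running in time $T(n)$ with error $<1/3$ would yield a randomized algorithm for \textsc{Decide-$(k,s)$-clique} running in time $O(k 2^k) \cdot T(n) \cdot \polylog(n)$, where the $\polylog(n)$ factor comes from a standard majority-of-repeats amplification to push the per-call error below $1/(3 k 2^k)$ so that the union bound over all $O(k 2^k)$ calls preserves overall error $< 1/3$. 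For constant $k$, the overhead $O(k 2^k) \cdot \polylog(n)$ is subpolynomial, so $T(n) = n^{o(k)}$ would contradict the rETH-based lower bound, giving $T(n) = n^{\Omega(k)}$ as desired.

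The main obstacle is of course unconditionality: any actual proof must import a hardness assumption, and rETH is the weakest natural one that matches the randomized, worst-case flavor of the conjecture's statement. A secondary, purely technical obstacle is bookkeeping the error amplification and the $O(k 2^k)$ blowup through the Decide-to-Parity step; this is routine given that $k$ is constant, but it is the one spot where a careless argument could lose a factor that eats into the $n^{\Omega(k)}$ bound. Establishing the conjecture under strictly weaker assumptions, or unconditionally, remains the interesting open direction.
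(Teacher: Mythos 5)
Your proposal is correct and matches the paper's own treatment: the statement is left as a conjecture, and the paper justifies it exactly as you do, by noting that rETH implies hardness of \textsc{Decide-$(k,s)$-clique} (via \cite{chen2006strong} and the footnote's $n^s$-time graph-to-hypergraph reduction), which in turn implies the parity and counting conjectures through the Appendix~\ref{sec:decidetoparityreduction} reduction and the trivial parity-to-counting reduction. (One could even skip the parity step, since a nonzero count directly answers the decision problem, but your chain is the same one the paper uses to order the conjectures by strength.)
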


\begin{conjecture}[Worst-case hardness of \textsc{Parity-$(k,s)$-clique}] \label{conj:weakworstcaseparity} Let $k$ be constant. Any randomized algorithm $A$ for \textsc{Parity-$(k,s)$-clique} with error probability less than $1/3$ takes time at least $n^{\Omega(k)}$ in the worst case for hypergraphs on $n$ vertices.
\end{conjecture}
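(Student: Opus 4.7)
Since Conjecture 2.3 is a worst-case fine-grained hardness assertion quantified over all randomized algorithms, an unconditional proof would amount to a genuine superpolynomial lower bound for a problem in $\NP$, which is beyond the reach of current techniques. My proposal therefore proceeds by identifying a structural reduction that collapses the conjecture to a strictly easier-to-justify statement (the same bound for \textsc{Decide-$(k,s)$-clique}) and then by organizing the body of evidence that supports the latter. No step beyond the reduction is claimed to be rigorous; the remainder is an account of how plausible the residual assumption is.

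The first step is to apply the worst-case reduction from \textsc{Decide-$(k,s)$-clique} to \textsc{Parity-$(k,s)$-clique} presented in Appendix \ref{sec:decidetoparityreduction}, whose multiplicative overhead is $O(k 2^k)$. For the regime of the conjecture, where $k$ is constant, this factor is $O(1)$, so any worst-case lower bound of the form $n^{\Omega(k)}$ for \textsc{Decide-$(k,s)$-clique} transfers verbatim to \textsc{Parity-$(k,s)$-clique}. Consequently, Conjecture 2.3 is at most as strong as the analogous worst-case hardness of the decision version, and the remaining question is exactly the latter.

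The second step is to catalog the support for the decision-version bound. The strongest unconditional evidence is Razborov's $\tilde{\Omega}(n^k)$ monotone circuit lower bound for $k$-clique, which rules out an expressive natural class of algorithms. In the parameterized framework, Downey and Fellows's $\mathsf{W}[1]$-hardness of $k$-clique implies that an $n^{o(k)}$-time algorithm would collapse $\mathsf{FPT}$ into $\mathsf{W}[1]$, a collapse widely believed to fail. In the fine-grained framework, Chen et al.'s sparsification-based reduction yields an $n^{\Omega(k)}$ lower bound conditional on the randomized Exponential Time Hypothesis. The reduction from \textsc{Decide-$(k,s)$-clique} to \textsc{Decide-$(k,s')$-clique} for $s' \ge s$ (described in the footnote after Chen et al.) shows that the evidence for $s=2$ directly transfers to all larger $s$.

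The main obstacle is inherent rather than technical: unconditional superpolynomial lower bounds against general randomized algorithms are not known for any problem in $\NP$, and Conjecture 2.3 inherits this barrier. The most honest characterization is that the only component of the argument that is rigorously established is the decision-to-parity reduction, which reduces the conjecture to a single, well-studied hardness assumption about $k$-clique detection; the remaining gap — between that residual assumption and an absolute theorem — coincides with one of the central open problems of complexity theory and cannot be closed here.
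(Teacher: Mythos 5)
Your proposal matches the paper's own treatment: the statement is a conjecture, and the paper justifies it exactly as you do, by noting that the Decide-to-Parity reduction of Appendix \ref{sec:decidetoparityreduction} (with $O(k2^k)$ overhead, constant for constant $k$) makes it a consequence of worst-case hardness of \textsc{Decide-$(k,s)$-clique}, which in turn is supported by rETH via \cite{chen2006strong}, $\mathsf{W[1]}$-hardness, and Razborov's monotone lower bound. No gap; this is essentially the same argument and the same body of evidence the paper invokes.
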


\begin{conjecture}[Worst-case hardness of \textsc{Decide-$(k,s)$-clique}] \label{conj:weakworstcasedeciding} Let $k$ be constant. Any randomized algorithm $A$ for \textsc{Decide-$(k,s)$-clique} with error probability less than $1/3$ takes time at least $n^{\Omega(k)}$ in the worst case for hypergraphs on $n$ vertices.
\end{conjecture}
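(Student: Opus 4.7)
The plan is not to prove this conjecture unconditionally, since an $n^{\Omega(k)}$ worst-case lower bound for a problem in $\NP$ would entail $\P \neq \NP$ and is well beyond current techniques. Instead, my plan is to derive the conjecture from the randomized Exponential Time Hypothesis (rETH), the natural fine-grained worst-case hypothesis in this setting.

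First, I would invoke the result of Chen et al.\ cited in the excerpt, which establishes that under ETH, deciding whether a graph contains a $k$-clique requires time $n^{\Omega(k)}$ for constant $k$. The standard many-one reductions from 3-SAT to \textsc{Decide-$(k,2)$-clique} that underlie this result are deterministic, so the randomized analog under rETH yields an $n^{\Omega(k)}$ lower bound against any randomized algorithm with error probability less than $1/3$, handling the graph case $s = 2$.

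Second, to extend from $s = 2$ to general $s \geq 3$, I would apply the reduction described in the footnote to the conjecture: given a graph $G$ on $n$ vertices, construct an $s$-uniform hypergraph $G'$ whose hyperedges are exactly the $s$-cliques of $G$. The $k$-cliques of $G$ and $G'$ are then in bijection, and the construction takes $O(n^s)$ time. For constant $s$, any randomized $n^{o(k)}$-time algorithm for \textsc{Decide-$(k,s)$-clique} would compose with this reduction to yield a randomized $n^{o(k)}$-time algorithm for \textsc{Decide-$(k,2)$-clique}, contradicting the $s = 2$ lower bound. Since this composition only costs an additive $O(n^s)$ overhead, for constant $s$ the $n^{\Omega(k)}$ lower bound transfers directly.

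The main obstacle is the lack of unconditional techniques for worst-case fine-grained lower bounds of this form; any such proof would have to circumvent the natural proofs barrier and related obstructions. The supporting evidence beyond rETH is exactly the collection of results already surveyed in the excerpt: Razborov's $\tilde{\Omega}(n^k)$ monotone circuit lower bound for $k$-clique detection, the $\mathsf{W[1]}$-hardness of parametrized clique, and the fact that no algorithm is known that beats $n^{\omega \lceil k/3 \rceil}$ for graphs or $n^k$ for hypergraphs despite decades of effort on these problems.
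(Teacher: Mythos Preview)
Your proposal is correct and matches the paper's own treatment: this statement is a \emph{conjecture}, not a theorem, and the paper likewise does not prove it unconditionally but instead notes that it is implied by rETH via the Chen et al.\ result for $s=2$ together with the $s$-clique-hypergraph reduction to handle $s\ge 3$ (exactly the footnote you cite). The supporting evidence you list---Razborov's monotone lower bound, $\mathsf{W[1]}$-hardness, and the absence of faster algorithms---is precisely what the paper surveys as justification for adopting the conjecture.
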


The conjectures are listed in order of increasing strength. Since Conjecture \ref{conj:weakworstcasedeciding} is implied by rETH, they all follow from rETH. We also formulate a stronger version of the clique-counting hardness conjecture, which asserts that the current best known algorithms for $k$-clique counting are optimal.

\begin{conjecture}[Strong worst-case hardness of \textsc{\#$(k,s)$-clique}]\label{conj:strongworstcasecounting}
Let $k$ be constant. Any randomized algorithm $A$ for \textsc{\#$(k,s)$-clique} with error probability less than $1/3$ takes time $\tilde{\Omega}(n^{\omega \ceil{k/3}})$ in the worst case if $s = 2$ and $\tilde{\Omega}(n^k)$ in the worst case if $s \geq 3$.
\end{conjecture}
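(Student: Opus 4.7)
The final statement is a conjecture rather than a theorem, so what I would sketch is a plan for building a convincing case for it rather than a self-contained proof. An unconditional proof is out of reach of current techniques, for essentially the same reason we cannot prove any super-linear time lower bound against a general randomized algorithm for any problem in $\mathsf{P}$: such a lower bound would resolve notoriously open questions in circuit and fine-grained complexity. The role of Conjecture 2.5 in the paper is as an \emph{assumption}, and my proposal is to marshal the standard forms of evidence that justify it.

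First, I would verify that the claimed exponents are tight against the best known algorithms. For $s = 2$, this means re-examining the Ne\v{s}et\v{r}il--Poljak reduction: partition the $k$ vertices into three groups of sizes $\lceil k/3 \rceil, \lceil k/3 \rceil$, and $k - 2\lceil k/3 \rceil$, enumerate all cliques on each group in time $O(n^{\lceil k/3 \rceil})$, and then count triangles in an auxiliary tripartite graph on these cliques via rectangular fast matrix multiplication. For $s \geq 3$, I would argue that standard matrix-multiplication tricks do not lift to hypergraphs, so trivial $O(n^k)$ enumeration remains state of the art. Confirming these upper bounds shows the conjecture at least matches what we know.

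Next I would pursue conditional lower bounds by reducing other well-studied problems to $\textsc{\#}(k,s)$-\textsc{clique}. In the graph case, the natural targets are matrix multiplication, APSP, and BMM hypotheses: one would try to show that an $O(n^{\omega\lceil k/3\rceil - \epsilon})$ algorithm for $\textsc{\#}(k,2)$-\textsc{clique} implies a faster-than-$\omega$ algorithm for rectangular matrix multiplication of the relevant dimensions, via a reversal of the Ne\v{s}et\v{r}il--Poljak reduction. In the hypergraph case, I would appeal to $k$-hyperclique hypotheses already used in fine-grained reductions (e.g., Lincoln--Vassilevska Williams--Williams). In parallel, I would gather restricted-model lower bounds that match the conjectured exponent: Razborov's $\tilde{\Omega}(n^k)$ monotone bound, Rossman's $\mathsf{AC}^0$ bound, bounds for resolution and for low-degree sum-of-squares. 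Together these show that no natural algorithmic paradigm is known to beat the conjectured runtime.

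The main obstacle, and the reason this must remain a conjecture, is that none of these ingredients can be combined into an unconditional proof. Conditional reductions only transfer hardness between open problems of comparable depth, and restricted-model bounds say nothing about arbitrary randomized algorithms in the Word RAM model. Accordingly, my ``proof proposal'' is really an evidentiary one: assume Conjecture 2.5 as a fine-grained worst-case hypothesis, and rely on the worst-case to average-case reduction of the paper to transport its hardness to the Erd\H{o}s--R\'enyi average case $G(n,c,s)$. Any attempt at an actual proof runs directly into the barrier of proving super-polynomial time lower bounds against general randomized algorithms, which is well beyond current complexity theory.
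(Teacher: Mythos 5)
You are right that this statement is a conjecture, not a theorem: the paper does not prove it, but justifies it exactly as you do, by noting that the exponents match the best known algorithms ($O(n^{\omega\lceil k/3\rceil})$ via Ne\v{s}et\v{r}il--Poljak for $s=2$, brute force for $s\ge 3$) and citing supporting evidence such as Razborov's monotone circuit lower bounds, $\mathsf{W[1]}$-hardness, and ETH-hardness, then using it purely as an assumption for the worst-case to average-case reduction. Your proposal matches the paper's treatment.
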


\subsection{Average-Case Lower Bounds for Counting $k$-Cliques in $G(n, c, s)$}
\label{subsec:reductionthmstatements} \label{subsec:timeslowdownnecessity}

Our first main result is a worst-case to average-case reduction solving either \textsc{\#$(k,s)$-clique} or \textsc{Parity-$(k,s)$-clique} on worst-case hypergraphs given a blackbox solving the problem on {\em most} Erd\H{o}s-R\'enyi hypergraphs drawn from $G(n, c, s)$. We discuss this error tolerance over sampling Erd\H{o}s-R\'enyi hypergraphs as well as the multiplicative overhead in our reduction below. These results show that solving the $k$-clique problems on Erd\H{o}s-R\'enyi hypergraphs $G(n,c,s)$ is as hard as solving them on worst-case hypergraphs, for certain choices of $k,c$ and $s$. Therefore the worst-case hardness assumptions, Conjectures \ref{conj:weakworstcasecounting}, \ref{conj:weakworstcaseparity} and \ref{conj:strongworstcasecounting}, imply average-case hardness on Erd\H{o}s-R\'enyi hypergraphs for \textsc{\#$(k,s)$-clique} and \textsc{Parity-$(k,s)$-clique}.

\begin{theorem}[Worst-case to average-case reduction for \textsc{\#$(k,s)$-clique}] \label{thm:averagecasehardnesscounting}
There is an absolute constant $C > 0$ such that if we define
$$\Upsilon_{\#}(n,c,s,k) \triangleq \left(C (c^{-1}(1-c)^{-1}) (s\log k + s\log \log n) (\log n)\right)^{\binom{k}{s}}$$
then the following statement holds. Let $A$ be a randomized algorithm for \textsc{\#$(k,s)$-clique} with error probability less than $1/\Upsilon_{\#}$ on hypergraphs drawn from $G(n,c,s)$. Then there exists an algorithm $B$ for \textsc{\#$(k,s)$-clique} that has error probability less than $1/3$ on any hypergraph, such that
$$T(B,n) \leq (\log n) \cdot \Upsilon_{\#} \cdot \left(T(A,nk) + (nk)^s \right),$$ where $T(\cA,\ell)$ denotes the runtime of algorithm $\cA$ on $\ell$-vertex hypergraphs.
\end{theorem}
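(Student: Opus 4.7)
The plan is to carry out the multi-step pipeline sketched in Section~\ref{subsec:wcacoverview}, adapted from graphs to $s$-uniform hypergraphs. The pivot of the reduction is the $k$-partite polynomial
$$P'(A) \;=\; \sum_{v_1 \in V_1} \cdots \sum_{v_k \in V_k} \;\prod_{S \in \binom{[k]}{s}} A_{\{v_i \,:\, i \in S\}},$$
where $V_1, \ldots, V_k$ are disjoint $n$-element vertex sets and $A$ is a $\{0,1\}$-vector indexed by $s$-subsets of $V_1 \sqcup \cdots \sqcup V_k$. On $\{0,1\}$-inputs, $P'(A)$ counts the $k$-cliques having exactly one vertex in each $V_i$, and it is a multilinear polynomial of total degree $\binom{k}{s}$. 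A purely worst-case step -- clone the input hypergraph $H$ into each part -- turns the worst-case \textsc{\#$(k,s)$-clique} problem into the worst-case problem of evaluating $P'$ on $\{0,1\}^N$ with $N = \binom{nk}{s}$, at the cost of the $O((nk)^s)$ summand appearing in the runtime bound.

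Next, I would move to $\mathbb{F}_p$. Since the $k$-clique count lies in $[0, n^k]$, the Chinese Remainder Theorem recovers it from residues modulo $O(\log n / \log\log n)$ primes $p = \Theta(\log n)$, each taken sufficiently larger than $\binom{k}{s}$. For each such $p$, the Lipton-style random self-reducibility for degree-$\binom{k}{s}$ polynomials (restrict to a random line in $\mathbb{F}_p^N$, interpolate, median-amplify over a few repetitions) reduces worst-case evaluation of $P'$ over $\mathbb{F}_p^N$ to a small number of evaluations on uniformly random inputs $X \in \mathbb{F}_p^N$. The heart of the argument is the biased binary-expansion lemma from Section~\ref{sec:randombinaryexpansions}: for $t = O\!\bigl(c^{-1}(1-c)^{-1}(\log p + \log k + \log\log n)\bigr)$, one jointly samples hypergraphs $Y^{(0)}, \ldots, Y^{(t)} \in \{0,1\}^N$ with each marginal at total-variation distance $1/\poly(n)$ from $G(nk,c,s)$ and such that $X \equiv \sum_{\ell=0}^{t} 2^\ell Y^{(\ell)} \pmod p$ is exactly uniform on $\mathbb{F}_p^N$. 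Plugging this decomposition into $P'(X)$ and using that each tensor slot of $A$ in the $k$-partite polynomial is indexed by a unique multi-index $S \in \binom{[k]}{s}$, one obtains
$$P'(X) \;=\; \sum_{f \,:\, \binom{[k]}{s} \to \{0,\ldots,t\}} \Big(\prod_{S \in \binom{[k]}{s}} 2^{f(S)}\Big) \; P'\bigl(Y^{(f)}\bigr),$$
where $Y^{(f)}$ is the $k$-partite hypergraph whose hyperedge on $(v_{i_1}, \ldots, v_{i_s}) \in V_{i_1} \times \cdots \times V_{i_s}$ is copied from layer $Y^{(f(\{i_1,\ldots,i_s\}))}$. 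By independence across disjoint hyperedges, each $Y^{(f)}$ is itself at TV distance $O(\binom{k}{s}/\poly(n))$ from $G(nk,c,s)$.

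The last step bridges $P'$ and the blackbox $A$, which counts \emph{all} $k$-cliques rather than only the $k$-partite ones. A fixed inclusion-exclusion over the partition pattern of the $k$ clique vertices across $V_1, \ldots, V_k$ writes $P'(Y^{(f)})$ as a signed sum of $O_k(1)$ full $k$-clique counts on induced sub-hypergraphs of $Y^{(f)}$, each still distributed close to $G(m,c,s)$ for some $m \le nk$ and each obtained by one call to $A$ (which also works on $m$-vertex Erd\H{o}s-R\'enyi inputs, since the hypothesis is assumed to hold uniformly in $n$). The total number of blackbox calls is at most $(\text{primes}) \times (\text{self-correction overhead}) \times (\text{number of }f\text{'s}) \times O_k(1)$, which matches $(\log n) \cdot \Upsilon_{\#}$ up to constants once the constant $C$ in the definition of $\Upsilon_{\#}$ is chosen large enough. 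A union bound using $\Pr[\text{individual call errs}] < 1/\Upsilon_{\#}$, together with the additive TV-distance slack between $Y^{(f)}$ and $G(nk,c,s)$, keeps the overall failure probability of $B$ below $1/3$, and the runtime becomes $(\log n) \cdot \Upsilon_{\#} \cdot \bigl(T(A,nk) + (nk)^s\bigr)$.

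The hard part is the binary-expansion lemma itself: achieving $t$ as small as the stated polylogarithm in $c^{-1}(1-c)^{-1}\log p$ while \emph{simultaneously} ensuring that every marginal $Y^{(\ell)}$ is $1/\poly(n)$-close to $G(nk,c,s)$ requires a tight finite Fourier analysis on $\mathbb{F}_p$ with biased Bernoulli bits, not merely the unbiased case handled in \cite{goldreich2017worst}. Any looser bound on $t$ would inflate $\Upsilon_{\#}$ super-polynomially because $t$ enters the number of blackbox calls raised to the power $\binom{k}{s}$. All other ingredients -- the cloning reduction, the CRT aggregation, the Lipton self-correction, and the partition-pattern inclusion-exclusion -- are either standard or purely worst-case and contribute only the stated $\log n$ factor and the additive $(nk)^s$ overhead.
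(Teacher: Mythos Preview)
Your proposal follows essentially the same five-step pipeline as the paper: worst-case reduction to $k$-partite inputs, CRT over $O(\log n/\log\log n)$ small primes, Lipton-style random self-reducibility of the degree-$\binom{k}{s}$ polynomial, the biased binary-expansion trick to pass from uniform $\mathbb{F}_p^N$ to near-$\Ber(c)^{\otimes N}$ inputs, and an inclusion-exclusion to go from $k$-partite Erd\H{o}s--R\'enyi back to ordinary Erd\H{o}s--R\'enyi. You also correctly identify the biased binary-expansion lemma as the nontrivial ingredient.

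Two small points of bookkeeping would need correcting. First, your stated $t = O\bigl(c^{-1}(1-c)^{-1}(\log p + \log k + \log\log n)\bigr)$ is too small: the binary-expansion lemma requires $t = \Theta\bigl(c^{-1}(1-c)^{-1}\log(p/\epsilon^2)\log p\bigr)$, and since you need per-edge TV error $\epsilon \le 1/N$ with $N = \Theta((nk)^s)$ you pick up a genuine $\log n$ factor, giving $t = \Theta\bigl(c^{-1}(1-c)^{-1}\,s(\log n)(\log\log n)\bigr)$; this is exactly what produces the $(\log n)$ factor inside the base of $\Upsilon_\#$. Second, in the last step, the $k$-partite sample $Y^{(f)}$ is \emph{not} close to $G(nk,c,s)$ on its own (all within-part hyperedges are absent); the paper first sprinkles i.i.d.\ $\Ber(c)$ hyperedges on the non-$k$-partite positions to obtain a bona fide $G(nk,c,s)$ sample $H$, and then runs the inclusion-exclusion over subsets $S\subseteq[k]$ on the restrictions $H_S$. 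With these two fixes your sketch matches the paper's proof.
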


For \textsc{Parity-$(k,s)$-clique} we also give an alternative reduction with an improved reduction time and error tolerance in the dense case when $c = 1/2$.

\begin{theorem}[Worst-case to average-case reduction for \textsc{Parity-$(k,s)$-clique}] \label{thm:averagecasehardnessparity}
We have that:
\begin{enumerate}
\item There is an absolute constant $C > 0$ such that if we define
$$\Upsilon_{P,1}(n,c,s,k) \triangleq \left(C (c^{-1}(1-c)^{-1}) (s\log k)\left(s\log n + \binom{k}{s} \log \log \binom{k}{s}\right)\right)^{\binom{k}{s}}$$
then the following statement holds. Let $A$ be a randomized algorithm for \textsc{Parity-$(k,s)$-clique} with error probability less than $1/\Upsilon_{P,1}$ on hypergraphs drawn from $G(n,c,s)$. Then there exists an algorithm $B$ for \textsc{Parity-$(k,s)$-clique} that has error probability less than $1/3$ on any hypergraph, such that
$$T(B,n) \leq \Upsilon_{P,1} \cdot \left(T(A,nk) + (nk)^s\right)$$
\item There is an absolute constant $C > 0$ such that if we define
$$\Upsilon_{P,2}(s,k) \triangleq \left(C s \log k\right)^{\binom{k}{s}}$$
then the following statement holds. Let $A$ be a randomized algorithm for \textsc{Parity-$(k,s)$-clique} with error probability less than $1/\Upsilon_{P,2}$ on hypergraphs drawn from $G(n,1/2,s)$. Then there exists an algorithm $B$ for \textsc{Parity-$(k,s)$-clique} that has error probability less than $1/3$ on any hypergraph, such that
$$T(B,n) \leq \Upsilon_{P,2} \cdot \left(T(A,nk) + (nk)^s\right)$$
\end{enumerate}
\end{theorem}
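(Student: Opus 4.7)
The plan is to follow the multi-step reduction strategy used for Theorem~\ref{thm:averagecasehardnesscounting}, but since only the parity is required, to work over the extension field $\mathbb{F}_{2^t}$ with $t = O(s\log k)$ rather than invoking the Chinese Remainder Theorem over many $\Theta(\log n)$-sized primes. This in particular explains why Part~2's bound $\Upsilon_{P,2}$ is independent of $n$ and $c$: when $c=1/2$, a uniformly random element of $\mathbb{F}_{2^t}$ is literally $t$ unbiased coin flips, so no approximation of uniform by $\Ber(c)$ is required at all.

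Concretely, first I would reduce worst-case \textsc{Parity-$(k,s)$-clique} on $n$-vertex hypergraphs to worst-case mod-$2$ evaluation of the $k$-partite polynomial
$$P'(A) \;=\; \sum_{v_1 \in V_1} \cdots \sum_{v_k \in V_k} \prod_{S \in \binom{[k]}{s}} A_{\{v_i\,:\,i \in S\}}$$
on $\{0,1\}$-valued inputs over $V_1 \sqcup \cdots \sqcup V_k$ with $|V_i|=n$, exactly as in Section~\ref{subsec:wcacoverview}. Viewing $P'$ as a polynomial of degree $\binom{k}{s}$ over $\mathbb{F}_{2^t}$, I choose $t = \lceil \log_2(C\binom{k}{s})\rceil$ so that $|\mathbb{F}_{2^t}|$ comfortably exceeds $20 \deg(P')$; the Lipton--Gemmell--Sudan random self-reducibility then lets me compute $P'$ at any worst-case point using $\poly(\binom{k}{s})$ evaluations at uniformly random inputs $X \in \mathbb{F}_{2^t}^{\binom{nk}{s}}$.

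The main step is to eliminate the field weights. Fix an $\mathbb{F}_2$-basis $\alpha_0,\ldots,\alpha_{t-1}$ of $\mathbb{F}_{2^t}$ and expand each coordinate of $X$ in this basis. For Part~2 each bit is a fair coin, so $X = \sum_{i=0}^{t-1} \alpha_i Y^{(i)}$ with the $Y^{(i)}$ exactly independent $G(nk,1/2,s)$ samples (zero approximation error). For Part~1 I instead invoke the biased random binary expansion developed in Section~\ref{sec:randombinaryexpansions} to write each field coordinate as an $\mathbb{F}_{2^t}$-combination of $\poly(c^{-1}(1-c)^{-1}\log|\mathbb{F}_{2^t}|)$ independent $\Ber(c)$ bits, choosing the expansion long enough that a union bound over the $\binom{nk}{s}$ coordinates and the $\poly(\binom{k}{s})$ self-reducibility queries drives the joint total-variation distance from the appropriate product of $G(nk,c,s)$'s below $1/6$; this slack is what contributes the $s\log n + \binom{k}{s}\log\log\binom{k}{s}$ factor inside $\Upsilon_{P,1}$. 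Substituting and expanding the product over $S$ yields, modulo $2$,
$$P'(X) \;\equiv\; \sum_f \Big(\prod_{S \in \binom{[k]}{s}} \alpha_{f(S)}\Big) \cdot P'\!\bigl(Y^{(f)}\bigr) \pmod{2},$$
where $f$ ranges over assignments $\binom{[k]}{s} \to \{\text{layer indices}\}$ and $Y^{(f)}$ is the $k$-partite hypergraph whose block-$S$ hyperedges are drawn from the layer indexed by $f(S)$; as in the footnote of Section~\ref{subsec:wcacoverview}, the $k$-partite structure of $P'$ is precisely what makes $Y^{(f)}$ unambiguously well-defined. Each $P'(Y^{(f)}) \pmod 2$ is then reduced to a call to the \textsc{Parity-$(k,s)$-clique} blackbox on $G(nk,c,s)$ via the same recursive procedure used for Theorem~\ref{thm:averagecasehardnesscounting}, which expresses the partite polynomial as an alternating sum of $k$-clique parities on induced subhypergraphs.

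Tallying queries gives $(O(s\log k))^{\binom{k}{s}}$ blackbox invocations for Part~2, matching $\Upsilon_{P,2}$, and $(O(c^{-1}(1-c)^{-1})\cdot s\log k \cdot (s\log n + \binom{k}{s}\log\log\binom{k}{s}))^{\binom{k}{s}}$ invocations for Part~1, matching $\Upsilon_{P,1}$. I expect the main obstacle will be the Part~1 biased-expansion bookkeeping: the length of the expansion must be set quantitatively tight enough that the union bound over coordinates and queries only just succeeds while still achieving $\Upsilon_{P,1}$, which is where the sharp Fourier-analytic total-variation bound from Section~\ref{sec:randombinaryexpansions} is used essentially. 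A secondary worry is verifying that the $P'$-to-clique-count recursive step contributes only a benign multiplicative factor, so that the final query count stays raised only to the $\binom{k}{s}$-th power and not to something larger.
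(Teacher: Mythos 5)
Your proposal is correct and follows essentially the same route as the paper's proof: reduce to $k$-partite hypergraphs, express the parity as the evaluation of the partite clique polynomial over a characteristic-2 field of size $\Theta\big(\binom{k}{s}\big)$, apply the Gemmell--Sudan random self-reducibility, decompose uniform field-valued inputs into (approximately) $\Ber(c)$ layers using the block structure of the partite polynomial, and finish with the inclusion--exclusion reduction from $k$-partite to general Erd\H{o}s-R\'enyi hypergraphs (Lemma~\ref{lem:erkpartitetoergeneralreductionparity}); your only deviation is packaging the paper's two decomposition steps (Lemmas~\ref{lem:fpttofp} and~\ref{lem:fptogncsk}) into a single basis expansion, and your Part~2 skips the biased-bit approximation exactly as the paper does. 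One small correction: in characteristic $2$ the powers-of-two expansion and its Fourier-analytic analysis (Lemmas~\ref{lem:packedbinaryexpansionsrestated} and~\ref{lem:tvislowfourier}) do not apply, so the biased sampling of each $\FF_2$-basis coordinate must use the elementary XOR bound of Lemma~\ref{lem:samplefrommodtwo} -- which is what the paper uses and what your $\FF_{2^t}$-combination formula implicitly requires -- and with that substitution your query and error bookkeeping meets (indeed slightly improves on) the stated $\Upsilon_{P,1}$.
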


Our worst-case to average-case reductions yield the following fine-grained average-case lower bounds for $k$-clique counting and parity on Erd\H{o}s-R\'enyi hypergraphs based on Conjectures \ref{conj:weakworstcasecounting} and \ref{conj:strongworstcasecounting}. We separate these lower bounds into the two cases of dense and sparse Erd\H{o}s-R\'enyi hypergraphs. We remark that, for all constants $k$, an error probability of less than $(\log n)^{-\omega(1)}$ suffices in the dense case and error probability less than $n^{-\omega(1)}$ suffices in the sparse case.

\begin{corollary}[Average-case hardness of \textsc{\#$(k,s)$-clique} on dense $G(n,c,s)$]\label{cor:averagecasecountingdense} If $k,c,\eps > 0$ are constant, then we have that
\begin{enumerate}
\item Assuming Conjecture \ref{conj:weakworstcasecounting}, then any algorithm $A$ for \textsc{\#$(k,s)$-clique} that has error probability less than $(\log n)^{-\binom{k}{s} - \eps}$ on Erd\H{o}s-R\'enyi hypergraphs drawn from $G(n,c,s)$ must have runtime at least $T(A, n) \ge n^{\Omega(k)}$.
\item Assuming Conjecture \ref{conj:strongworstcasecounting}, then any algorithm $A$ for \textsc{\#$(k,s)$-clique} that has error probability less than $(\log n)^{-\binom{k}{s} - \eps}$ on Erd\H{o}s-R\'enyi hypergraphs drawn from $G(n,c,s)$ must have runtime at least $T(A, n) \ge \tilde{\Omega}\left(n^{\omega\ceil{k/3}}\right)$ if $s = 2$ and $T(A, n) \ge \tilde{\Omega}(n^{k})$ if $s \geq 3$.
\end{enumerate}
\end{corollary}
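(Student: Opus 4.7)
The plan is to apply Theorem~\ref{thm:averagecasehardnesscounting} directly; the rest is bookkeeping. Suppose $A$ is a randomized algorithm for \textsc{\#$(k,s)$-clique} with error probability less than $(\log n)^{-\binom{k}{s}-\eps}$ on $G(n,c,s)$, with $k, s, c, \eps$ all constant.

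First I would bound $\Upsilon_{\#}$ in the constant-parameter regime. Since $k, s, c$ are constant, the factors $c^{-1}(1-c)^{-1}$ and $s\log k$ are $O(1)$, so
$$\Upsilon_{\#}(n,c,s,k) = \left(C\, c^{-1}(1-c)^{-1}(s\log k + s\log\log n)(\log n)\right)^{\binom{k}{s}} = O\!\left((\log n \cdot \log\log n)^{\binom{k}{s}}\right),$$
which is at most $(\log n)^{\binom{k}{s}+\eps}$ for $n$ sufficiently large, because $(\log \log n)^{\binom{k}{s}}$ grows slower than $(\log n)^\eps$. Hence the error of $A$ is less than $1/\Upsilon_{\#}$ and the hypothesis of Theorem~\ref{thm:averagecasehardnesscounting} is satisfied.

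Applying the theorem produces a randomized algorithm $B$ for \textsc{\#$(k,s)$-clique} on worst-case hypergraphs with error probability less than $1/3$ and runtime
$$T(B,n) \;\le\; (\log n) \cdot \Upsilon_{\#} \cdot \bigl(T(A, nk) + (nk)^s\bigr) \;\le\; (\log n)^{\binom{k}{s}+\eps+1} \cdot \bigl(T(A, nk) + O(n^s)\bigr).$$
Because any algorithm solving \textsc{\#$(k,s)$-clique} must at least read its input, $T(A, nk) = \Omega(n^s)$, so the additive $(nk)^s$ term is absorbed. For part 1, Conjecture~\ref{conj:weakworstcasecounting} forces $T(B,n) \ge n^{\Omega(k)}$, which rearranges to $T(A, nk) \ge n^{\Omega(k)}/\polylog(n) = n^{\Omega(k)}$; since $k$ is constant this gives $T(A,n) \ge n^{\Omega(k)}$ as claimed. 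For part 2, the identical argument using Conjecture~\ref{conj:strongworstcasecounting} yields $T(A,n) \ge \tilde{\Omega}(n^{\omega \lceil k/3\rceil})$ when $s = 2$ and $T(A,n) \ge \tilde{\Omega}(n^k)$ when $s \ge 3$, since the $(\log n)^{O(1)}$ overhead and the constant-factor relabeling $n \mapsto nk$ are absorbed into $\tilde{\Omega}$.

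There is no substantive obstacle beyond invoking the main theorem. The only delicate point is matching the $\log\log n$ factor inside $\Upsilon_{\#}$ against the $(\log n)^{\eps}$ slack in the corollary's error tolerance, which is precisely why the additive $\eps$ appears in the exponent of the hypothesized error probability; the entire content of the corollary is thus an asymptotic reading of Theorem~\ref{thm:averagecasehardnesscounting} under fixed $k, s, c$.
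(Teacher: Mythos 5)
Your proposal is correct and follows exactly the intended route: the corollary is a direct instantiation of Theorem~\ref{thm:averagecasehardnesscounting}, noting that for constant $k,s,c$ one has $\Upsilon_{\#} = \Theta\bigl((\log n \cdot \log\log n)^{\binom{k}{s}}\bigr) \le (\log n)^{\binom{k}{s}+\eps}$ for large $n$, so the hypothesized error bound implies error below $1/\Upsilon_{\#}$, and the $\polylog(n)$ slowdown plus the additive $(nk)^s$ term are absorbed into the $n^{\Omega(k)}$ and $\tilde{\Omega}(\cdot)$ conclusions under Conjectures~\ref{conj:weakworstcasecounting} and \ref{conj:strongworstcasecounting}. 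The aside that $T(A,nk) = \Omega(n^s)$ because $A$ must read its input is not needed (the additive $n^s$ term is dominated by the claimed lower bounds anyway), but it does not affect correctness.
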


\begin{corollary}[Average-case hardness of \textsc{\#$(k,s)$-clique} on sparse $G(n,c,s)$]\label{cor:averagecasecountingsparse} 
Let $k, \alpha, \eps > 0$ be constants and $c = \Theta(n^{-\alpha})$. Assuming Conjecture \ref{conj:strongworstcasecounting}, then any algorithm $A$ for \textsc{\#$(k,s)$-clique} that has error probability less than $n^{-\alpha \binom{k}{s} - \eps}$ on Erd\H{o}s-R\'enyi hypergraphs drawn from $G(n,c,s)$ must have runtime at least $T(A, n) \ge \tilde{\Omega}\left(n^{\omega\ceil{k/3} - \alpha\binom{k}{s}}\right)$ if $s = 2$ and $T(A, n) \ge \tilde{\Omega}\left(n^{k - \alpha\binom{k}{s}}\right)$ if $s \geq 3$.
\end{corollary}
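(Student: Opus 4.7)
The plan is to derive Corollary \ref{cor:averagecasecountingsparse} as a direct consequence of Theorem \ref{thm:averagecasehardnesscounting} (specialized to the sparse regime $c = \Theta(n^{-\alpha})$), combined with the strong worst-case hardness conjecture (Conjecture \ref{conj:strongworstcasecounting}). The argument is by contraposition: assume a faster-than-claimed average-case algorithm $A$ exists, invoke the reduction to produce a worst-case algorithm $B$, and show that $B$'s runtime violates Conjecture \ref{conj:strongworstcasecounting}.

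The first step is to estimate $\Upsilon_{\#}(nk, c, s, k)$ in the sparse regime. Since $k, s$ are constant and $c = \Theta(n^{-\alpha})$, we have $c^{-1}(1-c)^{-1} = \Theta(n^\alpha)$, while $(s \log k + s \log\log(nk)) \log(nk) = O(\log n \cdot \log\log n)$. Hence
\[
\Upsilon_{\#}(nk, c, s, k) \;=\; \bigl(\Theta(n^\alpha) \cdot O(\log n \log\log n)\bigr)^{\binom{k}{s}} \;=\; \tilde{O}\bigl(n^{\alpha \binom{k}{s}}\bigr).
\]
In particular, for any fixed $\eps > 0$ and all sufficiently large $n$, the assumed error bound $n^{-\alpha \binom{k}{s} - \eps}$ is smaller than $1/\Upsilon_{\#}$, so the hypothesis of Theorem \ref{thm:averagecasehardnesscounting} is satisfied (the $n^{-\eps}$ slack absorbs the polylogarithmic factors).

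Next, suppose for contradiction that there exists an algorithm $A$ for \textsc{\#$(k,s)$-clique} on $G(n, c, s)$ with error probability less than $n^{-\alpha \binom{k}{s} - \eps}$ and runtime $T(A, n) = o\bigl(n^{\omega\lceil k/3\rceil - \alpha \binom{k}{s}}\bigr)$ in the case $s = 2$ (respectively $o(n^{k - \alpha \binom{k}{s}})$ in the case $s \ge 3$). Applying Theorem \ref{thm:averagecasehardnesscounting} at edge density $c = \Theta((nk)^{-\alpha}) = \Theta(n^{-\alpha})$ yields a worst-case algorithm $B$ for \textsc{\#$(k,s)$-clique} with error probability less than $1/3$ and runtime
\[
T(B, n) \;\le\; (\log n) \cdot \Upsilon_{\#} \cdot \bigl(T(A, nk) + (nk)^s\bigr) \;=\; \tilde{O}\bigl(n^{\alpha \binom{k}{s}}\bigr) \cdot \bigl(T(A, nk) + O(n^s)\bigr).
\]
Substituting the assumed bound on $T(A, nk)$ (and noting $(nk)^s = O(n^s)$ is dominated by the main term in the regime where the claimed lower bound exceeds $n^s$), this evaluates to $o(n^{\omega \lceil k/3 \rceil})$ if $s = 2$, or $o(n^k)$ if $s \ge 3$. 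This contradicts Conjecture \ref{conj:strongworstcasecounting} and completes the proof.

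There is essentially no technical obstacle here; the corollary is a bookkeeping exercise that plugs the sparse-regime parameters into the reduction. The only minor point requiring care is that the reduction calls $A$ on $nk$-vertex inputs, so one should verify that $G(nk, c, s)$ with $c = \Theta(n^{-\alpha})$ is still in the sparse regime expected by $A$ — this is immediate since $k$ is a fixed constant, so $\Theta(n^{-\alpha}) = \Theta((nk)^{-\alpha})$. Likewise, the additive $(nk)^s$ term is a lower-order contribution in the regime where the stated lower bound is nontrivial.
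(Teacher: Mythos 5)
Your derivation is correct and is exactly the intended one: the paper states this corollary as a direct consequence of Theorem \ref{thm:averagecasehardnesscounting} with $c = \Theta(n^{-\alpha})$ plus Conjecture \ref{conj:strongworstcasecounting}, with no separate proof, and your bookkeeping (estimating $\Upsilon_{\#} = \tilde{O}(n^{\alpha\binom{k}{s}})$, using the $n^{-\eps}$ slack to absorb the polylogarithmic factors in the error requirement, and arguing by contraposition) matches that intended argument. Your caveat about the additive $(nk)^s$ term being negligible only when the claimed exponent exceeds $s$ is a legitimate observation that the paper itself glosses over, so it is not a defect of your write-up relative to the paper.
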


\latestedits{We remark that Conjecture \ref{conj:weakworstcasecounting} implies there is a constant $C > 0$ such that a version of Corollary \ref{cor:averagecasecountingsparse} holds with the weaker conclusion that $T(A, n) \ge n^{\Omega(k)}$ for any $\alpha \le C k/\binom{k}{s}$.} For \textsc{Parity-$(k,s)$-clique}, we consider here the implications of Theorem \ref{thm:averagecasehardnessparity} only for $c = 1/2$, since this is the setting in which we obtain substantially different lower bounds than for \textsc{\#$(k,s)$-clique}. As shown, an error probability of $o(1)$ on $G(n,1/2,s)$ hypergraphs suffices for our reduction to succeed.

\begin{corollary}[Average-case hardness of \textsc{Parity-$(k,s)$-clique} on $G(n,1/2,s)$]\label{cor:averagecaseparityhalf}
Let $k$ be constant. Assuming Conjecture \ref{conj:weakworstcaseparity}, there is a small enough constant $\eps \triangleq \eps(k,s)$ such that if any algorithm $A$ for \textsc{Parity-$(k,s)$-clique} has error less than $\eps$ on $G(n,1/2,s)$ then $A$ must have runtime at least $T(A, n) \ge n^{\Omega(k)}$.
\end{corollary}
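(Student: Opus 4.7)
The plan is to obtain Corollary~\ref{cor:averagecaseparityhalf} as an immediate consequence of Theorem~\ref{thm:averagecasehardnessparity}(2) combined with Conjecture~\ref{conj:weakworstcaseparity}. The crucial feature of part~(2) of the theorem is that in the dense $c=1/2$ regime, the error-tolerance threshold $\Upsilon_{P,2}(s,k) = (Cs\log k)^{\binom{k}{s}}$ depends only on $k$ and $s$, not on $n$. Thus, once $k$ and $s$ are held constant, both the admissible average-case error and the multiplicative runtime overhead of the reduction become absolute constants in $n$.

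Concretely, I would set $\eps(k,s) := 1/\Upsilon_{P,2}(s,k)$, which is a strictly positive constant, and argue by the contrapositive. Given an algorithm $A$ solving \textsc{Parity-$(k,s)$-clique} on $G(n,1/2,s)$ with error probability less than $\eps(k,s)$, Theorem~\ref{thm:averagecasehardnessparity}(2) produces a worst-case algorithm $B$ with error less than $1/3$ and
\[
T(B,n) \;\le\; \Upsilon_{P,2}(s,k)\cdot\bigl(T(A,nk) + (nk)^s\bigr).
\]
Since $k$ and $s$ are constants, $\Upsilon_{P,2}(s,k) = O(1)$ and $(nk)^s = O(n^s)$, so any hypothetical bound of the form $T(A,n) = n^{o(k)}$ would immediately yield $T(B,n) \le n^{\max(o(k),\,s)}$. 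Once the hidden constant in the worst-case lower bound $T(B,n) \ge n^{\Omega(k)}$ from Conjecture~\ref{conj:weakworstcaseparity} is taken to exceed $s$ (which is automatic for the linear-in-$k$ lower bounds that the conjecture asserts), this is a contradiction, whence $T(A,n) \ge n^{\Omega(k)}$.

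I do not expect any real obstacle in this step: all the technical work lies in Theorem~\ref{thm:averagecasehardnessparity}(2) (in particular, in the random binary-expansion analysis and the self-reducibility argument over $\FF_p$), and the corollary is essentially bookkeeping once the parameters $k$ and $s$ are frozen. The only point worth watching is that the implicit constants in the $\Omega(k)$ appearing in the worst-case hypothesis and in the average-case conclusion differ by an additive term coming from the $(nk)^s$ overhead, but since $s$ is constant and the conjecture provides an exponent linear in $k$, this additive loss is absorbed into the $\Omega(\cdot)$ and leaves the statement unchanged.
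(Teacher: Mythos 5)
Your proposal is correct and matches the paper's (implicit) derivation: the corollary is obtained exactly by instantiating Theorem~\ref{thm:averagecasehardnessparity}(2) with $\eps(k,s) = 1/\Upsilon_{P,2}(s,k)$, which is a constant once $k$ and $s$ are fixed, and invoking Conjecture~\ref{conj:weakworstcaseparity} on the resulting worst-case algorithm $B$. Your remark that the $(nk)^s$ additive overhead is absorbed into the $n^{\Omega(k)}$ bound is the same bookkeeping the paper leaves implicit.
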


We remark on one subtlety of our setup in the sparse case. Especially in our algorithms section, we generally restrict our attention to $c = \Theta(n^{-\alpha})$ satisfying $\alpha \le k\binom{k}{s}^{-1} = s\binom{k}{s - 1}^{-1}$, which is necessary for the expected number of $k$-cliques in $G(n, c, s)$ to not tend to zero. However, even when this expectation is decaying, the problem \textsc{\#$(k,s)$-clique} as we formulate it is still nontrivial. The simple algorithm that always outputs zero fails with a polynomially small probability that does not appear to meet the $1/\Upsilon_{\#}$ requirement in our worst-case to average-case reduction. A simple analysis of this error probability can be found in Lemma \ref{lem:cliquenumber}. Note that even when $\alpha > s\binom{k}{s - 1}^{-1}$, $\textsc{greedy-random-sampling}$ and its derivative algorithms in Section \ref{sec:algs} still have guarantees and succeed with probability $1 - n^{-\omega(1)}$. We now discuss the multiplicative overhead and error tolerance in our worst-case to average-case reduction for \textsc{\#$(k,s)$-clique}.

\paragraph{Discussion of the Multiplicative Slowdown $\Upsilon_{\#}$} In the sparse case of $c = \Theta(n^{-\alpha})$, our algorithmic upper bounds in Section \ref{sec:algs} imply lower bounds on the multiplicative overhead factor $\Upsilon_{\#}$ in Theorem \ref{thm:averagecasehardnesscounting}. In the hypergraph case of $s \ge 3$ and below the $k$-clique percolation threshold, it must follow that the overhead is at least $\Upsilon_{\#} = \tilde{\Omega}\left( n^{\alpha \binom{k}{s}} \right) = \tilde{\Omega}\left(c^{-\binom{k}{s}} \right)$. Otherwise, our algorithms combined with our worst-case to average-case reduction would contradict Conjecture \ref{conj:strongworstcasecounting}. Up to $\polylog(n)$ factors, this exactly matches the $\Upsilon_{\#}$ from our reduction. In the graph case of $s = 2$, it similarly must follow that the overhead is at least $\Upsilon_{\#} = \tilde{\Omega}\left( n^{\frac{\omega \alpha}{9} \binom{k}{s}} \right) = \tilde{\Omega}\left(c^{-\frac{\omega}{9}\binom{k}{s}} \right)$ to not contradict Conjecture \ref{conj:strongworstcasecounting}. This matches the $\Upsilon_{\#}$ from our reduction up to a constant factor in the exponent.

\paragraph{Discussion of the Error Tolerance $1/\Upsilon_{\#}$} Notice that our worst-case to average-case reductions in Theorems \ref{thm:averagecasehardnesscounting} and \ref{thm:averagecasehardnessparity} require that the error of the average-case blackbox on Erd\H{o}s-R\'enyi hypergraphs go to zero as $k$ goes to infinity. This error tolerance requirement is unavoidable. When $k = \omega(\log n)$ in the dense Erd\H{o}s-R\'enyi graph case of $G(n, 1/2)$, there is a $k$-clique with at most $\binom{n}{k} 2^{-\binom{k}{2}} = o(1)$ probability by a union bound on $k$-subsets of vertices. So in this regime clique-counting on $G(n,1/2)$ with constant error probability is not hard: the algorithm that always outputs zero achieves $o(1)$ average-case error.

If $k \triangleq 3 \log_2 n$, then the probability of a $k$-clique on $G(n,1/2)$ is less than $\binom{n}{k} 2^{-\binom{k}{2}} \leq 2^{-k^2/6}$. So average-case $k$-clique counting is not hard with error more than $2^{-k^2/6}$. On the other hand, our \textsc{\#$(k,2)$-clique} reduction works with average-case error less than $1/\Upsilon_{\#} = 2^{-\Omega(k^2 \log \log n)}$. And our \textsc{Parity-$(k,2)$-clique} reduction is more lenient, requiring error only less than $2^{-\Omega(k^2 \log \log \log n)}$. Thus, the error bounds required by our reductions are quite close to the $2^{-k^2/6}$ error bound that is absolutely necessary for any reduction in this regime.

In the regime where $k = O(1)$ is constant and on $G(n, 1/2)$, our \textsc{Parity-$(k,2)$-clique} reduction only requires a small constant probability of error and our \textsc{\#$(k,2)$-clique} reduction requires less than a $1/\polylog(n)$ probability of error. We leave it as an intriguing open problem whether the error tolerance of our reductions can be improved in this regime.

Finally, we remark that the error tolerance of the reduction must depend on $c$. The probability that a $G(n,c)$ graph contains a $k$-clique is less than $(nc^{(k-1)/2})^{k}$. For example, if $c = 1/n$ then the probability that there exists a $k$-clique is less than $n^{-\Omega(k^2)}$. As a result, no worst-case to average-case reduction can tolerate average-case error more than $n^{-O(k^2)}$ on $G(n, 1/n)$ graphs. And therefore our reductions for \textsc{\#$(k,2)$-clique} and for \textsc{Parity-$(k,2)$-clique} are close to optimal when $c = 1/n$, because our error tolerance scales as $n^{-O(k^2 \log \log n)}$. 

\section{Worst-Case to Average-Case Reduction for $G(n, c, s)$}
\label{sec:averagecasereductionproof}

In this section, we give our main worst-case to average-case reduction that transforms a blackbox solving \textsc{\#$(k,s)$-clique} on $G(n, c, s)$ into a blackbox solving \textsc{\#$(k,s)$-clique} on a worst-case input hypergraph. This also yields a worst-case to average-case reduction for \textsc{Parity-$(k,s)$-clique} and proves Theorems \ref{thm:averagecasehardnesscounting} and \ref{thm:averagecasehardnessparity}. The reduction involves the following five main steps, the details of which are in Sections \ref{sec:kpartite} to \ref{sec:erdosrenyi}.
\begin{enumerate}
\item Reduce \textsc{\#$(k,s)$-clique} and \textsc{Parity-$(k,s)$-clique} on  general worst-case hypergraphs to the worst-case problems with inputs that are $k$-partite hypergraphs with $k$ parts of equal size.
\item Reduce the worst-case problem on $k$-partite hypergraphs to the problem of computing a low-degree polynomial $P_{n,k,s}$ on $N \triangleq N(n,k,s)$ variables over a small finite field $\FF$.
\item Reduce the problem of computing $P_{n,k,s}$ on worst-case inputs to computing $P_{n,k,s}$ on random inputs in $\FF^N$.
\item Reduce the problem of computing $P_{n,k,s}$ on random inputs in $\FF^N$ to computing $P_{n,k,s}$ on random inputs in $\{0,1\}^N$. This corresponds to \textsc{\#$(k,s)$-clique} and \textsc{Parity-$(k,s)$-clique} on $k$-partite Erd\H{o}s-R\'enyi hypergraphs.
\item Reduce the resulting average-case variants of \textsc{\#$(k,s)$-clique} and \textsc{Parity-$(k,s)$-clique} on $k$-partite Erd\H{o}s-R\'enyi hypergraphs to non-$k$-partite Erd\H{o}s-R\'enyi hypergraphs.
\end{enumerate}
These steps are combined in Section \ref{sec:proofsofmain} to complete the proofs of Theorems \ref{thm:averagecasehardnesscounting} and \ref{thm:averagecasehardnessparity}. Before proceeding to our worst-case to average-case reduction, we establish some definitions and notation, and also give pseudocode for the counting reduction in Figure \ref{fig:pseudocodecounting} -- the parity reduction is similar.

The intermediate steps of our reduction crucially make use of $k$-partite hypergraphs with $k$ parts of equal size, defined below.

\begin{definition}[$k$-Partite Hypergraphs] \label{def:kpartiteness}
Given a $s$-uniform hypergraph $G$ on $nk$ vertices with vertex set $V(G) = [n] \times [k]$, define the vertex labelling
$$L : (i,j) \in [n] \times [k] \mapsto j \in [k]$$
If for all $e = \{u_1,\ldots,u_s\} \in E(G)$, the labels $L(u_1), L(u_2), \dots, L(u_s)$ are distinct, then we say that $G$ is $k$-partite with $k$ parts of equal size $n$.
\end{definition}

In our reduction, it suffices to consider only $k$-partite hypergraphs with $k$ parts of equal size. For ease of notation, our $k$-partite hypergraphs will always have $nk$ vertices and vertex set $[n] \times [k]$. In particular, the edge set of a $k$-partite $s$-uniform hypergraph is an arbitrary subset of
$$E(G) \subseteq \left\{\{u_1,\ldots,u_s\} \subset V(G) : L(u_1),\ldots,L(u_s) \text{ are distinct} \right\}$$
Taking edge indicators yields that the $k$-partite hypergraphs on $nk$ vertices we consider are in bijection with $\{0,1\}^N$, where $N \triangleq N(n,k,s) = \binom{k}{s} n^s$ is the size of this set of permitted hyperedges. Thus we will refer to elements $x \in \{0,1\}^N$ and $k$-partite $s$-uniform hypergraphs on $nk$ vertices interchangeably. This definition also extends to Erd\H{o}s-R\'enyi hypergraphs.

\begin{definition}[$k$-Partite Erd\H{o}s-R\'enyi Hypergraphs] \label{def:erdosrenyikpartite}
The $k$-partite $s$-uniform Erd\H{o}s-R\'enyi hypergraph $G(nk,c,s,k)$ is a distribution over hypergraphs on $nk$ vertices with vertex set $V(G) = [n] \times [k]$. A sample from $G(nk,c,s,k)$ is obtained by independently including hyperedge each $e = \{u_1,\ldots,u_s\} \in E(G)$ with probability $c$ for all $e$ with $L(u_1), L(u_2), \dots, L(u_s)$ distinct.
\end{definition}

Viewing the hypergraphs as elements of $G(nk, c, s, k)$ as a distribution on $\{0,1\}^N$, it follows that $G(nk,c,s,k)$ corresponds to the product distribution $\Ber(c)^{\otimes N}$.

\begin{figure}[t!]
\begin{algbox}
\textbf{Algorithm} \textsc{To-ER-\#}$(G,k,A,c)$

\vspace{2mm}

\textit{Inputs}: $s$-uniform hypergraph $G$ with vertex set $[n]$, parameters $k$, $c$, algorithm $A$ for \textsc{\#$(k,s)$-clique} on Erd\H{o}s-R\'enyi hypergraphs with density $c$.
\begin{enumerate}
\item Construct an $s$-uniform hypergraph $G'$ on vertex set $[n] \times [k]$ by defining 
\begin{align*}
E(G') &= \Big\{\{(v_1,t_1),(v_2,t_2),\dots,(v_s,t_s)\} : \{v_1,\ldots,v_s\} \in E(G) \text{ and } \substack{1 \leq v_1 < v_2 < \cdots < v_s \leq n
\\ \\ 1 \leq t_1 < t_2 < \cdots < t_s \leq k} \Big\}.
\end{align*}
Since $G'$ is $k$-partite, view it  as an indicator vector of edges $G' \in \{0,1\}^N$ for $N \triangleq N(n,k,s) = \binom{k}{s} n^s$.

\item Find the first $T$ primes $12\binom{k}{s} < p_1 < \dots < p_T$ such that $\prod_{i=1}^T p_i > n^k$. 

\item Define $L : (a,b) \in [n] \times [k] \mapsto b \in [k]$, and $$P_{n,k,s}(x) = \sum_{\substack{\{u_1,\ldots,u_k\} \in V(G') \\ L(u_i) = i \ \forall i}} \prod_{\substack{S \subseteq [k] \\ |S| = s}} x_{u_S}$$

For each $1 \leq t \leq T$, compute $P_{n,k,s}(G') \pmod{p_t}$, as follows:

\begin{enumerate}
\item[(1)] Use the procedure of \cite{gemmell1992highly} in order to reduce the computation of $P_{n,k,s}(G') \pmod{p_t}$ to the computation of $P_{n,k,s}$ on $M = 12 \binom{k}{s}$ distinct inputs $x_1,\ldots,x_M \sim \mathrm{Unif}[\FF_{p_t}^N]$.
\item[(2)] For each $1 \leq m \leq M$, compute $P_{n,k,s}(x_m) \pmod{p_t}$ as follows:
\begin{enumerate}
    \item[(i)] Use the rejection sampling procedure of Lemma \ref{lem:samplefrommodp} in order to sample $(\latestedits{\tilde{Z}}^{(0)},\ldots,\latestedits{\tilde{Z}}^{(B)})$ close to $(\Ber(c)^{\otimes N})^{\otimes B}$ in total variation distance, such that $x_m \equiv \sum_{b=0}^{B} 2^b \cdot \latestedits{\tilde{Z}}^{(b)} \pmod{p_t}$. It suffices to take $B = \Theta(c^{-1}(1-c)^{-1} s (\log n)(\log p_t))$.
    \item[(ii)] For each function $a : \binom{[k]}{s} \to \{0,\ldots,B\}$, define $\latestedits{\tilde{Z}}^{(a \circ L)}_S = \latestedits{\tilde{Z}}^{a(L(S))}$ for all $S \in [N] \subset \binom{[n]}{s}$. Note that for each $a$, the corresponding $\latestedits{\tilde{Z}}^{(a \circ L)}$ is approximately distributed as $\Ber(c)^{\otimes N}$. Use algorithm $A$ and the recursive counting procedure of Lemma \ref{lem:erkpartitetoergeneralreductioncounting} in order to compute $P_{n,k,s}(\latestedits{\tilde{Z}}^{(a \circ L)})$ for each $a$.
    \item[(iii)] Set $P_{n,k,s}(G') \leftarrow \sum_{a : \binom{[k]}{s} \to \{0,\ldots,B\}} 2^{|a|_1} \cdot P_{n,k,s}(\latestedits{\tilde{Z}}^{(a \circ L)})$.
\end{enumerate}
\end{enumerate}
\item Since $0 \leq P_{n,k,s}(G') \leq n^k$, use Chinese remaindering and the computations of $P_{n,k,s}(G') \pmod{p_i}$ in order to 
calculate and output $P_{n,k,s}(G')$.
\end{enumerate}
\vspace{1mm}
\end{algbox}
\caption{Reduction $\textsc{To-ER-\#}$ for showing computational lower bounds for average-case \textsc{\#$(k,s)$-clique} on Erd\H{o}s-R\'enyi $G(n,c,s)$ hypergraphs based on the worst-case hardness of \textsc{\#$(k,s)$-clique}.}
\label{fig:pseudocodecounting}
\end{figure}

\subsection{Worst-Case Reduction to $k$-Partite Hypergraphs}
\label{sec:kpartite}

In the following lemma, we prove that the worst-case complexity of \textsc{\#$(k,s)$-clique} and \textsc{Parity-$(k,s)$-clique} are nearly unaffected when we restrict the inputs to be worst-case $k$-partite hypergraphs. This step is important, because the special structure of $k$-partite hypergraphs  will simplify future steps in our reduction.

\begin{lemma}\label{lem:worstcasehardnesskpartite}
Let $A$ be an algorithm for \textsc{\#$(k,s)$-clique}, such that $A$ has error probability less than $1/3$ for any $k$-partite hypergraph $G$ on $nk$ vertices. Then, there is an algorithm $B$ for \textsc{\#$(k,s)$-clique} with error probability less than $1/3$ on any hypergraph $G$ satisfying a runtime upper-bound $T(B,n) \leq T(A,n) + O(k^s n^s)$. Furthermore, the same result holds for \textsc{Parity-$(k,s)$-clique} in place of \textsc{\#$(k,s)$-clique}.
\end{lemma}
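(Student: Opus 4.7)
The plan is to reduce worst-case \textsc{\#$(k,s)$-clique} on a general $n$-vertex hypergraph $G$ to the same problem on a $k$-partite $nk$-vertex hypergraph $G'$ by ``blowing up'' each vertex into $k$ labeled copies in a way that preserves the $k$-clique count exactly. Concretely, I would set $V(G') = [n] \times [k]$ and include, for every hyperedge $\{v_1 < \cdots < v_s\} \in E(G)$ and every increasing tuple $t_1 < \cdots < t_s$ in $[k]$, the hyperedge $\{(v_1, t_1), \ldots, (v_s, t_s)\}$ in $E(G')$. This is exactly the construction in Step~1 of Figure~\ref{fig:pseudocodecounting}, and $G'$ is $k$-partite by definition since every hyperedge uses $s$ distinct labels. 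The reduction $B$ simply builds $G'$, runs $A$ on it, and returns the answer; since we only iterate over the $\binom{n}{s}$ possible $s$-subsets of $[n]$ and emit $\binom{k}{s}$ hyperedges for each one that lies in $E(G)$, this construction takes $O(k^s n^s)$ time.

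The heart of the argument is to show that $|\cl_k(G')| = |\cl_k(G)|$. First, because every hyperedge of $G'$ carries $s$ distinct labels, the vertices of any $k$-clique of $G'$ must also carry $k$ distinct labels (any two vertices sharing a label would appear together in some $s$-hyperedge, contradicting the definition of $G'$). Thus every $k$-clique of $G'$ has the form $\{(v_1, 1), (v_2, 2), \ldots, (v_k, k)\}$ for some distinct $v_1, \ldots, v_k \in [n]$. Second, for each $s$-subset $\{i_1 < \cdots < i_s\} \subseteq [k]$, the hyperedge $\{(v_{i_1}, i_1), \ldots, (v_{i_s}, i_s)\}$ must be present in $E(G')$, and by our construction this forces both $v_{i_1} < \cdots < v_{i_s}$ and $\{v_{i_1}, \ldots, v_{i_s}\} \in E(G)$. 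Applying this to all $s$-subsets yields $v_1 < v_2 < \cdots < v_k$ together with the condition that $\{v_1, \ldots, v_k\}$ is a $k$-clique of $G$. Conversely, any $k$-clique $\{w_1 < \cdots < w_k\}$ of $G$ clearly produces a $k$-clique $\{(w_1, 1), \ldots, (w_k, k)\}$ of $G'$. This gives the bijection and hence the equality of counts.

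The main subtlety, and the only place the proof is not entirely mechanical, is the monotonicity trick used in the construction: by only including hyperedges whose label tuple is \emph{increasing in the same order} as the underlying vertex tuple, we kill off the $k!$ spurious ``permuted'' $k$-cliques in $G'$ that a naive blow-up would generate, ensuring exact rather than $k!$-fold counting. Once that is in place, the error guarantee for $B$ is inherited directly from $A$ (since $B$ makes a single deterministic query to $A$ on a $k$-partite input of size $nk$), and the running time bound $T(B, n) \leq T(A, n) + O(k^s n^s)$ follows from the $O(k^s n^s)$ construction cost plus one call to $A$. Finally, since the bijection between $\cl_k(G)$ and $\cl_k(G')$ is an equality of \emph{counts}, it in particular preserves parity, so the identical reduction handles \textsc{Parity-$(k,s)$-clique} with no modification.
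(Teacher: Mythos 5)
Your proposal is correct and follows essentially the same route as the paper: the identical blow-up of $G$ into the $k$-partite hypergraph $G'$ with order-respecting label tuples, the bijection $\{v_1<\cdots<v_k\}\mapsto\{(v_1,1),\ldots,(v_k,k)\}$ between $k$-cliques, and a single call to $A$ after the $O(k^s n^s)$-time construction. Your added verification that the monotone labeling eliminates permuted copies is exactly the point the paper's proof leaves implicit, so there is nothing to correct.
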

\begin{proof}
Let $G$ be an $s$-uniform hypergraph on $n$ vertices. Construct the $s$-uniform hypergraph $G'$ on the vertex set $V(G') = [n] \times [k]$ with edge set
$$E(G') = \left\{\{(v_1,t_1),(v_2,t_2),\dots,(v_s,t_s)\} : \{v_1,\ldots,v_s\} \in E(G) \text{ and } \substack{1 \leq v_1 < v_2 < \cdots < v_s \leq n
\\ \\ 1 \leq t_1 < t_2 < \cdots < t_s \leq k}\right\}$$
The hypergraph $G'$ can be constructed in $O(k^s n^s)$ time. Note that $G'$ is $k$-partite with the vertex partition $L : (i,j) \in [n] \times [k] \mapsto j \in [k]$. There is also a bijective correspondence between $k$-cliques in $G'$ and $k$-cliques in $G$ given by
$$\{v_1,v_2,\ldots,v_k\} \mapsto \{(v_1,1),(v_2, 2),\ldots,(v_k,k)\}$$
where $v_1 < v_2 < \dots < v_k$. Thus, the $k$-partite $s$-uniform hypergraph $G'$ on $nk$ vertices has exactly the same number of $k$-cliques as $G$. It suffices to run $A$ on $G'$ and to return its output.
\end{proof}

A corollary to Lemma \ref{lem:worstcasehardnesskpartite} is that any worst-case hardness for \textsc{\#$(k,s)$-clique} and \textsc{Parity-$(k,s)$-clique} on general $s$-uniform hypergraphs immediately transfers to the $k$-partite case. For instance, the lower bounds of Conjectures \ref{conj:weakworstcasecounting}, \ref{conj:weakworstcaseparity}, and \ref{conj:strongworstcasecounting} imply corresponding lower bounds in the $k$-partite case. Going forward in our worst-case to average-case reduction, we may restrict our attention to $k$-partite hypergraphs without loss of generality.

\subsection{Counting $k$-Cliques as a Low-Degree Polynomial}

A key step in our worst-case to average-case reduction is to express the number of $k$-cliques as a low-degree polynomial in the adjacency matrix. As mentioned in the introduction, a similar step -- but without the $k$-partiteness constraint -- appears in the worst-case to average-case reduction of Goldreich and Rothblum \cite{goldreich2018counting}.

Let $\mathcal{E} \subset \binom{V(G)}{s}$ be the set of possible hyperedges that respect the $k$-partition: i.e., $\mathcal{E} = \{A \in \binom{V(G)}{s} : |L(A)| = s\}$. Let $N \triangleq N(n,k,s) = |\mathcal{E}|$ and identify $\mathcal{E}$ with $[N]$ through a bijection $\pi : [N] \to \mathcal{E}$. To simplify the notation, we will omit the map $\pi$ in the proof, and simply treat $[N]$ and $\mathcal{E}$ as the same set. Thus, each $x \in \{0, 1\}^N$ corresponds to a $k$-partite hypergraph where $x_A$ is the indicator that $A \in \mathcal{E}$ is an edge in the hypergraph. The number of $k$-cliques of a $k$-partite hypergraph $x \in \{0,1\}^N$ is a degree-$D$ polynomial $P_{n,k,s} : \{0,1\}^N \to \ZZ$ where $D \triangleq D(k,s) = \binom{k}{s}$:

\begin{equation}\label{eq:cliquecountpolynomialkpartite}
P_{n,k,s}(x) = \sum_{\substack{\{u_1,\ldots,u_k\} \subset V(G) \\ \forall i \ L(u_i) = i}} \prod_{\substack{S \subset [k] \\ |S| = s}} x_{u_S}
\end{equation}
For any finite field $\FF$, this equation defines $P_{n,k,s}$ as a polynomial over that finite field. For clarity, we write this polynomial over $\FF$ as $P_{n,k,s,\FF} : \FF^N \to \FF$. Observe that for any hypergraph $x \in \{0,1\}^N$, we have that
$$P_{n,k,s,\FF}(x) = P_{n,k,s}(x) \pmod{\mathrm{char}(\FF)}$$
where $\mathrm{char}(\FF)$ is the characteristic of the finite field. We now reduce computing \textsc{\#$(k,s)$-clique} and \textsc{Parity-$(k,s)$-clique} on a $k$-partite hypergraph $x \in \{0,1\}^N$ to computing $P_{n,k,s,\FF}(x)$ for appropriate finite fields $\FF$. This is formalized in the following two propositions.

\begin{proposition}\label{prop:countingchineseremaindertheorem}
Let $x \in \{0,1\}^N$ denote a $s$-uniform hypergraph that is $k$-partite with vertex labelling $L$. Let $p_1,p_2,\ldots,p_t$ be $t$ distinct primes, such that $\latestedits{\prod_{i} p_i \in (n^k,n^{2k})}$. First, solving \textsc{\#$(k,s)$-clique} reduces to computing $P_{n,k,s,\FF_{p_i}}(x)$ for all $i \in [t]$, plus $O(\latestedits{(k \log n)^2})$ additive computational overhead. Second, computing $P_{n,k,s,\FF_{p_i}}(x)$ for all $i \in [t]$ reduces to computing \textsc{\#$(k,s)$-clique}, plus $O(t k \log n)$ computational overhead.
\end{proposition}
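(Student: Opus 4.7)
The plan is to invoke the Chinese Remainder Theorem (CRT) in both directions, once the size of the integer $P_{n,k,s}(x)$ and its reduction modulo each $p_i$ are properly bounded.

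First I would establish two basic facts. Since $x$ is $k$-partite with parts of size $n$ and the polynomial $P_{n,k,s}$ in \eqref{eq:cliquecountpolynomialkpartite} sums over $k$-tuples $(u_1,\dots,u_k)$ with $L(u_i)=i$, any term contributing a nonzero summand corresponds to a genuine $k$-clique that uses exactly one vertex per part; hence $0 \le P_{n,k,s}(x) \le n^k$, so this integer has bit-length $O(k\log n)$. Second, the definition of $P_{n,k,s,\FF_{p_i}}$ as the same formal sum evaluated over $\FF_{p_i}$, together with the fact that $x \in \{0,1\}^N$ means every monomial is an integer in $\{0,1\}$, gives $P_{n,k,s,\FF_{p_i}}(x) \equiv P_{n,k,s}(x) \pmod{p_i}$.

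For the forward direction, given the residues $P_{n,k,s,\FF_{p_i}}(x)$ for $i \in [t]$, the hypothesis $\prod_i p_i > n^k \ge P_{n,k,s}(x)$ means CRT uniquely reconstructs $P_{n,k,s}(x)$ as an integer and hence solves \textsc{\#$(k,s)$-clique} for $x$. In the Word RAM model with $O(\log n)$-bit words, each $p_i$ fits in $O(1)$ words and the output has $O(k\log n)$ bits, so standard CRT reconstruction runs in $O(k\log n)$ additive overhead. For the backward direction, suppose we are given the integer $P_{n,k,s}(x)$ as the output of a \textsc{\#$(k,s)$-clique} solver; then for each $i \in [t]$ a single modular reduction of an $O(k\log n)$-bit integer modulo the $O(\log n)$-bit prime $p_i$ recovers $P_{n,k,s,\FF_{p_i}}(x)$ in $O(k\log n)$ time, for a total overhead of $O(tk\log n)$.

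There is no genuine obstacle here: the content of the proposition is entirely the bound $P_{n,k,s}(x) \le n^k$ (which is immediate from $k$-partiteness), the identification of the finite-field evaluation with the integer count reduced mod $p_i$, and bookkeeping of CRT reconstruction cost in the Word RAM model. The only point to be mildly careful about is the bit-complexity accounting, ensuring that the $O(k\log n)$ and $O(tk\log n)$ bounds claimed are consistent with the model of computation fixed in the preliminaries.
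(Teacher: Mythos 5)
Your proposal is correct and follows the same route as the paper's proof: bound $0 \le P_{n,k,s}(x) \le n^k$ via $k$-partiteness, observe that $P_{n,k,s,\FF_{p_i}}(x) \equiv P_{n,k,s}(x) \pmod{p_i}$ on $\{0,1\}^N$ inputs, and apply the Chinese Remainder Theorem, with the stated overheads coming from routine bit-complexity bookkeeping in the Word RAM model. The paper's proof is simply a terser version of the same argument, so nothing further is needed.
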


\begin{proof}
\latestedits{For any $i \in [t]$, it holds that $P_{n,k,s,\FF_{p_i}}(x) \equiv P_{n,k,s}(x) \pmod{p_i}$, which proves the second item of the proposition. The first item follows since $P_{n,k,s}(x) \leq n^k$ because there are at most $n^k$ cliques in the hypergraph. Thus, $P_{n,k,s}(x)$ can be reconstructed from $P_{n,k,s}(x) \pmod{p_i}$ for all $i \in [t]$ in time $O((k \log n)^2)$ by the computational version of the Chinese remainder theorem (Theorem 4.6 of \cite{shoup2009computational}).}
\end{proof}
 
\begin{proposition}\label{prop:parityequivchartwoevaluation}
Let $\FF$ be a finite field of characteristic $2$. Let $x \in \{0,1\}^N$ be a $s$-uniform hypergraph that is $k$-partite with vertex labelling $L$. Then solving \textsc{Parity-$(k,s)$-clique} for $x$ is equivalent to computing $P_{n,k,s,\FF}(x)$.
\end{proposition}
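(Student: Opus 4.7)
The plan is essentially to observe that the $k$-partiteness of $x$ forces the polynomial $P_{n,k,s}$ to be an exact integer count of $k$-cliques, and that characteristic $2$ then turns evaluation into parity computation.

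First, I would verify that the integer-valued polynomial $P_{n,k,s}(x)$ defined in \eqref{eq:cliquecountpolynomialkpartite} equals the total number of $k$-cliques in $x$. Since every hyperedge of the $k$-partite hypergraph $x$ has $s$ vertices carrying $s$ distinct labels under $L$, any $k$-clique must consist of $k$ vertices with pairwise distinct labels, i.e. exactly one vertex per part. Hence the sum in \eqref{eq:cliquecountpolynomialkpartite}, which ranges over $k$-tuples $(u_1,\dots,u_k)$ with $L(u_i)=i$ and is $1$ iff all $s$-subsets form hyperedges of $x$, counts each $k$-clique exactly once.

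Next, I would use the fact that $P_{n,k,s,\FF}$ is defined by the same formula as $P_{n,k,s}$, interpreted over $\FF$. Because $x\in\{0,1\}^N$ and the coefficients of $P_{n,k,s}$ are integers, evaluating $P_{n,k,s,\FF}(x)$ amounts to reducing the integer $P_{n,k,s}(x)$ modulo $\Char(\FF)=2$. Combined with the first step, this gives
\[
P_{n,k,s,\FF}(x) \;=\; |\cl_k(x)| \bmod 2,
\]
which is exactly the output of \textsc{Parity-$(k,s)$-clique} on $x$, embedded in $\FF$ via $\{0,1\}\hookrightarrow\FF$. Conversely, given the parity bit one trivially obtains the field element $P_{n,k,s,\FF}(x)\in\{0,1\}\subseteq\FF$.

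There is no real obstacle here; the only thing to be careful about is the identification between the bit output of \textsc{Parity-$(k,s)$-clique} and its image in $\FF$, and noting that the $k$-partite structure is what ensures the polynomial counts $k$-cliques exactly (without the combinatorial overcounting factor of $k!$ that would appear in the non-$k$-partite analogue). Both directions of the equivalence run in $O(\log|\FF|)$ overhead for reading and writing the field element, so the reduction is essentially free.
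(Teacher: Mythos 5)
Your proposal is correct and follows the same route as the paper, which proves the proposition in one line from the identity $P_{n,k,s,\FF}(x) \equiv P_{n,k,s}(x) \pmod{\Char(\FF)}$, relying on the already-established fact that $P_{n,k,s}(x)$ equals the $k$-clique count of a $k$-partite hypergraph $x$. Your write-up simply makes explicit the clique-counting step and the identification of the parity bit with an element of $\FF$, which the paper leaves implicit.
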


\begin{proof}
This is immediate from $P_{n,k,s,\FF}(x) \equiv P_{n,k,s}(x) \pmod{\mathrm{char}(\FF)}$.
\end{proof}

\subsection{Random Self-Reducibility: Reducing to Random Inputs in $\FF^N$}

Expressing the number and parity of cliques as low-degree polynomials allows us to perform a key step in the reduction: because polynomials over finite fields are random self-reducible, we can reduce computing $P_{n,k,s,\FF}$ on worst-case inputs to computing $P_{n,k,s,\FF}$ on several uniformly random inputs in $\FF^N$.

The following well-known lemma states the random self-reducibility of low-degree polynomials. The lemma first appeared in \cite{gemmell1992highly}. We follow the proof of \cite{ball2017average} in order to present the lemma with explicit guarantees on the running time of the reduction.

\begin{lemma}[Theorem 4 of \cite{gemmell1992highly}] \label{lem:reedsolomon} Let $\FF$ be a finite field with $|\FF| = q$ elements.
Let $N > 0$ and $1 \leq D < q/12$. Let $f : \FF^N \to \FF$ be a polynomial of degree at most $D$. If there is an algorithm $A$ running in time $T(A,N)$ such that $$\PP_{x \sim \mathrm{Unif}\left[\FF^N\right]} [A(x) = f(x)] > 2/3,$$ then there is an algorithm $B$ running in time $O((N+D^2)D \log^2 q + T(A,N) \cdot D)$ such that for {\em any} $x \in \FF^N$, it holds that $\PP[B(x) = f(x)] > 2/3$.
\end{lemma}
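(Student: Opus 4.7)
The plan is classical random self-correction via a random low-degree curve, pushed to tolerate a $1/3$ error rate by using a \emph{degree-$2$} curve rather than a line. Fix a worst-case input $x \in \FF^N$. Draw $r_1, r_2 \in \FF^N$ independently and uniformly, and consider the random curve $y(t) = x + t\, r_1 + t^2\, r_2$. Its univariate pullback $g(t) := f(y(t))$ is a polynomial in $t$ of degree at most $2D$ with $g(0) = f(x)$, so the task reduces to recovering $g$ from noisy evaluations and reading off $g(0)$. The reason to take a degree-$2$ curve rather than a line is \emph{pairwise independence} of the query points: for any two distinct nonzero $t_i, t_j \in \FF$, the Vandermonde matrix $\left(\begin{smallmatrix} t_i & t_i^2 \\ t_j & t_j^2 \end{smallmatrix}\right)$ has determinant $t_i t_j (t_j - t_i) \ne 0$, so the map $(r_1, r_2) \mapsto (y(t_i), y(t_j))$ is a bijection on $\FF^{2N}$ and the two query points are jointly uniform on $\FF^{2N}$.

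Next, pick $m = \Theta(D)$ distinct nonzero values $t_1, \ldots, t_m \in \FF$ (which exist since $|\FF| = q > 12D \ge m$ by hypothesis) and query $\tilde a_i := A(y(t_i))$ for each $i$. Letting $X_i$ be the indicator of $\tilde a_i \ne g(t_i)$, marginal uniformity of $y(t_i)$ gives $\EE[X_i] < 1/3$, and pairwise independence gives $\Var\bigl(\sum_i X_i\bigr) \le \sum_i \Var(X_i) \le m/4$. Chebyshev's inequality then implies that with probability at least $2/3$ the number of errors satisfies $\sum_i X_i < m/3 + \sqrt{3m}/2 < (m - 2D)/2$, where the second inequality holds as soon as $m$ is a sufficiently large constant multiple of $D$ (the hypothesis $D > 9$ leaves plenty of room). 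This is exactly the unique-decoding radius of a Reed--Solomon code of degree $2D$ evaluated at $m$ points, so Berlekamp--Welch applied to $\{(t_i, \tilde a_i)\}_{i=1}^m$ recovers $g$; the algorithm $B$ then outputs $g(0)$.

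For the runtime, computing the $m$ points $y(t_i)$ costs $O(mN) = O(DN)$ field operations, the $m$ oracle calls cost $O(D) \cdot T(A, N)$, and Berlekamp--Welch on $m$ evaluations of a degree-$2D$ polynomial costs $\poly(D)$ field operations; translating each $\FF_q$ operation to $O(\log^2 q)$ bit operations produces the stated bound. The one step that genuinely requires care is establishing pairwise independence: the more natural reduction using a random line $y(t) = x + tr$ leaves the $m$ query points jointly determined by the single random direction $r$, so only Markov's inequality on $\EE[\text{errors}] < m/3$ is available, and this cannot be pushed below the Berlekamp--Welch threshold $(m - D)/2$ with a constant probability bounded away from $1$. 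Lifting to the degree-$2$ curve supplies exactly the decorrelation needed for Chebyshev to cross this threshold.
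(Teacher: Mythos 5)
Your proposal is correct and follows essentially the same route as the paper's proof (Appendix B, adapted from Ball et al.): a random degree-$2$ curve $x + t r_1 + t^2 r_2$, pairwise independence of the query points for distinct nonzero $t_i$, a second-moment/Chebyshev bound to get the error count below the unique-decoding radius, and Berlekamp--Welch to recover the degree-$2D$ restriction and output its value at $t=0$. Your accounting with $m = \Theta(D)$ (the paper takes $m = 12D$, matching your constant-checking via $D > 9$) and the runtime analysis likewise coincide with the paper's argument.
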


For completeness, we provide a proof of this lemma in Appendix \ref{app:reedsolomonproof}. Lemma \ref{lem:reedsolomon} implies that if we can efficiently compute $P_{n,k,s,\FF}$ on at least a 2/3 fraction of randomly chosen inputs in $\FF^N$, then we can efficiently compute the polynomial $P_{n,k,s,\FF}$ over a worst-case input in $\FF^N$.

\subsection{Reduction to Evaluating the Polynomial on $G(nk,c,s,k)$}

So far, we have reduced worst-case clique-counting over unweighted hypergraphs to the average-case problem of computing $P_{n,k,s,\FF}$ over $k$-partite hypergraphs with random edge weights in $\FF$. It remains to reduce from computing $P_{n,k,s,\FF}$ on inputs $x \sim \text{Unif}\left[\FF^N\right]$ to random hypergraphs, which correspond to $x \sim \text{Unif}\left[\{0,1\}^N\right]$. Since $\{0, 1\}^N$ is an exponentially small subset of $\FF^N$ if $|\FF| > 2$, the random weighted and unweighted hypergraph problems are very different. In this section, we carry out this reduction using two different arguments for \textsc{Parity-$(k,s)$-clique} and \textsc{\#$(k, s)$-clique}. The latter reduction is based on the total variation convergence of random binary expansions modulo $p$ to $\text{Unif}[\FF_p]$ and related algorithmic corollaries from Section \ref{sec:randombinaryexpansions}.

We first present the reduction that will be applied in the case of \textsc{Parity-$(k,s)$-clique}. Recall $D = \binom{k}{s}$ is the degree of $P_{n,k,s}$. The following lemma will be used only for the \textsc{Parity-$(k,s)$-clique} case:

\begin{lemma}\label{lem:fpttofp}
Let $p$ be prime and $t \ge 1$. Suppose $A$ is an algorithm that computes $P_{n,k,s,\FF_p}(y)$ with error probability less than $\delta \triangleq \delta(n)$ for $y \sim \textnormal{Unif}\left[ \FF_p^N \right]$ in time $T(A, n)$. Then there is an algorithm $B$ that computes $P_{n,k,s,\FF_{p^t}}(x)$ with error probability less than $t^D \cdot \delta$ for $x \sim \textnormal{Unif}\left[ \FF_{p^t}^N \right]$ in time $T(B,n) = O\left(N t^4 (\log p)^3 + t^D \cdot T(A,n)\right)$.
\end{lemma}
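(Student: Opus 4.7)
My plan is to view $\FF_{p^t}$ as $\FF_p[\alpha]$, where $\alpha$ is a root of a fixed irreducible polynomial of degree $t$ over $\FF_p$, so that $\{1,\alpha,\ldots,\alpha^{t-1}\}$ forms an $\FF_p$-basis of $\FF_{p^t}$. Given an input $x \in \FF_{p^t}^N$, I will decompose each coordinate uniquely as $x_e = \sum_{j=0}^{t-1} x_e^{(j)} \alpha^j$ with $x_e^{(j)} \in \FF_p$. The essential structural fact that drives the reduction is that $P_{n,k,s}$ is multilinear of total degree $D = \binom{k}{s}$: each monomial comes from a $k$-clique $(u_1,\ldots,u_k)$ and is a product over the $\binom{k}{s}$ subsets $S \subseteq [k]$ of size $s$ of the distinct variables $x_{u_S}$ (distinctness follows from $k$-partiteness, since the vertices $u_i$ lie in different parts).

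Using multilinearity, I will expand
\begin{equation*}
\prod_{|S|=s} x_{u_S} \;=\; \prod_{|S|=s}\Bigl(\sum_{j=0}^{t-1} x_{u_S}^{(j)} \alpha^j\Bigr) \;=\; \sum_{a:\binom{[k]}{s}\to\{0,\ldots,t-1\}} \alpha^{|a|_1} \prod_{|S|=s} x_{u_S}^{(a(S))}
\end{equation*}
and then interchange the sums over cliques and over functions $a$. For each $a$, define $y^{(a)} \in \FF_p^N$ by $y^{(a)}_e = x_e^{(a(L(e)))}$, where $L(e) \subseteq [k]$ denotes the label-set of the hyperedge slot $e$. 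This yields the identity
\begin{equation*}
P_{n,k,s,\FF_{p^t}}(x) \;=\; \sum_{a:\binom{[k]}{s}\to\{0,\ldots,t-1\}} \alpha^{|a|_1} \cdot P_{n,k,s,\FF_p}\bigl(y^{(a)}\bigr),
\end{equation*}
which expresses the desired evaluation over $\FF_{p^t}$ as a weighted sum of $t^D$ evaluations over $\FF_p$.

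Algorithm $B$ first constructs $\FF_{p^t}$, extracts the coefficients $x_e^{(j)}$, assembles each $y^{(a)}$, calls $A$ on each to compute $P_{n,k,s,\FF_p}(y^{(a)})$, and finally combines the $t^D$ answers in $\FF_{p^t}$ according to the identity above. For correctness, I will observe that if $x \sim \textnormal{Unif}[\FF_{p^t}^N]$, then the $Nt$ coefficients $x_e^{(j)}$ are i.i.d. uniform on $\FF_p$, so each individual $y^{(a)}$ is marginally distributed as $\textnormal{Unif}[\FF_p^N]$. Thus each call to $A$ fails with probability at most $\delta$, and a union bound over the $t^D$ choices of $a$ gives total error at most $t^D \cdot \delta$. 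Note that correlations between $y^{(a)}$ and $y^{(a')}$ are harmless because we only need each call to succeed marginally.

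The conceptual content lies in the multilinear expansion identity; the rest is bookkeeping. The main obstacle is the runtime accounting, where I will use standard constructions: a degree-$t$ irreducible polynomial over $\FF_p$ can be found in time $O(t^4 \log^3 p)$ by e.g. Rabin's test, extracting all $Nt$ coefficients costs $O(N t \log p)$, arithmetic in $\FF_{p^t}$ costs $O(t^2 \log^2 p)$ per operation, and the final combination requires $O(t^D)$ field operations plus $t^D$ oracle calls. Summing yields the claimed $O(N t^4 (\log p)^3 + t^D \cdot T(A,n))$ runtime. One subtlety I will note is that had $P_{n,k,s}$ not been multilinear, the expansion would have produced cross-terms in powers of $\alpha$ not expressible as a single $P_{n,k,s}$ evaluation; the $k$-partite structure from Section~\ref{sec:kpartite} is thus essential.
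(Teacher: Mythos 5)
Your proposal is correct and follows essentially the same route as the paper's proof: the only difference is that the paper decomposes coordinates with respect to a normal basis $\beta, \beta^p, \ldots, \beta^{p^{t-1}}$ (found via Bach et al., with Gaussian elimination to extract coefficients) rather than the power basis of an irreducible polynomial, and this choice is immaterial since either basis makes the $\FF_p$-coefficients of a uniform $x$ i.i.d.\ uniform. The core content is identical in both: the multilinear expansion enabled by $k$-partiteness, the definition of $y^{(a)}$ via the labelling $L$, the observation that each $y^{(a)}$ is marginally uniform so a union bound over the $t^D$ oracle calls gives error $t^D \cdot \delta$, and the same runtime bookkeeping.
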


\begin{proof}
We give a reduction computing $P_{n,k,s,\FF_{p^t}}(x)$ where $x \sim \textnormal{Unif}\left[ \FF_{p^t}^N \right]$ given blackbox access to $A$. Let $\beta$ be such that $\beta, \beta^p, \beta^{p^2}, \ldots,\beta^{p^{t-1}} \in \FF_{p^t}$ forms a normal basis for $\FF_{p^t}$ over $\FF_p$. Now for each $i \in [N]$, compute the basis expansion
$$x_i = x_i^{(0)} \beta + x_i^{(1)} \beta^p + \dots + x_i^{(t-1)} \beta^{p^{t-1}}.$$

One can find a generator for a normal basis $\beta \in \FF_{p^t}$ in time $O((t^2 + \log p)(t \log p)^2)$ by Bach et al. \cite{bach1993factor}. Computing $x^{(0)},\ldots,x^{(t-1)}$ then takes time $O(N t^3 (\log p)^3)$ because $N$ applications of Gaussian elimination each take at most $O(t^3)$ operations over $\FF_p$. \footnote{For a good survey on normal bases, we recommend \cite{gao1993normal}.} Note that since $x$ is uniformly distributed and $\beta, \beta^p, \ldots,\beta^{p^{t-1}}$ form a basis, it follows that $x^{(0)},x^{(1)},\ldots,x^{(t-1)}$ are distributed i.i.d according to $\text{Unif}\left[\FF_p^N\right]$.

Given a coloring of the hyperedges $b : [N] \to \{0,1,\ldots,t-1\}$, define $x^{(b)} \in \FF_p^N$ as $x_i^{(b)} = x_i^{(b(i))}$ for all $i \in [N]$. Observe that for any fixed coloring $b$, the vector $x^{(b)}$ is uniform in $\FF_p^N$.

In our proof, for every map $a : \binom{[k]}{s} \to \{0, 1, \ldots,t-1\}$, we construct a coloring $a \circ L : [N] \to \{0,\ldots,t-1\}$ of the hyperedges $[N]$ using the $k$-partiteness of the hypergraph. Given a hyperedge $W = \{w_1,\ldots,w_s\} \in \mathcal{E} = [N]$, we have that $L(W) \in \binom{[k]}{s}$ by the $k$-partiteness of the hypergraph, and hence the color $(a \circ L)(W) \triangleq a(L(W))$ is well-defined. As above, for any fixed $a$, the vector $x^{(a \circ L)}$ is uniform in $\FF_p^N$.

We now manipulate $P_{n,k,s,\FF_{p^t}}$. First we write each entry $x_{u_S}$ in the normal basis, and then we redistribute terms to write $P_{n,k,s,\FF_{p^t}}$ as a weighted sum of clique-counts modulo $p$:

\allowdisplaybreaks
\begin{align*}
P_{n,k,s,\FF_{p^t}}(x) &= \sum_{\substack{\{u_1,\ldots,u_k\} \subset V(G) \\ \forall j \ L(u_j) = j}} \prod_{S \in \binom{[k]}{s}} x_{u_S} \\
&= \sum_{\substack{\{u_1,\ldots,u_k\} \subset V(G) \\ \forall j \ L(u_j) = j}} \prod_{S \in \binom{[k]}{s}} \left(\sum_{i=0}^{t-1} x_{u_S}^{(i)} \beta^{p^i}\right) \\
&= \sum_{ a : {[k] \choose s} \to \{0,\ldots,t-1\}} \left(\sum_{\substack{\{u_1,\ldots,u_k\} \subset V(G) \\ \forall i \ L(u_i) = i}} \prod_{S \in \binom{[k]}{s}} \left( x_{u_S}^{(a(S))} \beta^{p^{a(S)}}\right)\right) \\
&= \sum_{ a : {[k] \choose s}  \to \{0,\ldots,t-1\}} \left(\prod_{S \in \binom{[k]}{s}} \beta^{p^{a(S)}}\right) \left(\sum_{\substack{\{u_1,\ldots,u_k\} \subset V(G) \\ \forall i \ L(u_i) = i}} \prod_{S \in \binom{[k]}{s}} x_{u_S}^{(a(S))}\right) \\
&= \sum_{ a : {[k] \choose s}  \to \{0,\ldots,t-1\}} \left(\prod_{S \in \binom{[k]}{s}} \beta^{p^{a(S)}}\right) P_{n,k,s,\FF_p}\left(x^{(a \circ L)}\right)
\end{align*}
Since $x^{(a \circ L)} \sim \text{Unif}\left[ \FF_p^N \right]$ for each fixed map $a$, computing $P_{n,k,s,\FF_{p^t}}(x)$ reduces to evaluating $P_{n,k,s,\FF_p}$ on $t^D$ uniformly random inputs in $\FF_p^N$ and outputting a weighted sum of the evaluations. The error probability is bounded by a union bound.
\end{proof}

We now give the reduction to evaluating $P_{n, k, s}$ on random hypergraphs drawn from $G(nk, c, s, k)$ in the case of \textsc{\#$(k, s)$-clique}. One of the main lemmas driving the reduction is the following:
\begin{lemma} \label{lem:samplefrommodp}
\latestedits{There is an absolute constant $K > 0$ such that the following holds. Let $p > 2$ be prime, $\eps > 0$, $c \in (0,1)$, and $t \ge K \cdot c^{-1} (1-c)^{-1} \log (p/\epsilon) \log p$. Then there is an $O(pt \log(1/\eps) \log(p))$-time algorithm that, given $x \in \FF_p$, samples a random variable $\tilde{Z}_x = (\tilde{Z}_x^{(0)},\ldots,\tilde{Z}_x^{(t-1)}) \in \{0,1\}^{t}$ satisfying $\sum_{i=0}^{t-1} 2^i \cdot \tilde{Z}_{x}^{(i)} \equiv x \pmod{p}\mbox{ almost surely}.$ Moreover, if $x \sim \mathrm{Unif}[\FF_p]$ then $\dtv(\cL(\tilde{Z}_{x}),\Ber(c)^{\otimes t}) \leq \eps.$} 
\end{lemma}

The proof of Lemma~\ref{lem:samplefrommodp} is deferred to Section~\ref{sec:randombinaryexpansions}. It is a central ingredient in the \textsc{\#$(k,s)$-clique} reduction and will be used through the following lemma.
\begin{lemma}\label{lem:fptogncsk}
Let $p$ be prime and let $c = c(n), \gamma = \gamma(n) \in (0,1)$. Suppose that $A$ is an algorithm that computes $P_{n,k,s,\FF_p}(y)$ with error probability less than $\delta \triangleq \delta(n)$ when $y \in \{0,1\}^N$ is drawn from $G(nk,c,s,k)$. Then, for some $t = O(c^{-1}(1-c)^{-1} \log(Np/\gamma) \log p)$, there is an algorithm $B$ that evaluates $P_{n,k,s,\FF_p}(x)$ with error probability at most $\gamma + t^D\cdot\delta$ when $x\sim \textnormal{Unif}\left[\FF_p^N\right]$ in time upper bounded by $T(B,n) = O\left(N p t \log(Np/\gamma)\latestedits{\log(p)} + t^D \cdot T(A,n)\right)$.
\end{lemma}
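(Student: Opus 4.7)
The plan is to carry out the random binary expansion strategy outlined in Section~\ref{subsec:wcacoverview}, now at arbitrary density $c$ via Lemma~\ref{lem:packedbinaryexpansions}. Given $x \sim \mathrm{Unif}[\FF_p^N]$, I would apply Lemma~\ref{lem:packedbinaryexpansions} independently to each coordinate $x_W$ with target total variation $\eps \triangleq \gamma/N$ and bit-length $t = \Theta(c^{-1}(1-c)^{-1}\log(Np/\gamma)\log p)$. This produces hypergraphs $Y^{(0)},\ldots,Y^{(t-1)} \in \{0,1\}^N$ satisfying $x \equiv \sum_{i=0}^{t-1} 2^i Y^{(i)} \pmod{p}$ almost surely, and whose joint law on $(Y^{(i)}_W)_{i,W}$ is within total variation $N\eps \le \gamma$ of the product measure $(\mathrm{Ber}(c)^{\otimes t})^{\otimes N}$ by subadditivity of TV across the $N$ independent coordinates.

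Second, I would expand $P_{n,k,s,\FF_p}(x)$ by substituting the decomposition and distributing the product over $S \in \binom{[k]}{s}$:
\begin{align*}
P_{n,k,s,\FF_p}(x) \;=\; \sum_{a:\binom{[k]}{s}\to\{0,\ldots,t-1\}} \Bigl(\prod_{S\in\binom{[k]}{s}} 2^{a(S)}\Bigr) \sum_{\substack{\{u_1,\ldots,u_k\}\subset V(G)\\L(u_j)=j\ \forall j}}\prod_{S\in\binom{[k]}{s}} Y^{(a(S))}_{u_S}.
\end{align*}
Because the hypergraph is $k$-partite, every $W \in \mathcal{E}$ has a unique label set $L(W) \in \binom{[k]}{s}$, so for each map $a$ one can unambiguously define a single hypergraph $Y^{(a)} \in \{0,1\}^N$ by $Y^{(a)}_W \triangleq Y^{(a(L(W)))}_W$, and the inner sum above equals $P_{n,k,s,\FF_p}(Y^{(a)})$. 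Under the ideal joint law on $(Y^{(i)})$, each such $Y^{(a)}$ is marginally distributed exactly as $G(nk,c,s,k)$, since its $W$-th entry reads off a single independent $\mathrm{Ber}(c)$ draw and different $W$'s live in independent blocks.

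The algorithm $B$ then computes the $Y^{(i)}$'s via $N$ invocations of Lemma~\ref{lem:packedbinaryexpansions}, invokes $A$ on each of the $t^D$ hypergraphs $Y^{(a)}$, and outputs the weighted combination modulo $p$. Under the ideal law, a union bound over the $t^D$ calls bounds the failure probability by $t^D \delta$; passing to the actual law of the $Y^{(i)}$'s costs at most $\gamma$ in total variation, yielding total error $\gamma + t^D \delta$. The runtime is $O(Npt\log(Np/\gamma))$ for the decomposition step plus $t^D \cdot T(A,n)$ for the blackbox calls (the final linear combination is cheaper), which matches the claimed bound.

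The main conceptually subtle point---exactly the one flagged in the footnote of Section~\ref{subsec:wcacoverview}---is that without $k$-partiteness, the entry $Y^{(a(S))}_{u_S}$ appearing after distribution would not depend on $W = u_S$ through a single label set $S$, and there would be no consistent way to package the $t^D$ cross-terms as evaluations of $P_{n,k,s,\FF_p}$ at well-defined hypergraphs. This is precisely why the whole reduction is routed through the $k$-partite polynomial $P_{n,k,s}$; everything else in this step is routine algebra together with subadditivity of total variation.
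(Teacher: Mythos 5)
Your proposal follows the paper's proof essentially verbatim for the case $p > 2$: per-coordinate application of Lemma~\ref{lem:packedbinaryexpansions} with accuracy $\eps = \gamma/N$, the algebraic redistribution of the product over $S \in \binom{[k]}{s}$ using $k$-partiteness to define the hypergraphs $Y^{(a)}$ (each marginally $G(nk,c,s,k)$), a union bound over the $t^D$ blackbox calls, and a total-variation (equivalently, coupling) argument contributing the extra $\gamma$. The error and runtime accounting is the same as in the paper, and your phrasing of the TV step — bound the failure event under the ideal product law, then transfer at cost $\gamma$ — is a legitimate equivalent of the paper's explicit coupling of $(x_j^{(i)})$ with i.i.d.\ $\Ber(c)$ variables $(Z_j^{(i)})$.

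The one genuine omission is the case $p = 2$, which the lemma statement includes ("let $p$ be prime") and which is actually used downstream: the parity reduction (Theorem~\ref{thm:averagecasehardnessparity}) invokes this lemma over $\FF_2$. Your argument cannot cover it as written, for two reasons: Lemma~\ref{lem:packedbinaryexpansions} is only stated for primes $p > 2$, and more fundamentally the binary expansion $x_j \equiv \sum_i 2^i x_j^{(i)} \pmod{p}$ is vacuous modulo $2$ since $2^i \equiv 0$ for $i \ge 1$, so the weights $\prod_S 2^{a(S)}$ and the whole decomposition degenerate. The paper handles this by a parallel argument: sample $x_j^{(0)},\ldots,x_j^{(t-1)}$ with $x_j \equiv \sum_i x_j^{(i)} \pmod{2}$ via Lemma~\ref{lem:samplefrommodtwo} (with the smaller $t = O(c^{-1}(1-c)^{-1}\log(N/\gamma))$), and then the same $k$-partite redistribution gives the unweighted identity $P_{n,k,s,\FF_2}(x) = \sum_{a} P_{n,k,s,\FF_2}(Z^{(a \circ L)})$ on the coupling event. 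You would need to add this branch (or restrict the statement to $p > 2$) for the proof to establish the lemma as stated.
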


\begin{proof}
\allowdisplaybreaks
We give a reduction computing $P_{n,k,s,\FF_p}(x)$ where $x \sim \textnormal{Unif}\left[ \FF_{p}^N \right]$ given blackbox access to $A$. We first handle the case in which $p > 2$. For each $j \in [N]$, apply the algorithm from Lemma \ref{lem:samplefrommodp} to sample $\latestedits{\tilde{Z}_j = (\tilde{Z}_j^{(0)}, \tilde{Z}_j^{(1)}, \ldots,\tilde{Z}_j^{(t-1)}) \in \{0,1\}^t}$ satisfying
$$\sum_{i=0}^{t-1} 2^i \latestedits{\cdot \tilde{Z}}_j^{(i)} \equiv x_j \pmod{p} \quad \text{and} \quad \dtv\left(\cL(\latestedits{\tilde{Z}_j}), \Ber(c)^{\otimes t}\right) \le \eps \triangleq \gamma/N$$
By Lemma \ref{lem:samplefrommodp}, we may choose $t = O(c^{-1}(1-c)^{-1} \log(Np/\gamma) \log p)$ and this sampling can be carried out in $O(Npt \log(Np/\gamma)\latestedits{\log(p)})$ time. \latestedits{Now expand $P_{n,k,s,\FF_p}(x)$ in terms of $\tilde{Z}$, similarly to the calculations in Lemma \ref{lem:fpttofp}}. We are working in $\FF_p$ so the following equalities hold modulo $p$:
\begin{align*}P_{n,k,s,\FF_p}(x) &= \sum_{\substack{\{u_1,\ldots,u_k\} \subset V(G) \\ \forall j \ L(u_j) = j}} \prod_{S \in \binom{[k]}{s}} x_{u_S} \\
&= \sum_{\substack{\{u_1,\ldots,u_k\} \subset V(G) \\ \forall j \ L(u_j) = j}} \prod_{S \in \binom{[k]}{s}} \left(\sum_{i=0}^{t-1} 2^i \cdot \latestedits{\tilde{Z}}_{u_S}^{(i)}\right) \\
&= \sum_{ a : {[k] \choose s} \to \{0,\ldots,t-1\}} \left(\sum_{\substack{\{u_1,\ldots,u_k\} \subset V(G) \\ \forall i \ L(u_i) = i}} \prod_{S \in \binom{[k]}{s}} \left( 2^{a(S)} \cdot \latestedits{\tilde{Z}}_{u_S}^{(a(S))}\right)\right) \\
&= \sum_{ a : {[k] \choose s}  \to \{0,\ldots,t-1\}} \left(\prod_{S \in \binom{[k]}{s}} 2^{a(S)}\right) \left(\sum_{\substack{\{u_1,\ldots,u_k\} \subset V(G) \\ \forall i \ L(u_i) = i}} \prod_{S \in \binom{[k]}{s}} \latestedits{\tilde{Z}}_{u_S}^{(a(S))}\right) \\
&= \sum_{ a : {[k] \choose s}  \to \{0,\ldots,t-1\}} \left(\prod_{S \in \binom{[k]}{s}} 2^{a(S)}\right) P_{n,k,s,\FF_p}(\latestedits{\tilde{Z}}^{(a \circ L)}),\end{align*}
where, as in the proof of Lemma \ref{lem:fpttofp}, given any coloring $b : [N] \to \{0,\ldots,t-1\}$, we define $\latestedits{\tilde{Z}}^{(b)} \in \{0,1\}^N$ by $\latestedits{\tilde{Z}}_j^{(b)} = \latestedits{\tilde{Z}}_j^{(b(j))}$, for all $j \in [N]$. \latestedits{Computing $P_{n,k,s,\FF_p}(x)$ thus reduces to computing a weighted sum over the $t^D$ evaluations of $P_{n,k,s,\FF_p}(\tilde{Z}^{(a \circ L)})$ for all maps $a : \binom{[k]}{s} \to \{0,\ldots,t-1\}$. Our algorithm uses the blackbox $A$ to compute each term, and outputs the weighted sum. In other words, our algorithm returns
$$\sum_{ a : {[k] \choose s}  \to \{0,\ldots,t-1\}} \left(\prod_{S \in \binom{[k]}{s}} 2^{a(S)}\right) A(\latestedits{\tilde{Z}}^{(a \circ L)})\,.$$
Let $E$ be the event that the calls to the blackbox are all correct: i.e., $A(\tilde{Z}^{(a \circ L)}) = P_{n,k,s,\FF_p}(\tilde{Z}^{(a \circ L)})\  \mbox{ for all } a : \binom{[k]}{s} \to \{0,\ldots,t-1\}$. If $E$ holds, then our algorithm correctly computes $P_{n,k,s,\FF_p}(x)$. It suffices to prove that $$\PP\left[E\right] > 1 - \gamma + t^D \cdot \delta.$$ For the analysis, note that for each $j \in [N]$, the random vector $(\tilde{Z}_j^{(0)},\ldots,\tilde{Z}_j^{(t-1)})$ may be coupled with $(Z_j^{(0)},\ldots,Z_j^{(t-1)}) \sim \Ber(c)^{\otimes t}$, such that} $$\PP[\latestedits{\tilde{Z}}_j^{(i)} = Z_j^{(i)}\  \forall i,j] \geq 1 - \gamma.$$ Moreover, since $\latestedits{\tilde{Z}}_j^{(i)}$ is independent of $\latestedits{\tilde{Z}}_l^{(k)}$ whenever $j \neq l$, in the coupling we may choose $Z$ such that $Z_j^{(i)}$ is independent of $Z_l^{(k)}$ whenever $j \neq l$. \latestedits{Thus, for any fixed coloring $b : [N] \to \{0,\ldots,t-1\}$,} the entries $Z_1^{(b)},\ldots,Z_N^{(b)}$ are independent and distributed as $\Ber(c)$. In other words, $Z^{(b)} \sim G(nk,c,s,k).$
\latestedits{We use these facts to lower-bound the probability of $E$ as follows: \begin{align*}\PP[E] &\geq \PP[E \mbox{ and } \tilde{Z}_j^{(i)} = Z_j^{(i)}\  \forall i,j] \\
&= \PP[A(\tilde{Z}^{(a \circ L)}) = P_{n,k,s,\FF_p}(\tilde{Z}^{(a \circ L)})\  \forall a, \mbox{ and } \tilde{Z}_j^{(i)} = Z_j^{(i)}\  \forall i,j] \\ 
&= \PP[A(Z^{(a \circ L)}) = P_{n,k,s,\FF_p}(Z^{(a \circ L)})\  \forall a, \mbox{ and } \tilde{Z}_j^{(i)} = Z_j^{(i)}\  \forall i,j] \\
&\geq 1 - (1-\PP[\tilde{Z}_j^{(i)} = Z_j^{(i)}\  \forall i,j]) - \sum_{a : \binom{[k]}{s} \to \{0,\ldots,t-1\}}\PP[A(Z^{(a \circ L)}) \neq P_{n,k,s,\FF_p}(Z^{(a \circ L)})] \\
&> 1 - \gamma - t^D \cdot \delta,\end{align*}
where the second-to-last line is a union bound, and the last line uses that $Z^{(a \circ L)} \sim G(nk,c,s,k)$ for any fixed $a$, and applies the error guarantee of $A$.} This proves correctness of the algorithm for the case $p > 2$.

If $p = 2$, then the proof is almost identical, except that since $2 \equiv 0 \pmod{2}$, we may no longer use the result on random binary expansions of Lemma \ref{lem:samplefrommodp}. In this case, for each $j \in [N]$ we sample $\latestedits{\tilde{Z}_j = (\tilde{Z}}_j^{(0)},\ldots,\latestedits{\tilde{Z}}_j^{(t-1)}\latestedits{) \in \{0,1\}^t}$ such that $$\sum_{i=0}^{t-1} \latestedits{\tilde{Z}}_j^{(i)} \equiv x_j \pmod{p} \quad \mbox{and} \quad \dtv(\cL(\latestedits{\tilde{Z}_j}), \Ber(c)^{\otimes t}) \leq \eps \triangleq \gamma/N.$$ By Lemma \ref{lem:samplefrommodtwo} (deferred but analogous to Lemma~\ref{lem:samplefrommodp}), we may choose $t = O(c^{-1}(1-c)^{-1}\log(N/\gamma))$, and we may sample in time $O(Nt \log(N/\gamma))$. By a similar, and simpler, calculation to the one for the case $p > 2$, we have that $$P_{n,k,s,\FF_2}(x) = \sum_{ a : {[k] \choose s}  \to \{0,\ldots,t-1\}} P_{n,k,s,\FF_2}(\latestedits{\tilde{Z}}^{(a \circ L)}).$$
\latestedits{Our algorithm returns $$\sum_{ a : {[k] \choose s}  \to \{0,\ldots,t-1\}} A(\latestedits{\tilde{Z}}^{(a \circ L)}),$$ which is correct with probability at least $1 - \gamma - t^D \cdot \delta$ similarly to the $p > 2$ case. The proof is again to couple $\tilde{Z}$ with a random variable $Z$ such that $\PP[\tilde{Z}_j^{(i)} = Z_j^{(i)} \ \forall i,j] \geq 1 - \gamma$, and, for each $a$, $Z^{(a \circ L)}$ is distributed as $G(nk,c,s,k)$.}
\end{proof}

\subsection{Reduction to Counting $k$-Cliques in $G(n,c,s)$}
\label{sec:erdosrenyi}

So far, we have reduced \textsc{Parity-$(k,s)$-clique} and \textsc{\#$(k, s)$-clique} for worst-case input hypergraphs to average-case inputs drawn from the $k$-partite Erd\H{o}s-R\'{e}nyi distribution $G(nk,c,s,k)$. We now carry out the final step of the reduction, showing that \textsc{Parity-$(k,s)$-clique} and \textsc{\#$(k, s)$-clique} on inputs drawn from $G(nk,c,s,k)$ reduce to inputs drawn from the non-$k$-partite Erd\H{o}s-R\'{e}nyi distribution $G(n,c,s)$. Recall that a hypergraph $G$ drawn from $G(nk,c,s,k)$ has vertex set $V(G) = [n] \times [k]$ and vertex partition given by the labels $L : (i,j) \in [n] \times [k] \mapsto j \in [k]$.

\begin{lemma}\label{lem:erkpartitetoergeneralreductioncounting}
Let $\delta = \delta(n) \in (0,1)$ be a non-increasing function of $n$ and let $c = c(n) \in (0,1)$. Suppose that $A$ is a randomized algorithm for \textsc{\#$(k,s)$-clique} such that for any $n$, $A$ has error probability less than $\delta(n)$ on hypergraphs drawn from $G(n,c,s)$ in $T(A, n)$ time. Then there exists an algorithm $B$ solving \textsc{\#$(k,s)$-clique} that has error probability less than $2^k \cdot \delta(n)$ on hypergraphs drawn from $G(nk,c,s,k)$ and that runs in $T(B,n) = O\left(2^k \cdot T(A,nk) + k^s n^s + s^2 k^3 2^k \log^2(nk) \right)$ time.
\end{lemma}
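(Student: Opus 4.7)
The plan is to reduce counting $k$-cliques in $G \sim G(nk,c,s,k)$ to counting $k$-cliques in induced subhypergraphs of a non-$k$-partite Erd\H{o}s-R\'enyi lift $G^* \sim G(nk,c,s)$, via an inclusion-exclusion over which parts the clique intersects. Concretely, $B$ first \emph{lifts} $G$ to $G^*$ by independently including each potential hyperedge on $V = [n] \times [k]$ whose vertices do not all have distinct part labels with probability $c$; together with the inter-part hyperedges already present in $G$, this makes $G^* \sim G(nk,c,s)$. Consequently, for every subset $T \subseteq [k]$, the induced subhypergraph $G^*[V_T]$ on $V_T := \bigcup_{j \in T} [n] \times \{j\}$ is marginally distributed as $G(|T|n,c,s)$, which is exactly the type of input $A$ expects.

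Next, $B$ extracts the $k$-clique count of $G$ from the $k$-clique counts of the various $G^*[V_T]$ by M\"obius inversion. Observe that in the $k$-partite hypergraph $G$ every $k$-clique is necessarily transversal: if two vertices of such a clique lay in the same part, then an $s$-subset of the clique containing both of them would violate the distinct-labels constraint required of any hyperedge of $G$. Moreover, transversal $k$-cliques use only inter-part hyperedges, which are shared between $G$ and $G^*$, so the number of $k$-cliques in $G$ equals $M_{[k]}$, where $M_T$ denotes the number of $k$-cliques of $G^*$ whose vertex set meets every part of $T$ and no part outside $T$. Letting $N_T$ be the number of $k$-cliques of $G^*[V_T]$, the identity $N_T = \sum_{T' \subseteq T} M_{T'}$ inverts to
\[
M_{[k]} \;=\; \sum_{T \subseteq [k]} (-1)^{k - |T|} N_T.
\]
Thus $B$ computes $N_T$ for each $T \subseteq [k]$ by calling $A$ on $G^*[V_T]$ (treating $N_T = 0$ trivially when $|T| n < k$), forms the signed sum above, and returns it as the $k$-clique count of $G$.

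For the accounting, a union bound over the at most $2^k$ blackbox calls gives error probability at most $\sum_{T \subseteq [k]} \delta(|T|n) \leq 2^k \delta(n)$, using that $\delta$ is non-increasing and $|T| \geq 1$ for any nontrivial call. The runtime decomposes into (i) sampling the $O(k^s n^s)$ intra-part hyperedge indicators to build $G^*$; (ii) the $2^k$ calls to $A$, each on a hypergraph with at most $nk$ vertices, contributing $2^k \cdot T(A,nk)$; and (iii) a final inclusion-exclusion sum of $2^k$ integers of magnitude at most $(nk)^k$, whose word-RAM cost is absorbed into the $s^2 k^3 2^k \log^2(nk)$ slack in the stated bound.

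The only real subtlety is ensuring that the lift to $G^*$ yields the correct marginal on every $G^*[V_T]$ and, simultaneously, preserves the transversal cliques of $G$ on the nose. Both properties are immediate from the construction: all potential hyperedges of $G^*$ are independent $\mathrm{Ber}(c)$, so the restriction to $V_T$ is exactly $G(|T|n, c, s)$, and the intra-part hyperedges added by the lift cannot appear in any transversal clique. Everything else is routine inclusion-exclusion bookkeeping.
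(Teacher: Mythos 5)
Your proposal is correct and follows essentially the same route as the paper: lift $G$ to $G^* \sim G(nk,c,s)$ by adding the intra-part hyperedges independently with probability $c$, call $A$ on the induced subhypergraph for each label subset $T \subseteq [k]$, and isolate the transversal $k$-cliques (which are exactly the $k$-cliques of $G$) by inclusion--exclusion, with the same union bound and runtime accounting. The only difference is cosmetic: you use the closed-form M\"obius inversion $M_{[k]} = \sum_{T \subseteq [k]} (-1)^{k-|T|} N_T$, whereas the paper computes the counts $t_d$ of cliques with exactly $d$ distinct labels by an equivalent recursion.
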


\begin{proof}
It suffices to count the number of $k$-cliques in $G \sim G(nk,c,s,k)$ given blackbox access to $A$. Construct the hypergraph $H$ over the same vertex set $V(H) = [n] \times [k]$ by starting with $G$ and adding every edge $e = \{v_1, v_2, \ldots, v_s\} \in \binom{[n] \times [k]}{s}$ satisfying the condition $|\{L(v_1), \ldots, L(v_s)\}| < s$ independently with probability $c$. In other words, independently add each edge to $G$ containing two vertices from the same part of $G$. It follows that $H$ is distributed according to $G(nk,c,s)$. More generally, for every $S \subset [k]$, $H_S$ is distributed according to $G(|L^{-1}(S)|,c,s)$ where $H_S$ is the restriction of $H$ to the vertices $L^{-1}(S) \subset V(H)$ with labels in $S$. Note that $H$ can be constructed in $O(k^s n^s)$ time.

Now observe that for each $S \neq \emptyset$, it holds that $n \le |L^{-1}(S)| \le nk$ and the algorithm $A$ succeeds on each $H_S$ with probability at least $1 - \delta(n)$. By a union bound, we may compute the number of $k$-cliques $|\cl_k(H_S)|$ in $H_S$ for all $S \subset [k]$ with error probability less than $2^k \cdot \delta(n)$. Note that this can be done in $O\left(2^k \cdot T(A,nk)\right)$ time. From these counts $|\cl_k(H_S)|$, we now inductively compute
$$t_d \triangleq |\{S \in \cl_k(H) : |L(S)| = d\}|$$
for each $d \in [k]$. Note that $t_0 = 0$ in the base case $d = 0$. Given $t_0, t_1, \ldots, t_d$, the next count $t_{d+1}$ can be expressed by inclusion-exclusion as
\begin{align*}
t_{d+1} &= \sum_{T \subset [k], |T| = d+1} |\{S \in \cl_k(H) : L(S) = T\}| \\ 
&= \sum_{T \subset [k], |T| = d+1} \left(|\cl_k(H_T)| - \sum_{i=0}^{d} \sum_{U \subset T, |U| = i} |\{S \in \cl_k(H) : L(S) = U\}|\right) \\ 
&= \left(\sum_{T \subset [k], |T| = d+1} |\cl_k(H_T)|\right) - \sum_{i=0}^{d} \binom{k-i}{d+1-i} |\{S \in \cl_k(H) : |L(S)| = i\}| \\ 
&= \sum_{T \subset [k], |T| = d+1} |\cl_k(H_T)| - \sum_{i=0}^d \binom{k-i}{d+1-i} t_i
\end{align*}
After $O(k2^k)$ operations, this recursion yields the number of $k$-cliques $t_k = |\{S \in \cl_k(H) : |L(S)| = k\}| = |\cl_k(G)|$ in the original $k$-partite hypergraph $G$. The sizes of the integers manipulated are always at most $2^k \binom{nk}{s}$, so each arithmetic operation takes $O((ks \log(nk))^2)$ time.
\end{proof}

Repeating the same proof over $\FF_2$ yields an analogue of Lemma \ref{lem:erkpartitetoergeneralreductioncounting} for \textsc{Parity-$(k,s)$-clique}, as stated below.

\begin{lemma}\label{lem:erkpartitetoergeneralreductionparity}
Lemma \ref{lem:erkpartitetoergeneralreductioncounting} holds when \textsc{\#$(k,s)$-clique} is replaced by \textsc{Parity-$(k,s)$-clique}.
\end{lemma}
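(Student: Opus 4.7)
The plan is to mimic the proof of Lemma~\ref{lem:erkpartitetoergeneralreductioncounting} essentially verbatim, reducing everything modulo $2$ at each stage. Given a $k$-partite input $G \sim G(nk,c,s,k)$, I would build $H$ on the same vertex set $V(H) = [n] \times [k]$ by starting from $G$ and independently adding each would-be edge $e = \{v_1, \ldots, v_s\}$ with $|\{L(v_1), \ldots, L(v_s)\}| < s$ with probability $c$. Exactly as before, this produces $H \sim G(nk, c, s)$ and moreover each induced subhypergraph $H_S$ on $L^{-1}(S)$ is distributed as $G(|L^{-1}(S)|, c, s)$ for $S \subseteq [k]$, so that the \textsc{Parity-$(k,s)$-clique} oracle $A$ succeeds on every $H_S$ simultaneously with probability $\geq 1 - 2^k \delta(n)$ by a union bound.

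Next, I would reuse the inclusion-exclusion decomposition from the counting proof. Defining $t_d = |\{S \in \cl_k(H) : |L(S)| = d\}|$ as integers, the identity
\[
t_{d+1} = \sum_{\substack{T \subseteq [k] \\ |T| = d+1}} |\cl_k(H_T)| \;-\; \sum_{i=0}^{d} \binom{k-i}{d+1-i} \, t_i
\]
holds over $\ZZ$, and hence also modulo $2$. Since $A$ returns the values $|\cl_k(H_T)| \bmod 2$ directly, I can carry out the entire recursion over $\FF_2$ using only the parities already obtained, and read off $t_k \bmod 2 = |\cl_k(G)| \bmod 2$ at the end. The binomial coefficients are simply reduced mod $2$; no structural property of the recursion is lost.

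For the runtime, the oracle calls contribute $O(2^k \cdot T(A, nk))$, constructing $H$ costs $O(k^s n^s)$, and evaluating the recursion modulo $2$ over $k$ levels with $O(2^k)$ subset-sums takes $O(k 2^k)$ bit operations (strictly faster than the integer arithmetic needed in Lemma~\ref{lem:erkpartitetoergeneralreductioncounting}), so the overall bound $T(B, n) = O(2^k \cdot T(A, nk) + k^s n^s + s^2 k^3 2^k \log^2(nk))$ is preserved.

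I do not foresee any real obstacle: the argument is genuinely a modular reduction of the integer argument. The only mild point to verify is that reducing the identity mod $2$ is legitimate even when some $\binom{k-i}{d+1-i}$ vanishes in $\FF_2$, but this is automatic since the identity holds as an integer equation before reduction. Thus the lemma follows at once from the proof of Lemma~\ref{lem:erkpartitetoergeneralreductioncounting}, with $\FF_2$ arithmetic substituted throughout.
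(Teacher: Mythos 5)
Your proposal is correct and is essentially the paper's own proof: the paper simply states that repeating the argument of Lemma~\ref{lem:erkpartitetoergeneralreductioncounting} over $\FF_2$ gives the parity version, which is exactly the modular reduction you carry out (same construction of $H$, same union bound over the $2^k$ restrictions $H_S$, same inclusion-exclusion recursion read modulo $2$). Your added observation that the integer identity reduces harmlessly mod $2$ and that the $\FF_2$ arithmetic only speeds up the recursion is a faithful filling-in of the details the paper leaves implicit.
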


\subsection{Proofs of Theorems \ref{thm:averagecasehardnesscounting} and \ref{thm:averagecasehardnessparity}}
\label{sec:proofsofmain}

We now combine Steps 1-5 formally in order to prove Theorems \ref{thm:averagecasehardnesscounting} and \ref{thm:averagecasehardnessparity}.

\begin{proof}[Proof of Theorem \ref{thm:averagecasehardnesscounting}]
Our goal is to construct an algorithm $B$ solving \textsc{\#$(k,s)$-clique} with error probability $< 1/3$ on any $s$-uniform hypergraph $x$. We are given an algorithm $A$ that solves \textsc{\#$(k,s)$-clique} with probability of error $< 1/\Upsilon_{\#}$ on hypergraphs drawn from $G(n,c,s)$. We will construct the following intermediate algorithms in our reduction:

\begin{itemize} \item Algorithm $A_0$ that solves \textsc{\#$(k,s)$-clique} with error probability $< 1/3$ for any worst-case $k$-partite hypergraph.
\item Algorithm $A_1(x,p)$ that computes $P_{n,k,s,\FF_p}(x)$ for any $x \in \FF_p^N$ and for any prime $p$ such that $12 \binom{k}{s} < p < 10 \log n^k$, with  worst-case error probability $< 1/3$.
\item Algorithm $A_2(y,p)$ for primes $12 \binom{k}{s} < p < 10 \log n^k$ computing $P_{n,k,s,\FF_p}(y)$ on inputs $y \sim \mathrm{Unif}[\FF_p^N]$ with error probability $< 1/3$.
\item Algorithm $A_3(z)$ that computes $P_{n,k,s}(z)$ on inputs $z \sim G(nk,c,s,k)$ with error probability $< \delta$. (The required value of $\delta$ will be determined later on.)
\end{itemize}

We construct algorithm $B$ from $A_0$, $A_0$ from $A_1$, $A_2$ from $A_3$, and $A_3$ from $A$.
\\

\textit{1. Reduce to computing \textsc{\#$(k,s)$-clique} for $k$-partite hypergraphs.} We use Lemma \ref{lem:worstcasehardnesskpartite} to construct $B$ from $A_0$, such that $B$ runs in time $$T(B,n) = T(A_0,n) + O((nk)^s).$$

\textit{2. Reduce to computing $P_{n,k,s,\FF_p}$ on worst-case inputs.} We use Proposition \ref{prop:countingchineseremaindertheorem} to construct $A_0$ from $A_1$ such that $A_0$ runs in time $$T(A_0,n) \leq O(T(A_1,n) \cdot \log n^k  + (\log n^k)^2).$$ The algorithm $A_0$ starts by using a sieve to find the first $T$ primes $12\binom{k}{s} < p_1 < \dots < p_T$ such that $\prod_{i=1}^T p_i > n^k$. Notice that $p_T \leq 10 \log n^k$, so this step takes time $O((\log n^k)^2)$. Then, given a $k$-partite hypergraph $x \in \{0,1\}^N$, the algorithm $A_0$ computes $P_{n,k,s}(x)$ by first computing $P_{n,k,s,\FF_{p_i}}(x)$ for all $p_i$ with algorithm $A_1$, boosting the error of $A_1$ by repetition and majority vote. Since $T = O((\log n^k) / (\log \log n^k))$, we only need to repeat $O(\log \log n^k)$ times per prime; this yields a total slowdown factor of $O(\log n^k)$. \latestedits{Finally, $P_{n,k,s}(x)$, the number of $k$-cliques in $x$, is computed from the values of $P_{n,k,s,\FF_{p_i}}(x)$ in $O((k \log n)^2)$ time by the computational Chinese remainder theorem stated in Proposition~\ref{prop:countingchineseremaindertheorem}.}

\textit{3. Reduce to computing $P_{n,k,s,\FF_p}$ on random inputs in $\FF_p^N$.} We use Lemma \ref{lem:reedsolomon} to construct $A_1$ from $A_2$ such that $A_1$ runs in time
\begin{align*}
T(A_1,n) &= O((N+D^2)D \log^2p + D \cdot T(A_2,n)) \\
&= O(n^s \binom{k}{s}^3 \log^2 \log n^k + \binom{k}{s} \cdot T(A_2,n)).
\end{align*}

\textit{4. Reduce to computing $P_{n,k,s}$ on random inputs in $\{0,1\}^N$.} We use Lemma \ref{lem:fptogncsk} to construct $A_2$ from $A_3$ such that $A_2$ runs in time $$T(A_2,n) = O(Np\latestedits{t_p} \log (Np)\latestedits{\log(p)} + t_p^{\binom{k}{s}} \cdot T(A_3,n)),$$ for some $\latestedits{t_p} = O(c^{-1}(1-c)^{-1} s (\log n) (\log p)).$ For this step, we require the error probability $\delta$ of algorithm $A_3(z)$ on inputs $z \sim G(nk,c,s,k)$ to be at most $1/(4\latestedits{t_p}^D) = 1 /(4 \latestedits{t_p}^{\binom{k}{s}})$. \latestedits{Recall that we always have $p = O(k \log n)$ in this step, and hence $t_p$ is upper-bounded by a uniform value $t = \Theta(c^{-1}(1-c)^{-1} s (\log n) (\log k + \log \log n))$.}

\textit{5. Reduce to computing \textsc{\#$(k,s)$-clique} for $G(n,c,s)$ hypergraphs.}
We use Lemma \ref{lem:erkpartitetoergeneralreductioncounting} to construct $A_3$ from $A$ such that $A_3$ runs in time $$T(A_3,n) = O((nk)^s + s^2 k^3 2^k\log^2(nk) + 2^k \cdot T(A,nk)),$$ and such that $A_3$ has error probability at most $\delta < 2^k / \Upsilon_{\#}$.
\\

As in the theorem statement, let $\Upsilon_{\#}(n,c,s,k) \triangleq (C (c^{-1}(1-c)^{-1}) s (\log n)(\log k + \log \log n))^{\binom{k}{s}}$, where $C > 0$ is a large constant to be determined. If we take $C$ large enough, then \latestedits{$\Upsilon_{\#} \geq (10t)^{\binom{k}{s}}$}. In this case, \latestedits{since $\binom{k}{s} \geq k \geq 3$ without loss of generality}, the error $\delta$ of $A_3$ will be at most \latestedits{$\delta \leq 2^k / \Upsilon_{\#} \leq 1/(5t)^{\binom{k}{s}} < 1/(4 t^{\binom{k}{s}})$}, which is what we needed for the fourth step. It remains to put the runtime bounds together, 
\begin{align*}
T(B,n) &= O\Big((nk)^s + (\log n^k)^2 + (\log n^k) \cdot \Big(n^s k\latestedits{^2} \binom{k}{s}^3 (\log n)^2 \\
&\quad \quad + \binom{k}{s} \cdot \Big(\latestedits{N(k \log n)t \log (N)\log(k \log n) +  4^k t^{\binom{k}{s}}} \cdot (T(A,nk) + (nk)^s)\Big)\Big)\Big) \\
&= O\Big(n^s k^3 \binom{k}{s}^3  (c^{-1}(1-c)^{-1})s (\log n)^4 (\log k + \log \log n)\latestedits{^2} \\
&\quad \quad + (\log n) \cdot \Upsilon_{\#} \cdot (T(A,nk) + (nk)^s)\Big) \\
&= \latestedits{O((\log n) \cdot (10t)^{\binom{k}{s}} \cdot n^s + (\log n) \cdot \Upsilon_{\#} \cdot (T(A,nk) + (nk)^s))},
\end{align*}
\latestedits{where we have used that $\binom{k}{s} \geq 3$ without loss of generality. The last term dominates since $\Upsilon_{\#} \geq (10t)^{\binom{k}{s}}$}, and thus $$T(B,n)= O((\log n) \cdot \Upsilon_{\#} \cdot (T(A,nk) + (nk)^s)).$$
\end{proof}

\begin{proof}[Proof of Theorem \ref{thm:averagecasehardnessparity}]
The proof of item 1 of Theorem \ref{thm:averagecasehardnessparity} is analogous to the proof of Theorem \ref{thm:averagecasehardnesscounting}, except that it does not use the Chinese remainder theorem \latestedits{(Proposition~\ref{prop:countingchineseremaindertheorem})}. Moreover, special care is needed in order to ensure that the field $\FF$ over which we compute the polynomial $P_{n,k,s,\FF}$ in the intermediate steps is large enough that we may use the random self-reducibility of polynomials.

Our goal is to construct an algorithm $B$ that solves \textsc{Parity-$(k,s)$-clique} with error probability $< 1/3$ on any $s$-uniform hypergraph $x$. We are given an algorithm $A$ that solves \textsc{Parity-$(k,s)$-clique} with probability of error $< 1/\Upsilon_{P,1}$ on hypergraphs drawn from $G(n,c,s)$. We will construct the following intermediate algorithms in our reduction:

\begin{itemize} \item Algorithm $A_0$ that solves \textsc{Parity-$(k,s)$-clique} with error probability $< 1/3$ for any worst-case $k$-partite hypergraph.
\item Algorithm $A_1(w)$ that computes $P_{n,k,s,\FF_{2^{\kappa}}}(w)$ on inputs $w \sim \mathrm{Unif}[\FF_{2^{\kappa}}^N]$ for $\kappa = \ceil{\log_2 (12 \binom{k}{s})}$, with error probability $< 1/3$.
\item Algorithm $A_2(y)$ that computes $P_{n,k,s,\FF_2}(y)$ on inputs $y \sim \mathrm{Unif}[\FF_2^N]$ with error probability $< \delta_2$. (The required value of $\delta_2$ will be determined later on.)
\item Algorithm $A_3(z)$ that computes $P_{n,k,s,\FF_2}(z)$ on inputs $z \sim G(nk,c,s,k)$ with error probability $< \delta_3$. (The required value of $\delta_3$ will be determined later on.)
\end{itemize}

We construct algorithm $B$ from $A_0$, $A_0$ from $A_1$, $A_2$ from $A_3$, and $A_3$ from $A$.
\\

\textit{1. Reduce to computing \textsc{Parity-$(k,s)$-clique} for $k$-partite hypergraphs.} We use Lemma \ref{lem:worstcasehardnesskpartite} to construct $B$ from $A_0$, such that $B$ runs in time $$T(B,n) = T(A_0,n) + O((nk)^s).$$

\textit{2. Reduce to computing $P_{n,k,s,\FF_{2^{\kappa}}}$ on random inputs in $\FF_{2^{\kappa}}^N$.} Note that by Proposition \ref{prop:parityequivchartwoevaluation} if we can compute $P_{n,k,s,\FF_{2^{\kappa}}}$ for worst-case inputs, then we can solve \textsc{Parity-$(k,s)$-clique}. We use Lemma \ref{lem:reedsolomon} to construct $A_0$ from $A_1$ such that $A_0$ runs in time $$T(A_0,n) = O(\kappa^2 (N+D^2)D + D \cdot T(A_1,n)) = O(n^s \binom{k}{s}^2 \kappa^2 + \binom{k}{s} \cdot T(A_1,n))$$

\textit{3. Reduce to computing $P_{n,k,s,\FF_2}$ on random inputs in $\FF_2^N$.} We use Lemma \ref{lem:fpttofp} to construct $A_1$ from $A_2$ such that $A_1$ runs in time $$T(A_1,n) \leq O(N\kappa^4 + \kappa^{\binom{k}{s}} \cdot T(A_2,n)),$$ and has error probability at most $\delta_2 \cdot \kappa^{\binom{k}{s}}$ on random inputs $w \sim \mathrm{Unif}[\FF_{2^{\kappa}}^N]$. Thus, $A_2$ must have error probability at most $\delta_2 < 1/(3\kappa^{\binom{k}{s}})$ on random inputs in $y \sim \mathrm{Unif}[\FF_2^N]$ for this step of the reduction to work.

\textit{4. Reduce to computing $P_{n,k,s,\FF_2}$ on random inputs in $\{0,1\}^N$.} We use Lemma \ref{lem:fptogncsk} to construct $A_2$ from $A_3$ such that $A_2$ runs in time $$T(A_2,n) = O(N t \log (N/\gamma) + t^{\binom{k}{s}} \cdot T(A_3,n)),$$ for some $t = O(c^{-1}(1-c)^{-1} (s \log(n) + \log(1/\gamma))).$ The error probability of $A_2$ on random inputs $z \sim G(nk,c,s,k)$ will be at most $\delta_2 < \delta_3 \cdot t^{\binom{k}{s}} + \gamma$. Since we require error probability at most $\delta_2 \leq 1/(3\kappa^{\binom{k}{s}})$ of algorithm $A_2(z)$ on inputs $z \sim G(nk,c,s,k)$, we set $\gamma = 1/(10 \kappa^{\binom{k}{s}})$ and require $\delta_3 \leq 1/(10 (t \kappa)^{\binom{k}{s}}),$ which is sufficient. For this choice of $\gamma$, we have $t = O(c^{-1}(1-c)^{-1} (s \log(n) + \binom{k}{s} \log \log \binom{k}{s}))$.

\textit{5. Reduce to computing \textsc{Parity-$(k,s)$-clique} for $G(n,c,s)$ hypergraphs.}
We use Lemma \ref{lem:erkpartitetoergeneralreductionparity} to construct $A_3$ from $A$ such that $A_3$ runs in time $$T(A_3,n) = O((nk)^s + s^2 k^3 2^k \log^2(nk) + 2^k \cdot T(A,nk)),$$ and such that $A_3$ has error probability at most $\delta_3 < 2^k / \Upsilon_{P,1}$.
\\

As in the theorem statement, let
$$\Upsilon_{P,1}(n,c,s,k) \triangleq \left(C (c^{-1}(1-c)^{-1}) s (\log k)\left(s\log n + \binom{k}{s} \log \log \binom{k}{s}\right)\right)^{\binom{k}{s}}$$ for some large enough constant $C$.

If we take $C$ large enough, then $(\kappa t)^{\binom{k}{s}} \leq \frac{1}{10} \cdot 2^{-k} \cdot \Upsilon_{P,1}$, as desired. In this case, the error of $A_0$ on uniformly random inputs will be at most $1/3$, which is what we needed. Putting the runtime bounds together,
\begin{align*}
T(B,n) &= O\Big(n^s \binom{k}{s}^2 \kappa^2 + n^s \binom{k}{s}^2 \kappa^{\binom{k}{s}} t \log\left(n^s \kappa^{\binom{k}{s}}\right) \\
&\quad \quad + n^s \binom{k}{s}^2 \kappa^4 + \binom{k}{s} \cdot (4\kappa t)^{\binom{k}{s}} \cdot (T(A,nk) + (nk)^s) \Big) \\
&= O(n^s \binom{k}{s}^2(\latestedits{(\log n)} \cdot t k  \kappa^{\binom{k}{s}}\log^2 \kappa + \kappa^4) + \Upsilon_{P,1} \cdot (T(A,nk) + (nk)^s)),
\end{align*}
if we choose $C > 0$ large enough. Since $\binom{k}{s} \geq k \geq 3$ without loss of generality, the second term dominates and $$T(B,n) = O(\Upsilon_{P,1} \cdot (T(A,nk) + (nk)^s)).$$

For item 2 of the theorem, we restrict the inputs to come from $G(n,1/2,s)$, and we achieve a better error tolerance because algorithm $A_3$ is the same as $A_2$. This means that we may skip step 4 of the proof of item 1. In particular, we only need $\delta_3 = \delta_2 \leq 1 / (3 \kappa^{\binom{k}{s}})$. So algorithm $A$ only needs to have error $< 1 / \Upsilon_{P,2}$, for $\Upsilon_{P,2}(k,s) \triangleq (C s \log k)^{\binom{k}{s}}$. It is not hard to see that, skipping step 4, the algorithm $B$ that we construct takes time $T(B,n) = O(\Upsilon_{P,2} \cdot (T(A,nk) + (nk)^s))$.
\end{proof}

\section{Random Binary Expansions Modulo $p$}
\label{sec:randombinaryexpansions}

\latestedits{We fix some notation to be used throughout this section. Let $p$ be a prime number, let $c \in (0,1/2]$, and let $q_0,\ldots,q_t \in [c,1-c]$ be probabilities. Let $Z = (Z^{(0)},\ldots,Z^{(t)}) \in \{0,1\}^{t+1}$ be a vector of independent, biased Bernoulli random variables such that $Z^{(i)} \sim \Ber(q_i)$ for all $i \in \{0,\ldots,t\}$.} In this section, we consider the distributions of random binary expansions modulo $p$, of the form
$$\latestedits{Z^{(t)} \cdot 2^t + Z^{(t - 1)} \cdot 2^{t - 1} + \cdots + Z^{(0)}} \pmod{p}.$$
We show that for $t$ polylogarithmic in $p$, these distributions become close to uniformly distributed over $\mathbb{F}_p$. This is then used to go in the other direction, producing approximately independent Bernoulli variables that are the binary expansion of a number with a given residue. The special case of this argument in which the Bernoulli variables are unbiased has already appeared in an earlier work by Goldreich and Rothblum \cite{goldreich2017worst}. In that case, the proof of correctness is much simpler, because the Fourier-analytic tools used below can be avoided.

For $p > 2$, the main result of the section is the following slightly more general restatement of Lemma~\ref{lem:samplefrommodp}. It implies that we can efficiently sample biased binary expansions, conditioned on the expansion being equivalent to some $x$ modulo $p$.
\begin{lemma}[Restatement of Lemma~\ref{lem:samplefrommodp}] \label{lem:samplefrommodprestated}
\latestedits{There is an absolute constant $K > 0$ such that the following holds. Let $p > 2$ be prime, $\eps > 0$, and $t \geq K \cdot c^{-1} (1-c)^{-1} \log (p/\epsilon) \log p$. Then there is an $O(pt \log(1/\eps)\log(p))$-time randomized algorithm that, given $x \in \FF_p$, outputs $\tilde{Z}_x = (\tilde{Z}_x^{(0)},\ldots,\tilde{Z}_x^{(t)}) \in \{0,1\}^{t+1}$ satisfying  $\sum_{i=0}^{t} 2^i \cdot \tilde{Z}_{x}^{(i)} \equiv x \pmod{p}$ almost surely. Moreover, if $R \sim \mathrm{Unif}[\FF_p]$ then $\dtv(\cL(\tilde{Z}_R),\cL(Z)) < \eps$.}
\end{lemma}

Our argument uses finite Fourier analysis on $\mathbb{F}_p$. Given a function $f : \FF_p \to \RR$, define its Fourier transform to be $\hat{f} : \FF_p \to \CC$, where $\hat{f}(t) = \sum_{x=0}^{p-1} f(x) \omega^{tx}$ and $\omega = e^{2\pi i / p}$. In this section, we endow $\mathbb{F}_p$ with the total ordering of $\{0, 1, \dots, p - 1\}$ as elements of $\mathbb{Z}$. Given a set $S$, let $2S = \{ 2s : s \in S\}$. We begin with a simple claim showing that sufficiently long geometric progressions with ratio 2 in $\mathbb{F}_p$ contain a middle residue modulo $p$.

\begin{claim} \label{claim:geoprog}
Suppose that $a_1,\ldots,a_k \in \mathbb{F}_p$ is a sequence with $a_1 \neq 0$ and $a_{i+1} = 2a_i$ for each $1 \leq i \leq k-1$. Then if $k \ge 1 + \log_2(p/3)$, there is some $j$ with $\frac{p}{3} \le a_j \le \frac{2p}{3}$.
\end{claim}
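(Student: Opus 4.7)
The plan is to pass to the real numbers $b_i := a_i/p \in [0,1)$, so that the doubling recursion becomes $b_{i+1} = 2b_i \bmod 1$ and the claim reduces to showing that some $b_j$ lies in the interval $I := [1/3, 2/3]$. I would argue by contradiction, assuming every $b_j \notin I$ for $j = 1, \ldots, k$, and deducing $a_1 = 0$, which contradicts the hypothesis.

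The first step will be a ``one-step-back'' observation: if some $b_j$ with $j \le k-1$ lies in $[1/6, 1/3) \cup (2/3, 5/6]$, then $b_{j+1} = 2 b_j \bmod 1$ already lands in $I$, contradicting the assumption. Hence under the contradiction hypothesis each $b_j$ with $j \le k-1$ must lie in the ``far'' set $F := [0, 1/6) \cup (5/6, 1)$.

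The second step will be to observe that the doubling map cannot mix the two components of $F$: doubling $[0, 1/6)$ lands in $[0, 1/3)$, while doubling $(5/6, 1) \bmod 1$ lands in $(2/3, 1)$, so the orbit stays inside a single component. After possibly replacing the sequence $a_i$ by $p - a_i$ (which is also a doubling sequence, preserves $a_1 \ne 0$, and swaps the two components of $F$), I may assume $b_1 \in [0, 1/6)$. Then no wraparound occurs for any $j \le k-1$, so $b_j = 2^{j-1} b_1$, and the requirement $b_{k-1} < 1/6$ forces $b_1 < 1/(3 \cdot 2^{k-1})$.

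Finally, plugging in $k \ge 1 + \log_2(p/3)$ gives $2^{k-1} \ge p/3$, hence $b_1 < 1/p$, i.e., $a_1 < 1$, so $a_1 = 0$, contradicting $a_1 \ne 0$. The argument is short and essentially self-contained; the only point needing explicit verification is that the doubling map preserves each component of $F$ separately, which is immediate from the sizes of the intervals. There is no serious obstacle --- the argument exploits that the preimage under doubling of the ``middle third'' is geometrically large relative to $[0,1)$, so an orbit can avoid it only by being exponentially close to $\{0, 1\}$.
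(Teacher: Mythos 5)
Your proof is correct, and it takes a genuinely different (though related) route from the paper's. The paper splits $\mathbb{F}_p$ into the lower third $S=\{x<p/3\}$, the middle third, and the upper third $T=\{x>2p/3\}$, observes that a single doubling step can never jump directly between $S$ and $T$, and then rules out an orbit confined to $S$ (or, by the same negation symmetry you use, to $T$) by lifting to an integer doubling sequence: since $2^{k-1}\ge p/3$ the lift must cross $p/3$, and at the first crossing its value lies in $(p/3,2p/3]$ and is less than $p$, hence equals the actual residue, exhibiting the desired middle term. You instead refine the avoidance region by one preimage step: if $b_j$ and $b_{j+1}$ both avoid $[1/3,2/3]$ then $b_j\in[0,1/6)\cup(5/6,1)$, these two ``sixths'' do not mix under doubling, and after negating you may assume the whole orbit (up to index $k-1$) sits in $[0,1/6)$, where no wraparound occurs; the orbit is then an honest real geometric progression, and $2^{k-2}b_1<1/6$ together with $2^{k-1}\ge p/3$ forces $a_1<1$, i.e.\ $a_1=0$, contradicting the hypothesis. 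Your refinement buys you freedom from the integer lift and the first-crossing argument---inside $[0,1/6)$ the dynamics are literally multiplication by $2$---at the cost of one extra observation (the preimage/non-mixing step) and of a degenerate boundary case: when $k=1$ (possible under the stated hypothesis only for $p=3$) there is no index $k-1\ge 1$, so that case should be dispatched directly (for $p=3$ every nonzero residue already lies in $[p/3,2p/3]$). Both arguments run on the same engine: a doubling orbit avoiding the middle third is trapped exponentially close to $0$ or to $p$, which is incompatible with $a_1\neq 0$ once $k\ge 1+\log_2(p/3)$.
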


\begin{proof}
Let $S = \{x \in \FF_p : x < p/3\}$ and $T = \{x \in \FF_p : x > 2p/3\}$. Observe that $2S \cap T = \emptyset$ and $S \cap 2T = \emptyset$, which implies that there is no $i$ such that $a_i$ and $a_{i + 1}$ are both in $S$ and $T$. Therefore if $(a_1, a_2, \dots, a_k)$ contains elements of both $S$ and $T$, there must be some $j$ with $a_j \in (S \cup T)^C$ and the claim follows. It thus suffices to shows that $(a_1, a_2, \dots, a_k)$ cannot be entirely contained in one of $S$ or $T$. First consider the case that it is contained in $S$. Define the sequence $(a_1', a_2', \dots, a_k')$ of integers by $a'_{i+1} = 2a'_i$ for each $1 \le i \le k - 1$ and $a_1' \in [1, p/3)$ is such that $a_1' \equiv a_1 \pmod{p}$. It follows that $a_i' \equiv a_i \pmod{p}$ for each $i$ and $a_k' \ge 2^{k - 1} \ge p/3$. Now consider the smallest $j$ with $a_j' > p/3$. Then $p/3 \ge a'_{j-1} = a'_j/2$ by the minimality of $j$, and $p/3 \le a_j \le 2p/3$ which is a contradiction. If the sequence is contained in $T$, then $(-a_1,-a_2,\ldots,-a_k)$ is contained in $S$ and applying the same argument to this sequence proves the claim.
\end{proof}

We now bound the total variation between the distribution of random binary expansions modulo $p$ and the uniform distribution. In Appendix~\ref{app:binarytightness}, we show Lemma~\ref{lem:tvislowfourier} is tight assuming there are infinitely-many Mersenne primes.

\begin{lemma} \label{lem:tvislowfourier}
\latestedits{There is an absolute constant $K > 0$ such that the following holds. Let $p > 2$ be prime, $\epsilon > 0$, and $t \ge K \cdot c^{-1} (1-c)^{-1} \log (p/\epsilon) \log p$. Define the random variable $Y = \sum_{i=0}^{t} 2^i \cdot Z^{(i)} \in \{0,\ldots,2^{t+1}-1\},$ and define the random variable $X \in \FF_p$ by $X \equiv Y \pmod{p}$. Then, letting $\mathcal{L}(X)$ denote the law of $X$, we have $$\dtv(\cL(X),\mathrm{Unif}[\FF_p]) \leq \epsilon.$$}
\end{lemma}

\begin{proof}
\latestedits{Let $f : \FF_p \to \RR$ be the probability mass function of $X$.} By definition, we have that
$$f(x) = \sum_{z \in \{0, 1\}^{t+1}} \left( \prod_{i = 0}^t q_i^{z_i} (1 - q_i)^{1 - z_i} \right) \mathbf{1} \left\{ \sum_{i = 0}^t   2^i\cdot z_i \equiv x \pmod{p} \right\}$$
\latestedits{Now observe that $\hat{f}(s)$ is given by
\begin{align*}\hat{f}(s) &= \sum_{x=0}^{p-1} f(x) \omega^{sx} = \sum_{z \in \{0,1\}^{t+1}} \left( \prod_{i=0}^t q_i^{z_i} (1-q_i)^{1-z_i}\right) \left(\omega^{s \sum_{i=0}^t 2^i \cdot z_i}\right) \\ &= \prod_{i = 0}^{t} \left( 1 - q_i + q_i \cdot \omega^{2^i \cdot s} \right)\end{align*}
The last equality follows directly from expanding the product.} Note that the constant function $\mathbf{1}$ has Fourier transform $p \cdot \mathbf{1}_{\{ s = 0\}}$. By Cauchy-Schwarz and Parseval's theorem, we have that
\begin{align*}
4 \cdot \dtv\left( \latestedits{\mathcal{L}(X)}, \text{Unif}[\mathbb{F}_p] \right)^2 &= \| f - p^{-1} \cdot \mathbf{1} \|_1^2 \le p \cdot \| f - p^{-1} \cdot \mathbf{1} \|_2^2 = \| \hat{f} - \mathbf{1}_{\{ s = 0\}} \|_2^2 \\
&= \sum_{s \neq 0} \prod_{i = 0}^{t} \left| 1 - q_i + q_i \cdot \omega^{2^i \cdot s} \right|^2.
\end{align*} Note that $|1 - q + q \cdot \omega^a| \le 1$ by the triangle inequality for all $a \in \mathbb{F}_p$ and $q \in (0, 1)$. Furthermore, if $a \in \mathbb{F}_p$ is such that $p/3 \le a \le 2p/3$ and $q \in [c, 1 - c]$, then we have that
\begin{align*}
\left|1 - q + q \cdot \omega^a\right|^2 &= (1 - q)^2 + q^2 + 2q(1 - q) \cos(2\pi a/p) \\
&= 1 - 2q(1 - q) \left( 1 - \cos(2\pi a /p) \right) \\
&\le 1 - 2c(1 - c) \left( 1 - \cos(4\pi/3) \right) \\
&= 1 - 3c(1 - c)
\end{align*}
since $\cos(x)$ is maximized at the endpoints on the interval $x \in [2\pi/3, 4\pi/3]$ and $q(1 - q)$ is minimized at the endpoints on the interval $[c, 1 - c]$. Now suppose that $t$ is such that
$$t \ge \left\lceil \frac{\log(4\epsilon^2/p)}{\log(1 - 3c(1 - c))} \right\rceil \cdot \left\lceil 1 + \log_2(p/3) \right\rceil = \Theta\left( c^{-1} (1-c)^{-1} \log (p/\epsilon) \log p \right)$$
Fix some $s \in \mathbb{F}_p$ with $s \neq 0$. By Claim \ref{claim:geoprog}, any $\left\lceil 1 + \log_2(p/3) \right\rceil$ consecutive terms of the sequence $s, 2s, \dots, 2^t s \in \mathbb{F}_p$ contain an element between $p/3$ and $2p/3$. Therefore this sequence contains at least $m = \left\lceil \frac{\log(4\epsilon^2/p)}{\log(1 - 3c(1 - c))} \right\rceil$ such terms, which implies that
$$\prod_{i = 0}^{t} \left| 1 - q_i + q_i \cdot \omega^{2^i \cdot s} \right|^2 \le \left( 1 - 3c(1 - c) \right)^{m} \le \frac{4\epsilon^2}{p}$$
by the inequality above and the fact that each term in this product is at most $1$. Since this holds for each $s \neq 0$, it now follows that
$$4 \cdot \dtv\left( \latestedits{\mathcal{L}(X)}, \text{Unif}[\mathbb{F}_p] \right)^2 \le \sum_{s \neq 0} \prod_{i = 0}^{t} \left| 1 - q_i + q_i \cdot \omega^{2^i \cdot s} \right|^2 < 4\epsilon^2$$
and thus $\dtv\left( \mathcal{L}(\latestedits{X}), \text{Unif}[\mathbb{F}_p] \right) < \epsilon$, proving the lemma.
\end{proof}

\latestedits{Using the above lemma, we can now prove the main result of this section for $p > 2$. The idea is to rejection sample $Z = (Z^{(0)},\ldots,Z^{(t)})$ conditioned on $\latestedits{X} \equiv x \pmod{p}$.}

\begin{proof}[Proof of Lemma~\ref{lem:samplefrommodprestated}]
\latestedits{Define the random variable $Y = \sum_{i=0}^{t} 2^i \cdot Z^{(i)} \in \{0,\ldots,2^{t+1}-1\},$ and define the random variable $X \in \FF_p$ by $X \equiv Y \pmod{p}$, as in Lemma~\ref{lem:tvislowfourier}. Let $K > 0$ be large enough that, by Lemma~\ref{lem:tvislowfourier}, we have $$\dtv\left(\mathcal{L}(X), \textnormal{Unif}[\FF_p]\right) < \epsilon/(2p).$$ We sample a random variable $\tilde{Y}_{x} \in \{0,\ldots,2^{t+1}-1\}$ by rejection sampling from the distribution $\mathcal{L}(Y)$ until receiving an element congruent to $x$ modulo $p$ or reaching the cutoff of
$$m = \left\lceil \frac{\log (\epsilon/2)}{\log(1 - 1/(2p))} \right\rceil = O\left(p\log(1/\epsilon) \right)$$
rounds, in which case we stop and set $\tilde{Y}_{x}$ to an arbitrary value congruent to $x$. We then return  $\tilde{Z}_x = (\tilde{Z}_{x}^{(0)},\ldots,\tilde{Z}_{x}^{(t)})$, the binary expansion of $\tilde{Y}_{x}$ from lowest-order bit to highest-order bit.

By construction, it holds that $\sum_{i=0}^t 2^i \cdot \tilde{Z}_{x}^{(i)} = \tilde{Y}_x \equiv x \pmod{p}$ almost surely. Furthermore, the runtime bound follows because each sample from $\mathcal{L}(Y)$ can be obtained in $O(t)$ time by sampling $Z^{(0)}, Z^{(1)}, \dots, Z^{(t)}$ and forming the number with binary digits $Z^{(t)}, Z^{(t-1)}, \dots, Z^{(0)}$. Checking whether this number is congruent to $x$ modulo $p$ takes $O(t \log(p))$ time by Theorem 3.3 of \cite{shoup2009computational}.

It remains to prove that $(\tilde{Z}_{x}^{(0)},\ldots,\tilde{Z}_{x}^{(t)})$ is close to $(Z^{(0)},\ldots,Z^{(t)})$ in total variation if $x$ is chosen uniformly in $\FF_p$. We begin by considering the case of fixed $x \in \FF_p$. Let $Y_{x}$ be a random variable with the conditional law $\cL(Y_x) \triangleq \cL(Y | Y \equiv x \pmod{p})$. If we receive a sample from $\cL(Y)$ congruent to $x$ by the $m$th round of rejection sampling, then it is exactly sampled from $\cL(Y_x)$. Therefore $\dtv(\cL(\tilde{Y}_x), \cL(Y_x))$ is upper bounded by the probability that the rejection sampling scheme fails to output a sample. Now note that the probability that a sample is output in a single round is
$$\mathbb{P}[X = x] \ge 1/p - \dtv\left(\mathcal{L}(X), \textnormal{Unif}[\FF_p]\right) > 1/p - \eps/(2p) \geq 1/(2p)$$
by the definition of total variation. By the independence of sampling in different rounds, the probability that no sample is output is at most
$$\left(1 - \mathbb{P}[X = x] \right)^m \le \left( 1 - 1/(2p) \right)^m \le \epsilon/2.$$
So we may conclude that, for any fixed $x \in \FF_p$, \begin{equation*}\dtv(\cL(\tilde{Y}_x),\cL(Y_x)) \leq \epsilon/2.\end{equation*}

Now, let $R \sim \mathrm{Unif}[\FF_p]$. By the above inequality, we have \begin{align} \dtv(\cL(\tilde{Y}_R), \cL(Y_{R})) \leq \frac{1}{p} \sum_{x \in \FF_p} \dtv(\cL(\tilde{Y}_x), \cL(Y_x)) \leq \eps/2. \label{ineq:dtvtildevsnotilde}\end{align}
We now bound the total variation distance between $\cL(Y_{R})$ and $\cL(Y)$. Let $X' \sim \cL(X)$ be independent of the other variables, and note that $\cL(Y_{X'}) = \cL(Y)$, since, for any $y \in \{0,\ldots,2^{t+1}-1\}$, Bayes' rule implies \begin{align*}\PP(Y_{X'} = y) &= \PP(Y_{X'} = y \mid Y_{X'} \equiv y \pmod{p}) \cdot \PP(Y_{X'} \equiv y \pmod{p}) \\ 
&= \PP(Y_{X'} = y \mid X' \equiv y \pmod{p}) \cdot \PP(X' \equiv y \pmod{p})\\
&= \PP(Y = y \mid Y \equiv y \pmod{p}) \cdot \PP(Y \equiv y \pmod{p}) \\
&= \PP(Y = y).\end{align*}
So by the data processing inequality, since $x \mapsto Y_x$ is a Markov transition sending $R$ to $Y_{R}$ and $X'$ to $Y_{X'}$,
\begin{align}\dtv(\cL(Y),\cL(Y_{R})) &= \dtv(\cL(Y_{X'}), \cL(Y_{R})) \nonumber \\ &\leq \dtv(\cL(X'),\cL(R)) = \dtv(\cL(X),\mathrm{Unif}[\FF_p]) < \eps/2\label{eq:dtvsamplebound2}.\end{align}
Finally, since $(\tilde{Z}_{R}^{(t)},\ldots,\tilde{Z}_{R}^{(0)})$ is the binary expansion of $\tilde{Y}_R$, and $(Z^{(t)},\ldots,Z^{(0)})$ is the binary expansion of $Y$, the data processing inequality implies:  \begin{equation}\dtv(\cL(\tilde{Z}_{R}^{(0)},\ldots,\tilde{Z}_{R}^{(t)}),\cL(Z^{(0)},\ldots,Z^{(t)})) \leq \dtv(\mathcal{L}(\tilde{Y}_R),\mathcal{L}(Y)).\label{eq:dtvsamplebound1}
\end{equation}
We bound the right-hand-side of \eqref{eq:dtvsamplebound1} with the triangle inequality, \eqref{ineq:dtvtildevsnotilde}, and \eqref{eq:dtvsamplebound2}:
\begin{align*}\dtv(\cL(\tilde{Z}_{R}^{(0)},&\ldots,\tilde{Z}_{R}^{(t)}),\cL(Z^{(0)},\ldots,Z^{(t)})) \\ &\leq \dtv(\cL(\tilde{Y}_R),\cL(Y_{R})) + \dtv(\cL(Y_{R}),\cL(Y)) < \eps/2 + \eps/2 = \eps.\end{align*}}
\end{proof}

We conclude with a sampling result analogous to Lemma \ref{lem:samplefrommodprestated}, but for $p = 2$.

\begin{lemma}[Sampling lemma for $p = 2$]\label{lem:samplefrommodtwo}
\latestedits{There is a constant $K > 0$ such that the following holds. Let $\eps > 0$ and $t \geq Kc^{-1}(1-c)^{-1} \log (1/\eps)$. Then there is an $O(t \log(1/\eps))$-time randomized algorithm that, given $x \in \FF_2$, outputs $\tilde{Z}_{x} = (\tilde{Z}_{x}^{(0)},\ldots,\tilde{Z}_{x}^{(t)}) \in \{0,1\}^{t+1}$ satisfying  $\sum_{i=0}^{t} \tilde{Z}_{x}^{(i)} \equiv x \pmod{2}$ almost surely.
Moreover, if $R \sim \mathrm{Unif}[\FF_2]$ then $\dtv(\cL(\tilde{Z}_{R}), \cL(Z)) < \eps$.}
\end{lemma}
\begin{proof}
By induction on $t$, one may show that
$$\PP\left[\sum_{i=0}^t \latestedits{Z^{(i)}} \equiv 0 \pmod{2}\right] = \frac{1}{2} \latestedits{+} \frac{\prod_{i=0}^t (1-2q_i)}{2}$$
\latestedits{If $t$ satisfies the lower bound $t \geq \ceil{\log(\eps/4)/\log(|1-2c|)} + 1 = \latestedits{O}(c^{-1}(1-c)^{-1} \log(1/\eps))$, it holds that $\dtv(\cL(\sum_{i=0}^t Z^{(i)} \pmod{2}), \cL(R)) < \min(1/4,\eps/2)$.

The proof now proceeds analogously to the proof of Lemma~\ref{lem:samplefrommodprestated}. We sample $\tilde{Z}_{x} = (\tilde{Z}_{x}^{(0)},\ldots,\tilde{Z}_{x}^{(t)})$ by rejection sampling from $\cL(Z)$ until receiving a vector whose sum is is congruent to $x$ modulo $2$, or cutting off at $\Theta(\log(1/\eps))$ rounds. This takes $O(t \log(1/\eps))$ time, because it consists of at most $O(\log(1/\eps))$ rounds of sampling fresh copies of $Z^{(i)} \sim \Ber(q_i)$ for all $i \in \{0,\ldots,t\}$ and checking if $\sum_{i=0}^t Z^{(i)} \equiv x \pmod{2}$. Let $Z_x$ be a random variable with the conditional law $\cL(Z_x) \triangleq \cL(Z \mid Z \equiv x \pmod{2})$. Then the rejection sampling outputs $\tilde{Z}_{x}$ satisfying 
$\dtv(\cL(\tilde{Z}_{x}),\cL(Z_x)) \leq \eps/2$, so \begin{equation}\dtv(\cL(\tilde{Z}_{R}),\cL(Z_{R})) \leq \eps/2.\label{ineq:dtvpeq2part1}\end{equation}
Further, by applying the data processing inequality with Markov kernel $x \mapsto Z_x$, with reasoning analogous to the proof of \eqref{eq:dtvsamplebound2}, we derive \begin{equation}\textstyle \dtv(\cL(Z_{R}),\cL(Z)) \leq \dtv(\cL(R),\cL(\sum_{i=0}^t Z^{(i)} \pmod{2})) < \eps/2.\label{ineq:dtvpeq2part2}\end{equation} Combining \eqref{ineq:dtvpeq2part1} and \eqref{ineq:dtvpeq2part2} with triangle inequality yields $\dtv(\cL(\tilde{Z}_{R}),\cL(Z)) < \eps$.}
\end{proof}

\section{Algorithms for Counting $k$-Cliques in $G(n, c, s)$}\label{sec:algs}

In this section, we consider several natural algorithms for counting $k$-cliques in $G(n, c, s)$ with $c = \Theta(n^{-\alpha})$ for some $\alpha \in (0, 1)$. The main objective of this section is to show that, when $k$ and $s$ are constant, these algorithms all run faster than all known algorithms for $\#(k, s)$-\textsc{clique} on worst-case hypergraphs and nearly match the lower bounds from our reduction for certain $k$, $c$ and $s$. This demonstrates that the average-case complexity of $\#(k, s)$-\textsc{clique} on Erd\H{o}s-R\'{e}nyi hypergraphs is intrinsically different from its worst-case complexity. As discussed in Section \ref{subsec:timeslowdownnecessity}, this also shows the necessity of a slowdown term comparable to $\Upsilon_{\#}$ in our worst-case to average-case reduction for $\#(k, s)$-\textsc{clique}. We begin with a randomized sampling-based algorithm for counting $k$-cliques in $G(n, c, s)$, extending well-known greedy heuristics for finding $k$-cliques in random graphs. We then present an improvement to this algorithm in the graph case and a deterministic alternative.

\subsection{Greedy Random Sampling}
\label{subsec:greedysampling}

In this section, we consider a natural greedy algorithm $\textsc{greedy-random-sampling}$ for counting $k$-cliques in a $s$-uniform hypergraph $G \sim G(n, c, s)$ with $c = \Theta(n^{-\alpha})$. Given a subset of vertices $A \subseteq [n]$ of $G$, define $\textsc{cn}_G(A)$ to be
$$\textsc{cn}_G(A) = \left\{ v \in V(G) \backslash A : B \cup \{ v \} \in E(G) \text{ for all } (s - 1)\text{-subsets } B \subseteq A \right\}$$
or, in other words, the set of common neighbors of the vertices in $A$. The algorithm $\textsc{greedy-random-sampling}$ maintains a set $S$ of $k$-subsets of $[n]$ and for $T$ iterations does the following:
\begin{enumerate}
\item Sample distinct starting vertices $v_1, v_2, \dots, v_{s-1}$ uniformly at random and proceed to sample the remaining vertices $v_s, v_{s+1}, \dots, v_k$ iteratively such that $v_{i+1}$ is chosen uniformly at random from $\textsc{cn}_G(v_1, v_2, \dots, v_i)$ if it is nonempty.
\item If $k$ vertices $\{v_1, v_2, \dots, v_k\}$ are chosen then add $\{v_1, v_2, \dots, v_k\}$ to $S$ if it is not already in $S$.
\end{enumerate}
This algorithm is an extension of the classical greedy algorithm for finding $\log_2 n$ sized cliques in $G(n, 1/2)$ in \cite{karp1976, grimmett1975colouring}, the Metropolis process examined in \cite{jerrum1992large} and the greedy procedure solving $k$-\textsc{clique} on $G(n, c)$ with $c = \Theta\left(n^{-2/(k-1)}\right)$ discussed by Rossman in \cite{rossman2016lower}. These and other natural polynomial time search algorithms fail to find cliques of size $(1 + \epsilon) \log_2 n$ in $G(n, 1/2)$, even though its clique number is approximately $2 \log_2 n$ with high probability \cite{mcdiarmid1984colouring, pittel1982probable}. Our algorithm $\textsc{greedy-random-sampling}$ extends this greedy algorithm to count $k$-cliques in $G(n, c, s)$. In our analysis, we will see a phase transition in the behavior of this algorithm at $k = \tau$ for some $\tau$ smaller than the clique number of $G(n, c, s)$. This is analogous to the breakdown of the natural greedy algorithm at cliques of size $\log_2 n$ on $G(n, 1/2)$.

Before analyzing $\textsc{greedy-random-sampling}$, we state a simple classical lemma counting the number of $k$-cliques in $G(n, c, s)$. This lemma follows from linearity of expectation and Markov's inequality. Its proof is included in Appendix \ref{sec:cliquecounts} for completeness.

\begin{lemma} \label{lem:cliquenumber}
For fixed $\alpha \in (0, 1)$ and $s$, let $\kappa \ge s$ be the largest positive integer satisfying $\alpha \binom{\kappa}{s - 1} < s$. If $G \sim G(n, c, s)$ where $c = O(n^{-\alpha})$, then $\mathbb{E}[|\cl_k(G)|] = \binom{n}{k} c^{\binom{k}{s}}$ and $\omega(G) \le \kappa + 1 + t$ with probability at least $1 - O\left(n^{-\alpha t(1 - s^{-1}) \binom{\kappa + 2}{s - 1}}\right)$ for any fixed nonnegative integer $t$, where the constant in the $O(\cdot)$ notation can depend on $t$.
\end{lemma}

In particular, this implies that the clique number of $G(n, c, s)$ is typically at most $(s! \alpha^{-1} )^{\frac{1}{s-1}} + s$. In the graph case of $s = 2$, this simplifies to $2\alpha^{-1} + 2$. In the next subsection, we give upper bounds on the number of iterations $T$ causing all $k$-cliques in $G$ to end up in $S$ and analyze the runtime of the algorithm. The subsequent subsection improves the runtime of $\textsc{greedy-random-sampling}$ for graphs when $s = 2$ through a matrix multiplication post-processing step. The last subsection gives an alternative deterministic algorithm with a similar performance to $\textsc{greedy-random-sampling}$.

\subsection{Sample Complexity and Runtime of Greedy Random Sampling}

In this section, we analyze the runtime of $\textsc{greedy-random-sampling}$ and give upper bounds on the number of iterations $T$ needed for the algorithm to terminate with $S = \cl_k(G)$. The dynamic set $S$ needs to support search and insertion of $k$-cliques. Consider labelling the vertices of $G$ with elements of $[n]$ and storing the elements of $S$ in a balanced binary search tree sorted according to the lexicographic order on $[n]^k$. Search and insertion can each be carried out in $O(\log |\cl_k(G)|) = O(k \log n)$ time. It follows that each iteration of $\textsc{greedy-random-sampling}$ therefore takes $O(kn + k \log n) = O(n)$ time as long as $k = O(1)$. Outputting $|S|$ in $\textsc{greedy-random-sampling}$ therefore yields a $O(nT)$ time algorithm for $\#(k, s)$-\textsc{clique} on $G(n, c, s)$ that succeeds with high probability.

The following theorem provides upper bounds on the minimum number of iterations $T$ needed for this algorithm to terminate with $S = \cl_k(G)$ and therefore solve $\#(k, s)$-\textsc{clique}. Its proof is deferred to Appendix \ref{sec:grs-analysis}.

\begin{theorem} \label{thm:greedysample}
Let $k$ and $s$ be constants and $c = \Theta(n^{-\alpha})$ for some $\alpha \in (0, 1)$. Let $\tau$ be the largest integer satisfying $\alpha \binom{\tau}{s - 1} < 1$ and suppose that
$$T \ge \left\{ \begin{matrix} 2n^{\tau + 1} c^{\binom{\tau + 1}{s}} (3 \log n)^{(k - \tau) (1 + \epsilon)} & \text{if } k \ge \tau + 1 \\ 2n^{k} c^{\binom{k}{s}} (\log n)^{1 + \epsilon} & \text{if } k < \tau + 1 \end{matrix} \right.$$
for some $\epsilon > 0$. Then $\textsc{greedy-random-sampling}$ run with $T$ iterations terminates with $S = \cl_k(G)$ with probability $1 -n^{-\omega(1)}$ over the random bits of the algorithm $\textsc{greedy-random-sampling}$ and over the choice of random hypergraph $G \sim G(n, c, s)$.
\end{theorem}

Implementing $S$ as a balanced binary search tree and outputting $|S|$ in \textsc{greedy-random-sampling} yields the following algorithmic upper bounds for $\#(k, s)$-\textsc{clique} with inputs sampled from $G(n, c, s)$.

\begin{corollary}
Suppose that $k$ and $s$ are constants and $c = \Theta(n^{-\alpha})$ for some $\alpha \in (0, 1)$. Let $\tau$ be the largest integer satisfying $\alpha \binom{\tau}{s - 1} < 1$. Then it follows that
\begin{enumerate}
\item If $k \ge \tau + 1$, there is an $\tilde{O}\left(n^{\tau+2 - \alpha \binom{\tau + 1}{s}}\right)$ time randomized algorithm solving $\#(k, s)$-\textsc{clique} on inputs sampled from $G(n, c, s)$ with probability at least $1 - n^{-\omega(1)}$.
\item If $k < \tau + 1$, there is an $\tilde{O}\left(n^{k + 1 - \alpha \binom{k}{s}}\right)$ time randomized algorithm solving $\#(k, s)$-\textsc{clique} on inputs sampled from $G(n, c, s)$ with probability at least $1 - n^{-\omega(1)}$.
\end{enumerate}
\end{corollary}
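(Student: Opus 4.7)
The plan is to combine Theorem~\ref{thm:greedysample} with the per-iteration cost analysis of \textsc{greedy-random-sampling} that was established in the paragraph preceding that theorem. Since $k$ is constant, each iteration requires: (i) sampling $v_1, \dots, v_{s-1}$ uniformly from $[n]$, (ii) iteratively computing $\textsc{cn}_G(v_1, \dots, v_i)$ and sampling $v_{i+1}$ from it for $i = s-1, \dots, k-1$, and (iii) attempting to insert $\{v_1, \dots, v_k\}$ into the balanced binary search tree representing $S$. The common-neighbor set at each extension step can be maintained by scanning all $O(n)$ candidate vertices against the $O(1)$ prior selections; search/insertion into the BST costs $O(k \log n) = O(\log n)$ since $|S| \le \binom{n}{k} = n^{O(1)}$. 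Thus the per-iteration cost is $O(n)$, and the total runtime is $O(nT)$.

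For part (1), I apply the first bound of Theorem~\ref{thm:greedysample}, taking $T = 2 n^{\tau+1} c^{\binom{\tau+1}{s}} (\log n)^{3(k-\tau)(1+\epsilon)}$ for an arbitrary fixed $\epsilon > 0$. Substituting $c = \Theta(n^{-\alpha})$ gives $c^{\binom{\tau+1}{s}} = \Theta\bigl( n^{-\alpha \binom{\tau+1}{s}} \bigr)$, so
\[
nT \;=\; \tilde{O}\!\left( n^{\tau + 2 - \alpha \binom{\tau+1}{s}} \right),
\]
where the $\tilde{O}$ absorbs the $(\log n)^{3(k-\tau)(1+\epsilon)}$ factor (valid since $k$ and $\epsilon$ are constants). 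For part (2), I apply the second bound of Theorem~\ref{thm:greedysample} with $T = 2 n^k c^{\binom{k}{s}} (\log n)^{1+\epsilon}$, yielding $nT = \tilde{O}\bigl( n^{k + 1 - \alpha \binom{k}{s}} \bigr)$ by the same substitution. In both cases, Theorem~\ref{thm:greedysample} guarantees that with probability $1 - n^{-\omega(1)}$ over both $G \sim G(n,c,s)$ and the internal randomness of the algorithm, the set $S$ output satisfies $S = \cl_k(G)$, so returning $|S|$ correctly solves $\#(k,s)$-\textsc{clique}.

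There is no real obstacle: the corollary is essentially a bookkeeping step converting Theorem~\ref{thm:greedysample}'s bound on iteration count $T$ into a time bound by multiplying by the per-iteration cost. The only minor care needed is to ensure the $\tilde{O}$ notation correctly swallows the polylogarithmic factors from $T$, and that the failure probability $n^{-\omega(1)}$ is preserved from the theorem. Both are immediate given that $k, s, \alpha, \epsilon$ are all taken to be constants.
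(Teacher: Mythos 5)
Your proof is correct and matches the paper's own (largely implicit) argument: the paper likewise notes that with $S$ stored in a balanced BST each iteration costs $O(n + k\log n) = O(n)$ for constant $k$, so outputting $|S|$ gives an $O(nT)$-time algorithm, and then plugs the two iteration bounds of Theorem~\ref{thm:greedysample} with $c = \Theta(n^{-\alpha})$ into $nT$, absorbing the polylogarithmic factors into $\tilde{O}$. No gaps.
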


By Lemma \ref{lem:cliquenumber}, the hypergraph $G \sim G(n, c, s)$ has clique number $\omega(G) \le \kappa + 2$ with probability $1 - 1/\text{poly}(n)$ where $\kappa \ge s$ is the largest positive integer satisfying $\alpha \binom{\kappa}{s - 1} < s$. In particular, when $k > \kappa + 2$ in the theorem above, the algorithm outputting zero succeeds with probability $1 - 1/\text{poly}(n)$ and $\#(k, s)$-\textsc{clique} is trivial. For there to typically be a nonzero number of $k$-cliques in $G(n, c, s)$, it should hold that $0 < \alpha \le s \binom{k - 1}{s - 1}^{-1}$. In the graph case of $s = 2$, this simplifies to the familiar condition that $0 < \alpha \le \frac{2}{k - 1}$. We also remark that when $k < \tau + 1$, the runtime of this algorithm is an $\tilde{O}(n)$ factor off from the expectation of the quantity being counted, the number of $k$-cliques in $G \sim G(n, c, s)$.

\subsection{Post-Processing with Matrix Multiplication}

In this section, we improve the runtime of $\textsc{greedy-random-sampling}$ as an algorithm for $\#(k, s)$-\textsc{clique} in the graph case of $s = 2$. The improvement comes from the matrix multiplication step of Ne\u{s}et\u{r}il and Poljak from their $O\left(n^{\omega \lfloor k/3 \rfloor + (k \pmod{3})}\right)$ time worst-case algorithm for $\#(k, 2)$-\textsc{clique} \cite{nevsetvril1985complexity}. Our improved runtime for the algorithm $\textsc{greedy-random-sampling}$ is stated in the following theorem.

\begin{theorem} \label{thm:matrixaug}
Suppose that $k > 2$ is a fixed positive integer and $c = \Theta(n^{-\alpha})$ where $0 < \alpha \le \frac{2}{k - 1}$ is also fixed. Then there is a randomized algorithm solving $\#(k, 2)$-\textsc{clique} on inputs sampled from $G(n, c)$ with probability $1 - n^{-\omega(1)}$ that runs in $\tilde{O}\left( n^{\omega \lceil k/3 \rceil + \omega - \omega \alpha \binom{\lceil k/3 \rceil}{2}} \right)$ time.
\end{theorem}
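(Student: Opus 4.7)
The plan is to combine the enumeration guarantee of \textsc{greedy-random-sampling} with the matrix-multiplication-based triangle-counting technique of Ne\u{s}et\u{r}il--Poljak. Set $k' = \lceil k/3 \rceil$ and fix a partition $k = k_1 + k_2 + k_3$ with each $k_i \in \{k'-1, k'\}$.

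First, I would use \textsc{greedy-random-sampling} to enumerate the complete set $C_{k'}$ of $k'$-cliques of $G$ (and, if $k \not\equiv 0 \pmod{3}$, also the set $C_{k'-1}$ of $(k'-1)$-cliques). The assumption $\alpha \le 2/(k-1)$ together with the identity $\tau + 1 > 1/\alpha \ge (k-1)/2$ gives $\lceil k/3 \rceil \le \tau + 1$, so Theorem~\ref{thm:greedysample} applies in one of its two regimes and supplies the enumeration in time $\tilde{O}(n^{k'+1 - \alpha\binom{k'}{2}})$, succeeding with probability $1 - n^{-\omega(1)}$ over both the random bits of the algorithm and the random graph $G \sim G(n,c)$. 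Lemma~\ref{lem:cliquenumber} and a standard second-moment argument give the concentration $|C_{k'}| = \tilde{\Theta}(n^{k' - \alpha\binom{k'}{2}})$.

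Second, I would build a tripartite auxiliary graph $H$ with vertex parts $V_1, V_2, V_3$, where $V_i$ is a copy of $C_{k_i}$, and with an edge between $A \in V_i$ and $B \in V_j$ iff $A \cap B = \emptyset$ and every pair with one endpoint in $A$ and the other in $B$ is an edge of $G$. Each $k$-clique of $G$ yields exactly $k!/(k_1!\,k_2!\,k_3!)$ ordered decompositions into cliques of sizes $(k_1, k_2, k_3)$, and each such decomposition is a distinct tripartite triangle of $H$; hence
$$|\cl_k(G)| \;=\; \frac{k_1!\, k_2!\, k_3!}{k!} \cdot \mathrm{tr}(M_{12} M_{23} M_{31}),$$
where $M_{ij} \in \{0,1\}^{V_i \times V_j}$ is the bipartite adjacency matrix of $H$ between parts $i$ and $j$.

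Third, I would evaluate this trace using fast matrix multiplication. Since each $|V_i| \le |C_{k'}|$, the multiplications cost $\tilde{O}(|C_{k'}|^\omega) = \tilde{O}(n^{\omega k' - \omega\alpha\binom{k'}{2}})$, and building the $M_{ij}$ from $C_{k'}$ and $C_{k'-1}$ costs $\tilde{O}(|C_{k'}|^2 \cdot k^2)$, which is dominated. Since $\omega \ge 1$, both the enumeration time $\tilde{O}(n^{k'+1 - \alpha\binom{k'}{2}})$ and the matrix-multiplication time fit inside the target $\tilde{O}(n^{\omega(k'+1 - \alpha\binom{k'}{2})}) = \tilde{O}(n^{\omega\lceil k/3\rceil + \omega - \omega\alpha\binom{\lceil k/3\rceil}{2}})$, giving the claimed bound. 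The main obstacle I anticipate is the boundary regime where $k' = \tau + 1$: there \textsc{greedy-random-sampling} operates in its slower first branch of Theorem~\ref{thm:greedysample} with a $(\log n)^{O(k-\tau)}$ overhead, and I would need to verify that this overhead, together with the concentration error in $|C_{k'}|$, can be absorbed into the $\tilde O$ without affecting the leading exponent.
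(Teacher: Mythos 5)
Your proposal is correct and takes essentially the same route as the paper: enumerate the $\lceil k/3\rceil$-cliques (and the smaller part sizes) with \textsc{greedy-random-sampling}, using $\tau + 1 \ge \lfloor k/2 \rfloor \ge \lceil k/3\rceil$ to invoke Theorem \ref{thm:greedysample}, and then assemble $k$-cliques by a Ne\v{s}et\v{r}il--Poljak-style fast matrix multiplication, with the clique-set sizes bounded by the iteration count $T$ so that the $\tilde{O}(T^{\omega})$ multiplication cost yields the stated exponent. The only differences are cosmetic: the paper avoids overcounting via a label-ordering condition rather than dividing the ordered tripartite triangle count by $k!/(k_1!\,k_2!\,k_3!)$ as you do (both are valid), and for the runtime you should rely on the deterministic bound $|C_{k'}| \le T$ (or the exponential tail of Theorem \ref{thm:largedev}) rather than a second-moment argument, which by itself would not give failure probability $n^{-\omega(1)}$.
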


\begin{proof}
Label the vertices of an input graph $G \sim G(n, c)$ with the elements of $[n]$. Consider the following application of $\textsc{greedy-random-sampling}$ with post-processing:
\begin{enumerate}
\item Run $\textsc{greedy-random-sampling}$ to compute the two sets of cliques $S_1 = \cl_{\lfloor k/3 \rfloor}(G)$ and $S_2 = \cl_{\lceil k/3 \rceil}(G)$ with the number of iterations $T$ as given in Theorem \ref{thm:greedysample}.
\item Construct the matrix $M_1 \in \{0, 1\}^{|S_1| \times |S_1|}$ with rows and columns indexed by the elements of $S_1$ such that $(M_1)_{A, B} = 1$ for $A, B \in S_1$ if $A \cup B$ forms a clique of $G$ and all labels in $A$ are strictly less than all labels in $B$.
\item Construct the matrix $M_2 \in \{0, 1\}^{|S_1| \times |S_2|}$ with rows indexed by the elements of $S_1$ and columns indexed by the elements of $S_2$ such that $(M_2)_{A, B} = 1$ for $A \in S_1$ and $B \in S_2$ under the same rule that $A \cup B$ forms a clique of $G$ and all labels in $A$ are strictly less than all labels in $B$. Construct the matrix $M_3$ with rows and columns indexed by $S_2$ analogously.
\item Compute the matrix product
$$M_P = \left\{ \begin{matrix} M_1^2 &\text{if } k \equiv 0 \pmod{3} \\ M_1 M_2 &\text{if } k \equiv 1 \pmod{3} \\ M_2 M_3 &\text{if } k \equiv 2 \pmod{3} \end{matrix} \right.$$
\item Output the sum of entries
$$\sum_{(A, B) \in \mathcal{S}} (M_P)_{A, B}$$
where $\mathcal{S}$ is the support of $M_1$ if $k \equiv 0 \pmod{3}$ and $\mathcal{S}$ is the support of $M_2$ if $k \not \equiv 0 \pmod{3}$.
\end{enumerate}
We will show that this algorithm solves $\#(k, 2)$-\textsc{clique} with probability $1 - n^{-\omega(1)}$ when $k \equiv 1 \pmod{3}$. The cases when $k \equiv 0, 2 \pmod{3}$ follow from a nearly identical argument. By Theorem \ref{thm:greedysample}, the first step applying $\textsc{greedy-random-sampling}$ succeeds with probability $1 - n^{-\omega(1)}$. Note that $(M_P)_{A, B}$ counts the number of $\lfloor k/3 \rfloor$-cliques $C$ in $G$ such that the labels of $C$ are strictly greater than those of $A$ and less than those of $B$ and such that $A \cup C$ and $C \cup B$ are both cliques. If it further holds that $(M_2)_{A, B} = 1$, then $A \cup B$ is a clique and $A \cup B \cup C$ is also clique. Therefore the sum output by the algorithm exactly counts the number of triples $(A, B, C)$ such that $A \cup B \cup C$ is a clique, $|A| = |C| = \lfloor k/3 \rfloor$, $|B| = \lceil k/3 \rceil$ and the labels of $C$ are greater than those of $A$ and less than those of $B$. Observe that any clique $\mathcal{C} \in \cl_k(G)$ is counted in this sum exactly once by the triple $(A, B, C)$ where $A$ consists of the lowest $\lfloor k/3 \rfloor$ labels in $\mathcal{C}$, $B$ consists of the highest $\lceil k/3 \rceil$ labels in $\mathcal{C}$ and $C$ contains the remaining vertices of $\mathcal{C}$. Therefore this algorithm solves $\#(k, 2)$-\textsc{clique} as long as Step 1 succeeds.

It suffices to analyze the additional runtime incurred by this post-processing. Observe that the number of cliques output by a call to $\text{greedy-random-sampling}$ with $T$ iterations is at most $T$. Also note that if $\alpha \le \frac{2}{k - 1}$, then $\tau \ge \lfloor \frac{k}{2} \rfloor - 1$. If $k \ge 3$, then it follows that $\tau +1 \ge \lfloor \frac{k}{2} \rfloor \ge \lceil \frac{k}{3} \rceil$. It follows by Theorem \ref{thm:greedysample} that $\max\{ |S_1|, |S_2| \} = \tilde{O}\left( n^{\lceil k/3 \rceil + 1 - \alpha \binom{\lceil k/3 \rceil}{2}} \right)$. Note that computing the matrix $M_P$ takes $\tilde{O}\left( \max\{|S_1|, |S_2|\}^\omega \right) = \tilde{O}\left( n^{\omega \lceil k/3 \rceil + \omega - \omega \alpha \binom{\lceil k/3 \rceil}{2}} \right)$ time. Now observe that all other steps of the algorithm run in $\tilde{O}\left( n^{2\lceil k/3 \rceil - 2\alpha \binom{\lceil k/3 \rceil}{2}} \right)$ time, which completes the proof of the theorem since the matrix multiplication constant satisfies $\omega \ge 2$.
\end{proof}

We remark that for simplicity, we have ignored minor improvements in the runtime that can be achieved by more carefully analyzing Step 4 in terms of rectangular matrix multiplication constants if $k \neq 0 \pmod{3}$. Note that the proof above implicitly used a weak large deviations bound on $|\cl_k(G)|$. More precisely, it used the fact that if $\textsc{greedy-random-sampling}$ with $T$ iterations succeeds, then $|\cl_k(G)| \le T$. Theorem \ref{thm:greedysample} thus implies that $|\cl_k(G)|$ is upper bounded by the minimal settings of $T$ in the theorem statement with probability $1 - n^{-\omega(1)}$ over $G \sim G(n, c, s)$.

When $k \le \tau + 1$, these upper bounds are a $\text{polylog}(n)$ factor from the expectation of $|\cl_k(G)|$. While this was sufficient in the proof of Theorem \ref{thm:matrixaug}, stronger upper bounds will be needed in the next subsection to analyze our deterministic iterative algorithm. The upper tails of $|\cl_k(G)|$ and more generally of the counts of small subhypergraphs in $G(n, c, s)$ have been studied extensively in the literature. We refer to \cite{vu2001large, janson2002infamous, janson2004upper, demarco2012tight} for a survey of the area and recent results. Given a hypergraph $H$, let $N(n, m, H)$ denote the largest number of copies of $H$ that can be constructed in an $s$-uniform hypergraph with at most $n$ vertices and $m$ hyperedges. Define the quantity
$$M_H(n, c) = \max \left\{ m \le \binom{n}{s} : N(n, m, H') \le n^{|V(H')|} c^{|E(H')|} \text{ for all } H' \subseteq H \right\}$$
The following large deviations result from \cite{dudek2010subhypergraph} generalizes a graph large deviations bound from \cite{janson2004upper} to hypergraphs to obtain the following result.

\begin{theorem}[Theorem 4.1 from \cite{dudek2010subhypergraph}] \label{thm:largedev}
For every $s$-uniform hypergraph $H$ and every fixed $\epsilon > 0$, there is a constant $C(\epsilon, H)$ such that for all $n \ge |V(H)|$ and $c \in (0, 1)$, it holds that
$$\mathbb{P}\left[ X_H \ge (1 + \epsilon) \mathbb{E}[X_H] \right] \le \exp\left( - C(\epsilon, H) \cdot M_H(n, c) \right)$$
where $X_H$ is the number of copies of $H$ in $G \sim G(n, c, s)$.
\end{theorem}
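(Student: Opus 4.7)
The plan is to adapt the deletion/entropy method of Janson, Oleszkiewicz and Ruci\'nski for the graph case to $s$-uniform hypergraphs, which is exactly the route taken in \cite{dudek2010subhypergraph}.

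First, I would unpack the random variable $X_H$ as a sum of indicators over all labelled copies of $H$ in the complete $s$-uniform hypergraph $K_n^{(s)}$, so that $\EE[X_H] = \Theta_H(n^{|V(H)|} c^{|E(H)|})$ by linearity. The goal is to bound the upper tail $\PP[X_H \ge (1+\epsilon)\EE[X_H]]$. Since $X_H$ is a low-degree polynomial in the independent Bernoulli edge indicators, a direct Chernoff bound is too weak; the issue is that copies of $H$ are not independent, so their variance is dominated by pairs sharing subhypergraphs $H' \seq H$.

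Second, I would invoke the deletion lemma: if $X_H(G) \ge (1+\epsilon)\EE[X_H]$, then there exists a set $D \seq E(G)$ of hyperedges whose removal destroys at least $(\epsilon/2)\EE[X_H]$ copies of $H$, with $|D|$ as small as possible. Because every destroyed copy of $H$ intersects $D$ in some subhypergraph $H' \seq H$, the number of destroyed copies is at most $\sum_{H' \seq H} N(n, |D|, H') \cdot (\text{extension factors})$. Solving $\sum_{H' \seq H} N(n, |D|, H') \gtrsim \EE[X_H]$ for $|D|$ and using the very definition of $M_H(n,c)$ (which is exactly the maximum $m$ at which $N(n, m, H') \le n^{|V(H')|} c^{|E(H')|}$ holds uniformly in $H'$) gives $|D| \gtrsim M_H(n,c)$.

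Third, I would combine this with a Chernoff-type bound on $|E(G)|$ restricted to any fixed candidate set $D$. Summing over possible ``cores'' $D$ with $|D| = \Theta(M_H(n,c))$ via a union bound, and using that each such $D$ is present with probability at most $c^{|D|}$, one obtains an exponential decay at rate $M_H(n,c)$, which yields the claimed bound $\exp(-C(\epsilon, H) \cdot M_H(n,c))$.

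The main obstacle will be making the deletion step quantitatively sharp so that the exponent is truly $M_H(n,c)$ and not some weaker quantity; in the graph case this is the technical heart of \cite{janson2004upper} and relies on a careful entropy/moment argument. Extending this to $s$-uniform hypergraphs requires a hypergraph Kruskal--Katona-type control of $N(n, m, H')$ for every $H' \seq H$, and a more delicate bookkeeping of overlap patterns, since the intersections of two copies of $H$ can be arbitrary sub-hypergraphs rather than subgraphs. This uniform control across all $H' \seq H$ is precisely what is built into the definition of $M_H(n,c)$, and verifying that the deletion argument saturates this definition is the delicate step.
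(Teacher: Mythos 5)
First, a point of comparison: the paper does not prove this statement at all. It is imported verbatim as Theorem 4.1 of \cite{dudek2010subhypergraph} (the hypergraph generalization of the graph upper-tail bound of \cite{janson2004upper}), so there is no internal proof to measure your sketch against; it has to stand on its own as a proof of the cited result, and as such it has a genuine gap.

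Your high-level direction (extend the Janson--Oleszkiewicz--Ruci\'nski upper-tail machinery to $s$-uniform hypergraphs) is the right one, but the third step of your plan fails quantitatively. Union-bounding over candidate cores $D$ with $|D| = m = \Theta(M_H(n,c))$ and charging each with probability $c^{m}$ gives roughly $\binom{\binom{n}{s}}{m} c^m \approx (e n^s c / m)^m$, which is small only when $m \gtrsim n^s c$. However, taking $H'$ to be a single hyperedge in the definition of $M_H(n,c)$ already forces $M_H(n,c) = O(n^s c)$, and typically $M_H(n,c) \ll n^s c$ (for graphs, $M_{K_3}(n,p) \asymp n^2 p^2$), so the entropy cost of choosing $D$ swamps the $c^{|D|}$ factor and the union bound is vacuous. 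Relatedly, in your deletion step the number of copies of $H$ destroyed by removing $D$ is bounded via extension counts of sub-copies $H' \subseteq H$ sitting inside $D$; those extension counts are themselves random quantities whose upper tails are exactly what is being proved, so as written the argument is circular. The actual proof in \cite{janson2004upper}, and its hypergraph extension in \cite{dudek2010subhypergraph}, avoids both problems with a moment argument: one bounds $\EE[X_H^k]$ for $k \asymp M_H(n,c)$ by classifying $k$-tuples of copies according to the number $m$ of hyperedges in their union and using $N(n,m,H')$ to count them (this is precisely where the uniform condition over all $H' \subseteq H$ in the definition of $M_H$ is used), and then applies Markov's inequality to $X_H^k$. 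Core/deletion arguments of the type you describe can be pushed through for special $H$ such as cliques (cf.\ \cite{demarco2012tight}), but they require union-bounding only over structured cores and naturally produce extra $\log(1/c)$ factors, so they do not yield the stated bound in this generality.
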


Proposition 4.3 in \cite{dudek2010subhypergraph} shows that if $H$ is a $d$-regular $s$-uniform hypergraph and $c \ge n^{-s/d}$ then $M_H(n, c) = \Theta(n^s c^d)$. This implies that
\begin{equation}
    \mathbb{P}\left[ |\cl_k(G)| \ge (1 + \epsilon) \binom{n}{k} c^{\binom{k}{s}} \right] \le \exp\left(-C'(\epsilon,s,k) \cdot n^s c^{\binom{k - 1}{s - 1}} \right) \label{eqn:clique-conc-bound}
\end{equation}
as long as $c \ge n^{-s!(k - s)!/(k - 1)!}$. This provides strong bounds on the upper tails of $|\cl_k(G)|$ that will be useful in the next subsection.

\subsection{Deterministic Iterative Algorithm for Counting in $G(n, c, s)$}

In this section, we present an alternative deterministic algorithm $\textsc{it-gen-cliques}$ achieving a similar runtime to $\textsc{greedy-random-sampling}$. Although they have very different analyses, the algorithm $\textsc{it-gen-cliques}$ can be viewed as a deterministic analogue of $\textsc{greedy-random-sampling}$. Both are constructing cliques one vertex at a time. The algorithm $\textsc{it-gen-cliques}$ takes in cutoffs $C_{s-1}, C_s, \dots, C_k$ and generates sets $S_{s-1}, S_s, \dots, S_k$ as follows:
\begin{enumerate}
\item Initialize $S_{s - 1}$ to be the set of all $(s - 1)$-subsets of $[n]$.
\item Given the set $S_i$, for each vertex $v \in [n]$, iterate through all subsets $A \in S_i$ and add $A \cup \{v \}$ to $S_{i + 1}$ if $A \cup \{v \}$ is a clique and $v$ is larger than the labels of all of the vertices in $A$. Stop if ever $|S_{i+1}| \ge C_{i+1}$.
\item Stop once $S_k$ has been generated and output $S_k$.
\end{enumerate}
Suppose that $C_t$ are chosen to be any high probability upper bounds on the number of $t$-cliques in $G \sim G(n, c, s)$ such as the bounds in Theorem \ref{thm:largedev}. Then we have the following guarantees for the algorithm $\textsc{it-gen-cliques}$.

\begin{theorem}
Suppose that $k$ and $s$ are constants and $c = \Theta(n^{-\alpha})$ for some $\alpha \in (0, 1)$. Let $\tau$ and $\kappa$ be the largest integers satisfying $\alpha \binom{\tau}{s - 1} < 1$ and $\alpha \binom{\kappa}{s - 1} < s$, and let $C_t = 2n^t c^{\binom{t}{s}}$ for each $s \le t \le k$. Then $\textsc{it-gen-cliques}$ with the cutoffs $C_t$ outputs $S_k = \cl_k(G)$ with probability $1 - n^{-\omega(1)}$ where
\begin{enumerate}
\item The runtime of $\textsc{it-gen-cliques}$ is $O\left(n^{\tau+2 - \alpha \binom{\tau + 1}{s}}\right)$ if $\tau + 2 \le k \le \kappa + 1$.
\item The runtime of $\textsc{it-gen-cliques}$ is $O\left(n^{k - \alpha \binom{k - 1}{s}}\right)$ if $k < \tau + 2$.
\end{enumerate}
\end{theorem}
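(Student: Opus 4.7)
The proof splits naturally into a correctness argument and a runtime calculation. I would first show by induction on $i$ that, conditional on no cutoff being triggered, $S_i = \cl_i(G)$ exactly. The base case $S_{s-1}$ is the set of all $(s-1)$-subsets of $[n]$ by construction. Inductively, extending every $A \in S_i = \cl_i(G)$ by every $v > \max A$ with $A \cup \{v\}$ a clique enumerates each $(i+1)$-clique of $G$ precisely once (indexed by its largest vertex). So correctness reduces to showing that $|\cl_t(G)| < C_t = 2 n^t c^{\binom{t}{s}}$ holds simultaneously for all $s \le t \le k$ with probability $1 - n^{-\omega(1)}$.

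To establish this tail bound, I would invoke Theorem \ref{thm:largedev} together with the observation from Proposition~4.3 of \cite{dudek2010subhypergraph}, quoted in the preceding subsection, that for the $\binom{t-1}{s-1}$-regular $s$-uniform clique hypergraph $K_t^{(s)}$ and $c \ge n^{-s/\binom{t-1}{s-1}}$ one has $M_{K_t^{(s)}}(n,c) = \Theta(n^s c^{\binom{t-1}{s-1}})$. The implicit restriction $k \le \kappa + 1$ (i.e.\ $\alpha\binom{k-1}{s-1} \le s$) ensures this regime applies for every $t \le k$, so applying Theorem \ref{thm:largedev} with constant $\epsilon$ chosen so that $(1+\epsilon)\binom{n}{t}c^{\binom{t}{s}} \le C_t$ yields
$$\PP\!\left[ |\cl_t(G)| \ge C_t \right] \le \exp\!\left( -\Omega\!\left(n^{s - \alpha \binom{t-1}{s-1}}\right) \right) = n^{-\omega(1)}.$$
A union bound over the $O(1)$ values of $t \in \{s,\dots,k\}$ then completes the correctness argument.

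For the runtime, constructing $S_{i+1}$ from $S_i$ touches at most $n \cdot |S_i| \le n \cdot C_i$ pairs $(v, A)$, and for each pair verifying that $A \cup \{v\}$ is a clique requires checking only the $\binom{i}{s-1} = O(1)$ new hyperedges containing $v$ (since $k, s$ are constant). Hence stage $i$ runs in time $O(n \cdot C_i) = O(n^{f(i)})$ where $f(i) \triangleq i + 1 - \alpha \binom{i}{s}$, and the total runtime (including the $O(n^{s-1})$ initialization of $S_{s-1}$) is governed by $\max_{s-1 \le i \le k-1} f(i)$. Observing that $f(i+1) - f(i) = 1 - \alpha \binom{i}{s-1}$, the function $f$ is strictly increasing for $i \le \tau$ and nonincreasing for $i \ge \tau + 1$ by the definition of $\tau$. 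The maximum is therefore attained at $i = \tau + 1$ when $k - 1 \ge \tau + 1$, giving the exponent $\tau + 2 - \alpha \binom{\tau+1}{s}$ of the first case, and at $i = k - 1$ otherwise, giving $k - \alpha \binom{k-1}{s}$ and matching the second case.

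The main subtlety is pinning down when the upper-tail bound from Theorem \ref{thm:largedev} actually yields probability $1 - n^{-\omega(1)}$: this requires both that $c$ be dense enough for the closed form $M_H(n,c) = \Theta(n^s c^d)$ to apply and that the resulting exponent $s - \alpha \binom{t-1}{s-1}$ be strictly positive for every $t \le k$. Both conditions reduce to $k \le \kappa + 1$, which is the regime the paper already restricts to throughout, so this piece is essentially packaged inside the cited large-deviations results. Everything else is the straightforward induction and the one-variable optimization of $f$ above.
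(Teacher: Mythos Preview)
Your proof is correct and follows essentially the same approach as the paper: the induction establishing $S_t = \cl_t(G)$ conditioned on no cutoff firing, the appeal to Theorem \ref{thm:largedev} (via the $M_H(n,c) = \Theta(n^s c^d)$ bound) together with a union bound over $t$, and the runtime optimization of $n C_t$ using $\log_n(C_{t+1}/C_t) = 1 - \alpha\binom{t}{s-1}$ are all exactly what the paper does. Your additional remark isolating the condition $k \le \kappa+1$ needed for the upper-tail exponent to be positive is a helpful clarification that the paper leaves implicit.
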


\begin{proof}
Suppose that $k \le \kappa + 1$. We first show that $S_k = \cl_k(G)$ with probability $1 - n^{-\omega(1)}$ in the algorithm $\textsc{it-gen-cliques}$. By a union bound and (\ref{eqn:clique-conc-bound}), it follows that $|\cl_t(G)| < C_t$ for each $s \le t \le k$ with probability at least $1 - (k - s + 1) n^{-\omega(1)}$ since $k \le \kappa + 1$. The following simple induction argument shows that $S_t = \cl_t(G)$ for each $s - 1 \le t \le k$ conditioned on this event. Note that $\cl_{s - 1}(G)$ is by definition the set of all $(s - 1)$-subsets of $[n]$ and thus $S_{s - 1} = \cl_{s - 1}(G)$. If $S_t = \cl_t(G)$, then each $(t + 1)$-clique $\mathcal{C}$ of $G$ is added exactly once to $S_{t + 1}$ as $A \cup \{ v \}$ where $v$ is the vertex of $\mathcal{C}$ with the largest label and $A = \mathcal{C} \backslash \{v \} \in \cl_t(G)$ are the remaining vertices. Now note that the runtime of $\textsc{it-gen-cliques}$ is
$$O\left( \sum_{t = s - 1}^{k - 1} nC_t \right) = O\left( \max_{s - 1 \le t \le k - 1} \left( nC_t \right) \right) = \left\{ \begin{matrix} O\left(n^{\tau+2 - \alpha \binom{\tau + 1}{s}}\right) & \text{if } \tau + 2 \le k \le \kappa + 1 \\ O\left(n^{k - \alpha \binom{k - 1}{s}}\right) & \text{if } k < \tau + 2 \end{matrix} \right.$$
since $k = O(1)$. To see the second inequality, note that $\log_n (C_{t + 1}/C_t) = 1 - \alpha \binom{t}{s - 1} + O(1/\log n)$. This implies that $C_{t + 1} = \Omega(C_t)$ if $t \le \tau$ and $C_t = O(C_{\tau + 1})$ for all $s \le t \le k$. This completes the proof of the theorem.
\end{proof}

We remark that in the case of $k < \tau + 1$, $\textsc{it-gen-cliques}$ attains a small runtime improvement over $\textsc{greedy-random-sampling}$. However, the algorithm $\textsc{greedy-random-sampling}$ can be modified to match this runtime up to a $\text{polylog}(n)$ factor by instead generating the $(k - 1)$-cliques of $G$ and applying the last step of $\textsc{it-gen-cliques}$ to generate the $k$-cliques of $G$. We also remark that $\textsc{it-gen-cliques}$ can also be used instead of $\textsc{greedy-random-sampling}$ in Step 1 of the algorithm in Theorem \ref{thm:matrixaug}, yielding a nearly identical runtime of $\tilde{O}\left( n^{\omega \lceil k/3 \rceil - \omega \alpha \binom{\lceil k/3 \rceil - 1}{2}} \right)$ for $\#(k, 2)$-\textsc{clique} on inputs sampled from $G(n, c)$.

\section{Extensions and Open Problems}\label{sec:extensions}
\label{sec:openproblems}

In this section, we outline several extensions of our methods and problems left open after our work.

\paragraph{Improved Average-Case Lower Bounds} A natural question is if tight average-case lower bounds for $\#(k, s)$\textsc{-clique} can be shown above the $k$-clique percolation threshold when $s \ge 3$ and if the constant $C$ in the exponent of our lower bounds for the graph case of $s = 2$ can be improved from $1$ to $\omega/9$.

\paragraph{Raising Error Tolerance for Average-Case Hardness}

A natural question is if the error tolerance of the worst-case to average-case reductions in Theorems \ref{thm:averagecasehardnesscounting} and \ref{thm:averagecasehardnessparity} can be increased. We remarked in the introduction that for certain choices of $k$, the error tolerance cannot be significantly increased -- for example, when $k = 3 \log_2 n$, the trivial algorithm that outputs $0$ on any graph has subpolynomial error on graphs drawn from $G(n,1/2)$, but is useless for reductions from worst-case graphs. Nevertheless, for other regimes of $k$, such as when $k = O(1)$ is constant, counting $k$-cliques with error probability less than $1/4$ on graphs drawn from $G(n,1/2)$ appears to be nontrivial. It is an open problem to prove hardness for such a regime. In general, one could hope to understand the tight tradeoffs between computation time, error tolerance, $k$, $c$, and $s$ for $k$-clique-counting on $G(n,c,s)$.

\paragraph{Hardness of Approximating Clique Counts}
Another interesting question is if it is hard to approximate the $k$-clique counts, within some additive error $\epsilon$, of hypergraphs drawn from $G(n,c,s)$. Since the number of $k$-cliques in $G(n,c,s)$ concentrates around the mean $\mu \approx c^{\binom{k}{s}} n^k$ with standard deviation $\sigma$, one would have to choose $\epsilon \ll \sigma$ for approximation to be hard.

\paragraph{Inhomogeneous Erd\H{o}s-R\'enyi Hypergraphs}
Consider an inhomogeneous Erd\H{o}s-R\'enyi hypergraph model, where each hyperedge $e$ is independently chosen to be in the hypergraph with probability $c(e)$. 
Also suppose that we may bound $c(e)$ uniformly away from $0$ and $1$ (that is, $c(e) \in [c, 1-c]$ for all possible hyperedges $e$ and for some constant $c$). We would like to prove that \textsc{\#$(k,s)$-clique} and \textsc{Parity-$(k,s)$-clique} are hard on average for inhomogeneous Erd\H{o}s-R\'enyi hypergraphs.
Unfortunately, this does not follow directly from our proof techniques because step 5 in the proof of Theorems \ref{thm:averagecasehardnesscounting} and \ref{thm:averagecasehardnessparity} breaks down due to the inhomogeneity of the model. Nevertheless, steps 1-4 still hold, and therefore we can show that \textsc{\#$(k,s)$-clique} and \textsc{Parity-$(k,s)$-clique} are average-case hard for $k$-partite inhomogeneous Erd\H{o}s-R\'enyi hypergraphs -- when only the edges $e$ that respect the $k$-partition are chosen to be in the hypergraph with inhomogeneous edge-dependent probability $c(e) \in [c, 1-c]$.

\section*{Acknowledgements}
We thank Oded Goldreich and the anonymous reviewers
for helpful feedback that greatly improved the exposition. We also thank Frederic Koehler, Dheeraj Nagaraj, and Austin Stromme for inspiring discussions on related topics.

\printbibliography

\begin{appendix}

\section{Reduction from \textsc{Decide-$(k,s)$-clique} to \textsc{Parity-$(k,s)$-clique}}\label{sec:decidetoparityreduction}
The following is a precise statement and proof of the reduction from \textsc{Decide-$(k,s)$-clique} to \textsc{Parity-$(k,s)$-clique} claimed in Section \ref{sec:worstcasehardnessconjectures}.

\begin{lemma}\label{lem:decidetoparityreduction}
Given an algorithm $A$ for \textsc{Parity-$(k,s)$-clique} with error probability $< 1/3$ on any $s$-uniform hypergraph $G$, there is an algorithm $B$ that runs in time $O(k 2^k |A|)$ and solves \textsc{Decide-$(k,s)$-clique} with error $< 1/3$ on any $s$-uniform hypergraph $G$.
\end{lemma}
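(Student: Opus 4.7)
The plan is to reduce existence of a $k$-clique in $G$ to the nontrivial parity of clique counts on a random induced subhypergraph, and to lower-bound the latter via a Schwartz--Zippel bound over $\FF_2$. Given an $s$-uniform hypergraph $G$ on vertex set $V$, consider the polynomial
$$P_G(x) \;=\; \sum_{C \in \cl_k(G)} \prod_{v \in C} x_v \;\in\; \FF_2[x_v : v \in V],$$
which is multilinear of degree at most $k$. Distinct $k$-cliques produce distinct monomials, so $P_G$ is the zero polynomial over $\FF_2$ if and only if $\cl_k(G) = \emptyset$. For any $V' \subseteq V$ we have $P_G(\mathbf{1}_{V'}) \equiv |\cl_k(G[V'])| \pmod{2}$, so a single call to the parity oracle $A$ on the induced subhypergraph $G[V']$ evaluates $P_G(\mathbf{1}_{V'})$.

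The key fact is the following Schwartz--Zippel-type bound over $\FF_2$ for multilinear polynomials: if $P \in \FF_2[x_1, \ldots, x_n]$ is a nonzero multilinear polynomial of degree at most $k$, then $\PP_{x \sim \{0,1\}^n}[P(x) = 1] \geq 2^{-k}$. Proof sketch (induction on $n$): write $P = P_0 + x_n P_1$ with $P_0, P_1 \in \FF_2[x_1, \ldots, x_{n-1}]$. If $P_1 \equiv 0$, the claim reduces to the same statement for the nonzero polynomial $P_0$ of degree at most $k$. Otherwise $P_1$ is nonzero of degree at most $k-1$, so by induction the event $\{P_1(x_1, \ldots, x_{n-1}) = 1\}$ has probability at least $2^{-(k-1)}$; conditional on this event, $P = P_0 + x_n$ equals $1$ with probability exactly $1/2$ over the independent choice of $x_n$.

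Given this estimate, the algorithm $B$ proceeds as follows. Repeat $T = \Theta(2^k)$ trials: in each trial, sample $V' \subseteq V$ by including every vertex independently with probability $1/2$, then query $A$ on input $G[V']$ a total of $m = \Theta(k)$ times and let $\pi$ be the majority vote. Output ``yes'' if any trial yields $\pi = 1$, and ``no'' otherwise. If $G$ has no $k$-clique, the true parity is $0$ in every trial, and by Chernoff each boosted estimate returns the wrong answer with probability $2^{-\Omega(k)}$, so a union bound bounds the chance of a spurious ``yes'' by $T \cdot 2^{-\Omega(k)} \leq 1/6$ once $m$ is a sufficiently large constant multiple of $k$. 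If $G$ does contain a $k$-clique, then by the Schwartz--Zippel bound each trial independently observes true parity $1$ with probability at least $2^{-k}$, in which case the boosted oracle reports $1$ except with probability $2^{-\Omega(k)}$; thus the probability every trial fails is at most $(1 - \Omega(2^{-k}))^T \leq 1/6$ for $T$ a sufficiently large constant multiple of $2^k$. Altogether $B$ errs with probability $< 1/3$ and issues $T \cdot m = O(k \cdot 2^k)$ oracle calls, each of cost $|A|$, which matches the claimed runtime (the $O(n^s)$ cost of preparing $G[V']$ is absorbed into $|A|$ since $A$ must at least read its input).

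The only nontrivial ingredient is the Schwartz--Zippel bound over $\FF_2$ for multilinear polynomials: it is precisely what converts the degree-$k$ structure of the clique-counting polynomial into the concrete per-trial success probability $2^{-k}$, and is therefore the source of the $2^k$ factor in the overhead. Everything else (the random subhypergraph trick, majority amplification, and the two union bounds) is standard.
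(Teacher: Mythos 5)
Your proposal is correct and follows essentially the same route as the paper: reduce decision to testing whether the $\FF_2$ clique-indicator polynomial $P_G$ is identically zero, use the fact that a nonzero degree-$k$ polynomial over $\FF_2$ is nonzero on at least a $2^{-k}$ fraction of $\{0,1\}$-inputs, and evaluate it at $\Theta(2^k)$ random vertex subsets with $\Theta(k)$-fold majority boosting of the parity oracle. The only cosmetic difference is that you prove the $2^{-k}$ bound by an elementary induction on the number of variables, whereas the paper cites the minimum distance of the Reed--Muller code $RM(k,m)$; these are the same fact.
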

\begin{proof}
Let $\mathrm{cl}_k(G)$ denote the set of $k$-cliques in hypergraph $G = (V,E)$. Consider the polynomial $$P_G(x_V) = \sum_{S \in \mathrm{cl}_k(G)} \prod_{v \in S} x_v \pmod{2},$$ over the finite field $\FF_2$. If $G$ has a $k$-clique at vertices $S \subset V$, then $P_G$ is nonzero, because $P_G(1_S) = 1$. If $G$ has no $k$-clique, then $P_G$ is zero. Therefore, deciding whether $G$ has a $k$-clique reduces to testing whether or not $P_G$ is identically zero. $P_G$ is of degree at most $k$, so if $P_G$ is nonzero on at least one input, then it is nonzero on at least a $2^{-k}$ fraction of inputs. One way to see this is that if we evaluate $P_G$ at all points $a \in \{0,1\}^m$, the result is a non-zero Reed-Muller codeword in $RM(k,m)$. Since the distance of the $RM(k,m)$ code is $2^{m-k}$, and the block-length is $2^m$, the claim follows \cite{muller1954application}. We therefore evaluate $P_G$ at $c \cdot 2^k$ independent random inputs for some large enough $c > 0$, accept if any of the evaluations returns 1, and reject if all of the evaluations return 0. Each evaluation corresponds to calculating \textsc{Parity-$(k,s)$-clique} on a hypergraph $G'$ formed from $G$ by removing each vertex independently with probability $1/2$. As usual, we boost the error of $A$ by running the algorithm $O(k)$ times for each evaluation, and using the majority vote.
\end{proof}

\section{Proof of Lemma \ref{lem:reedsolomon}}\label{app:reedsolomonproof}
We restate and prove Lemma \ref{lem:reedsolomon}.
\begin{lemma}[Theorem 4 of \cite{gemmell1992highly}] Let $\FF$ be a finite field with $|\FF| = q$ elements.
Let $N > 0$ and $1 \leq D < q/12$. Let $f : \FF^N \to \FF$ be a polynomial of degree at most $D$. If there is an algorithm $A$ running in time $T(A,N)$ such that $$\PP_{x \sim \mathrm{Unif}\left[\FF^N\right]} [A(x) = f(x)] > 2/3,$$ then there is an algorithm $B$ running in time $O((N+D^2)D \log^2 q + T(A,N) \cdot D)$ such that for {\em any} $x \in \FF^N$, it holds that $\PP[B(x) = f(x)] > 2/3$.
\end{lemma}
\begin{proof}
Our proof of the lemma is based off of the proof that appears in \cite{ball2017average}. The only difference is that in \cite{ball2017average}, the lemma is stated only for finite fields whose size is a prime. Suppose we wish to calculate $f(x)$ for $x \in \FF^N$. In order to do this, choose $y_1,y_2 \stackrel{i.i.d}{\sim} \mathrm{Unif}[\FF^N]$, and define the polynomial $g(t) = x + t y_1 + t^2 y_2$ where $t \in \FF$. We use $A$ to evaluate $f(g(t))$ at $m$ different values $t_1,\ldots,t_m \in \FF$. This takes $O(m N \log^2 q + m \cdot T(A,N))$ time. Suppose without loss of generality that $D \geq 9$. Since $g(t_i)$ and $g(t_j)$ are pairwise independent and uniform in $\FF^N$ for any distinct $t_i,t_j \neq 0$, by the second-moment method, with probability $> 2/3$, at most $(m-2D)/2$ of our evaluations of $f(g(t))$ will be incorrect if we take $m = 12D$. Thus, since $f(g(t))$ is a univariate polynomial of degree at most $2D$, we may use Berlekamp-Welch to recover $f(g(0)) = f(x)$ in $O(m^3)$ arithmetic operations over $\FF$, each of which takes $O(\log^2 q)$ time. 
\end{proof}

\section{Tightness of Bounds in Section~\ref{sec:randombinaryexpansions}}\label{app:binarytightness}
In this appendix, we briefly discuss the tightness of the bounds on $t$ in Lemma~\ref{lem:tvislowfourier} and how the case of $c = 1/2$ differs from $c \neq 1/2$. Note that if $q_i = 1/2$ for each $i$, then $\latestedits{Y = }\sum_{i = 0}^t \latestedits{Z^{(i)}} \cdot 2^i$ is uniformly distributed on $\{0, 1, \dots, 2^{t + 1} - 1 \}$. It follows that \latestedits{the random variable $X \in \FF_p$ defined by $X \equiv Y \pmod{p}$ satisfies}
$$\dtv\left( \cL(\latestedits{X}), \text{Unif}[\mathbb{F}_p] \right) = \sum_{x \in \mathbb{F}_p} \left| p^{-1} - \mathbb{P}[\latestedits{X} = x] \right|_+ = \frac{a(p - a)}{2^{t+1}p} \le \frac{p}{2^{t+1}}$$
if $0 \le a \le p - 1$ is such that $2^{t+1} \equiv a \pmod{p}$. Here, $|\cdot |_+$ denotes $| x |_+ = \max(x, 0)$. Therefore $\latestedits{X}$ is within total variation of $1/\text{poly}(p)$ of $\text{Unif}[\mathbb{F}_p]$ if $t = \Omega(\log p)$. However, note that for $c$ constant and $\epsilon = 1/\text{poly}(p)$, our lemma requires that $t = \Omega(\log^2 p)$. This raises the question: is the additional factor of $\log p$ necessary or an artifact of our analysis? We answer this question with an example suggesting that the extra $\log p$ factor is in fact necessary and that the case $c = 1/2$ is special.

Suppose that $p$ is a Mersenne prime with $p = 2^r - 1$ for some prime $r$ and for simplicity, take $q_i = 1/3$ for each $i$. Observe by the triangle inequality that
$$\left|\hat{f}(1)\right| = \left| \sum_{x \in \mathbb{F}_p} \left(f(x) - p^{-1}\right) \cdot \omega^x \right| \le \left\| f - p^{-1} \cdot \mathbf{1} \right\|_1 = 2 \cdot \dtv\left( \cL(\latestedits{X}), \text{Unif}[\mathbb{F}_p] \right)$$
Now suppose that $t = ar - 1$ for some positive integer $a$. As shown in the lemma, we have
$$\left|\hat{f}(1)\right|^2 = \prod_{i = 0}^t \left| \frac{2}{3} + \frac{1}{3} \cdot \omega^{2^i} \right|^2 = \left[ \prod_{i = 0}^{r - 1}\left( \frac{5}{9} + \frac{4}{9} \cdot \cos\left(\frac{2\pi}{p} \cdot 2^i \right) \right) \right]^a$$
where the second equality is due to the fact that the sequence $2^i$ has period $r$ modulo $p$. Now observe that since $\frac{5}{9} + \frac{4}{9} \cdot \cos(x) \ge e^{-x^2}$, we have that
$$\prod_{i = 0}^{r - 1}\left( \frac{5}{9} + \frac{4}{9} \cdot \cos\left(\frac{2\pi}{p} \cdot 2^i \right) \right) \ge \exp\left( - \frac{4\pi^2}{p^2} \sum_{i = 0}^{r - 1} 2^{2i} \right) = \exp\left( - \frac{4\pi^2}{p^2} \cdot \frac{2^{2r} - 1}{3} \right) = \Omega(1)$$
which implies that $a$ should be $\Omega(r)$ for $\hat{f}(1)$ to be polynomially small in $p$. Thus the extra $\log p$ factor is necessary in this case and our analysis is tight. Note that in the special case of $c = 1/2$, the factors in the expressions for $\hat{f}(s)$ are of the form $\frac{1}{2} + \frac{1}{2} \cdot \omega^{2^i \cdot s}$ which can be arbitrarily close to zero. We remark that the construction, as stated, relies on there being infinitely many Mersenne primes. However, it seems to suggest that the extra $\log p$ factor is necessary. Furthermore, similar examples can be produced with $p$ that are not Mersenne, as long as the order of $2$ modulo $p$ is relatively small.

\section{Clique Counts in Sparse Erd\H{o}s-R\'{e}nyi Hypergraphs}
\label{sec:cliquecounts}
We prove the following classical lemma from Section \ref{subsec:greedysampling}.

\begin{lemma}
For fixed $\alpha \in (0, 1)$ and $s$, let $\kappa \ge s$ be the largest positive integer satisfying $\alpha \binom{\kappa}{s - 1} < s$. If $G \sim G(n, c, s)$ where $c = O(n^{-\alpha})$, then $\mathbb{E}[|\cl_k(G)|] = \binom{n}{k} c^{\binom{k}{s}}$ and $\omega(G) \le \kappa + 1 + t$ with probability at least $1 - O\left(n^{-\alpha t(1 - s^{-1}) \binom{\kappa + 2}{s - 1}}\right)$ for any fixed nonnegative integer $t$, where the constant in the $O(\cdot)$ notation can depend on $t$.
\end{lemma}

\begin{proof}
Let $C > 0$ be such that $c \le Cn^{-\alpha}$ for sufficiently large $n$. For any given set $\{ v_1, v_2, \dots, v_k\}$ of $k$ vertices in $[n]$, the probability that all hyperedges are present among $\{ v_1, v_2, \dots, v_k\}$ and thus these vertices form a $k$-clique in $G$ is $c^{\binom{k}{s}}$. Linearity of expectation implies that the expected number of $k$-cliques is $\mathbb{E}[|\cl_k(G)|] = \binom{n}{k} c^{\binom{k}{s}}$. Now consider taking $k = \kappa + 2 + t$ and note that
\begin{align*}
\mathbb{E}[|\cl_k(G)|] &= \binom{n}{k} c^{\binom{k}{s}} \\
&\le n^k c^{\binom{k}{s}} \le C^{\binom{k}{s}} \cdot \exp\left( \left( 1 - \frac{\alpha}{s} \binom{k - 1}{s - 1} \right) k \log n \right) \\
&\le C^{\binom{k}{s}} \cdot \exp\left( \left( 1 - \frac{\alpha}{s} \binom{\kappa + 1}{s - 1} \right) k \log n - \frac{\alpha}{s} \cdot t \binom{\kappa + 1}{s - 2} k \log n \right) \\
&\le C^{\binom{k}{s}} \cdot \exp\left( - \frac{\alpha}{s} \cdot t \binom{\kappa + 1}{s - 2} k \log n \right) \\
&= C^{\binom{k}{s}} \cdot \exp\left( - \frac{\alpha}{s} \cdot t 
\frac{s-1}{\kappa+2} \binom{\kappa + 2}{s - 1} k \log n \right) \\
&\le C^{\binom{k}{s}} n^{-\alpha t(1 - s^{-1}) \binom{\kappa + 2}{s - 1}}
\end{align*}
where we use $\binom{\kappa + 1 + t}{s - 1} \ge \binom{\kappa + 1}{s - 1} + t \binom{\kappa + 1}{s - 2}$ by iteratively applying Pascal's identity, as well as $\alpha \binom{\kappa+1}{s-1} > s$ and $k \ge \kappa + 2$. Observe that $\kappa = O(1)$ and thus $C^{\binom{k}{s}} = O(1)$. Now by Markov's inequality, it follows that $\mathbb{P}[\omega(G) \ge k] = \mathbb{P}[|\cl_k(G)| \ge 1] \le \mathbb{E}[|\cl_k(G)|]$, completing the proof of the lemma.
\end{proof}

\section{Analysis of Greedy Random Sampling}
\label{sec:grs-analysis}
This section is devoted to proving Theorem \ref{thm:greedysample}, which is restated below for convenience.

\begin{theorem}
Let $k$ and $s$ be constants and $c = \Theta(n^{-\alpha})$ for some $\alpha \in (0, 1)$. Let $\tau$ be the largest integer satisfying $\alpha \binom{\tau}{s - 1} < 1$ and suppose that
$$T \ge \left\{ \begin{matrix} 2n^{\tau + 1} c^{\binom{\tau + 1}{s}} (3 \log n)^{(k - \tau) (1 + \epsilon)} & \text{if } k \ge \tau + 1 \\ 2n^{k} c^{\binom{k}{s}} (\log n)^{1 + \epsilon} & \text{if } k < \tau + 1 \end{matrix} \right.$$
for some $\epsilon > 0$. Then $\textsc{greedy-random-sampling}$ run with $T$ iterations terminates with $S = \cl_k(G)$ with probability $1 -n^{-\omega(1)}$ over the random bits of the algorithm $\textsc{greedy-random-sampling}$ and over the choice of random hypergraph $G \sim G(n, c, s)$.
\end{theorem}

\begin{proof}
We first consider the case where $k \ge \tau + 1$. Fix some $\epsilon > 0$ and let $v = (v_1, v_2, \dots, v_k)$ be an ordered tuple of distinct vertices in $[n]$. Define the random variable
$$Z_v = n(n-1)\cdots (n - s + 2) \prod_{i = s - 1}^{k-1} \left| \textsc{cn}_G(v_1, v_2, \dots, v_i) \right|$$
The key property of $Z_v$ is that, in each iteration of $\textsc{greedy-random-sampling}$, the probability that the $k$ vertices $v_1, v_2, \dots, v_k$ are chosen in that order is exactly $1/Z_v$. The proof of this theorem will proceed by establishing upper bounds on $Z_v$ that hold for all $k$-cliques $v$ with high probability over the randomness of $G$, which will yield a bound on the number of iterations $T$ needed to exhaust all such $k$-cliques in $G$.

Consider the following event over the sampling $G \sim G(n, c, s)$
$$A_v = \left\{ Z_v \ge 2n^{\tau + 1} c^{\binom{\tau + 1}{s}} (3 \log n)^{(k - 1 - \tau) (1 + \epsilon)} \quad \text{and} \quad \{ v_1, v_2, \dots, v_k \} \in \cl_k(G) \right\}$$
We now proceed to bound the probability of $A_v$ through simple Chernoff and union bounds over $G$. In the next part of the argument, we condition on the event that $\{ v_1, v_2, \dots, v_k \}$ forms a clique in $G$. For each $i \in \{s - 1, s, \dots, k - 1\}$, let $Y_{v, i}$ be the number of common neighbors of $v_1, v_2, \dots, v_i$ in $V(G) \backslash \{v_1, v_2, \dots, v_k\}$. Note that $Y_{v, i} \sim \text{Bin}\left(n - k, c^{\binom{i}{s - 1}}\right)$ and that $\left| \textsc{cn}_G(v_1, v_2, \dots, v_i) \right| = k - i + Y_{v, i}$. The standard Chernoff bound for the binomial distribution implies that for all $\delta_i > 0$,
\begin{align*}
&\mathbb{P}\left[ \left| \textsc{cn}_G(v_1, v_2, \dots, v_i) \right| \ge k - i + (1 + \delta_i) (n - k) c^{\binom{i}{s - 1}} \right] \le \exp \left( - \frac{\delta_i^2}{2 + \delta_i} \cdot (n - k) c^{\binom{i}{s - 1}} \right)
\end{align*}
Now define $\kappa_i$ to be
$$\kappa_i = (n - k)^{-1} c^{- \binom{i}{s - 1}} \cdot (\log n)^{1 + \epsilon}$$
for each $i \in \{s-1, s, \dots, k - 1\}$. Let $\delta_i = \sqrt{\kappa_i}$ if $i \le \tau$ and $\delta_i = \kappa_i$ if $i > \tau$. Note that for sufficiently large $n$, $\delta_i < 1$ if $i \le \tau$ and $\delta_i \ge 1$ if $i > \tau$. These choices of $\delta_i$ ensure that the Chernoff upper bounds above are each at most $\exp\left( - \frac{1}{3} (\log n)^{1 + \epsilon} \right)$ for each $i$. A union bound implies that with probability at least $1 - k\exp\left( - \frac{1}{3} (\log n)^{1 + \epsilon} \right)$, it holds that
$$\left| \textsc{cn}_G(v_1, v_2, \dots, v_i) \right| < k - i + (1 + \delta_i) (n - k) c^{\binom{i}{s - 1}} < (1 + 2\delta_i) (n-k) c^{\binom{i}{s - 1}}$$
for all $i$ and sufficiently large $n$. Here, we used the fact that $\delta_i (n - k) c^{\binom{i}{s - 1}} = \omega(1)$ for all $i$ by construction and $k = O(1)$. Observe that $(1 + 2\delta_i) (n-k) c^{\binom{i}{s - 1}} \le 3(\log n)^{1 + \epsilon}$ for all $i \ge \tau + 1$. These inequalities imply that
\begin{align*}
\log Z_v &< \log n^{s - 1} + \sum_{i = s - 1}^\tau \log \left( (1 + 2\delta_i) (n-k) c^{\binom{i}{s - 1}} \right) + (k - 1 - \tau) (1 + \epsilon) \log (3 \log n) \\
&< \log n^{\tau + 1} + (\log c) \sum_{i = s - 1}^\tau \binom{i}{s - 1} + \sum_{i = s - 1}^\tau \log (1 + 2\delta_i) + (k - 1 - \tau) (1 + \epsilon) \log (3 \log n) \\
&\le \log \left( n^{\tau + 1} c^{\binom{\tau + 1}{s}} \right) + (k - 1 - \tau) (1 + \epsilon) \log (3 \log n) + 2 \sum_{i = s - 1}^\tau \delta_i \\
&\le \log \left( n^{\tau + 1} c^{\binom{\tau + 1}{s}} \right) + (k - 1 - \tau) (1 + \epsilon) \log (3 \log n) + o(1)
\end{align*}
The last inequality holds since $\tau = O(1)$ and since $\delta_i \lesssim (\log n)^{\frac{1}{2} + \frac{\epsilon}{2}} n^{-\frac{1}{2} + \frac{1}{2}\alpha \binom{\tau}{s - 1}} = o(1)$ for all $i \le \tau$ because of the definition that $\alpha \binom{\tau}{s - 1} < 1$. In summary, we have shown that for sufficiently large $n$
\begin{align*}
&\mathbb{P}\left[ Z_v \ge 2n^{\tau + 1} c^{\binom{\tau + 1}{s}} (3 \log n)^{(k - 1 - \tau) (1 + \epsilon)} \, \Big| \, \{ v_1, v_2, \dots, v_k \} \in \cl_k(G) \right] \\
&\quad \quad \le k\exp\left( - \frac{1}{3} (\log n)^{1 + \epsilon} \right) = n^{-\omega(1)}
\end{align*}
for any $k$-tuple of vertices $v = (v_1, v_2, \dots, v_k)$. Since $\mathbb{P}\left[ \{ v_1, v_2, \dots, v_k \} \in \cl_k(G) \right] = c^{\binom{k}{s}}$, we have that $\mathbb{P}[A_v] \le c^{\binom{k}{s}} n^{-\omega(1)} = n^{-\omega(1)}$ for each $k$-tuple $v$. Now consider the event
\begin{align*}
B &= \Big\{ Z_v < 2n^{\tau + 1} c^{\binom{\tau + 1}{s}} (3 \log n)^{(k - 1 - \tau) (1 + \epsilon)} \text{ for all } v  \text{ such that } \{ v_1, v_2, \dots, v_k \} \in \cl_k(G) \Big\}
\end{align*}
Note that $\overline{B} = \bigcup_{k\text{-tuples } v} A_v$ and a union bound implies that $\mathbb{P}[B] \ge 1 - \sum_{v} \mathbb{P}[A_v] \ge 1 - n^k \cdot n^{-\omega(1)} = 1 - n^{-\omega(1)}$ since there are fewer than $n^k$ $k$-tuples $v$.

We now show that as long as $B$ holds over the random choice of $G$, then the algorithm $\textsc{greedy-random-sampling}$ terminates with $S = \cl_k(G)$ with probability $1 - n^{-\omega(1)}$ over the random bits of $\textsc{greedy-random-sampling}$, which completes the proof of the lemma in the case $k > \tau + 1$. In the next part of the argument, we consider $G$ conditioned on the event $B$. Fix some ordering $v = (v_1, v_2, \dots, v_k)$ of some $k$-clique $\mathcal{C} = \{ v_1, v_2, \dots, v_k\}$ in $G$. Recall that in any one of the $T$ iterations of $\textsc{greedy-random-sampling}$, the probability that the $k$ vertices $v_1, v_2, \dots, v_k$ are chosen in that order is exactly $1/Z_v$. Since the $T$ iterations of $\textsc{greedy-random-sampling}$ are independent, we have that
$$\mathbb{P}\left[v \text{ is never chosen in a round} \right] = \left( 1 - \frac{1}{Z_v} \right)^T \le \exp\left( - \frac{T}{Z_v} \right) = n^{-\omega(1)}$$
since $T$ is chosen so that $T \ge Z_v (\log n)^{3(1+\epsilon)}$ for all $k$-tuples $v$, given the event $B$. Since there are at most $n^k$ possible $v$, a union bound implies that every such $v$ is chosen in a round of $\textsc{greedy-random-sampling}$ with probability at least $1 - n^{k} \cdot n^{- \omega(1)} = 1 - n^{-\omega(1)}$ over the random bits of the algorithm. In this case, $S = \cl_k(G)$ after the $T$ rounds of $\textsc{greedy-random-sampling}$. This completes the proof of the theorem in the case $k \ge \tau + 1$.

We now handle the case $k < \tau + 1$ through a nearly identical argument. Define $\kappa_i$ as in the previous case and set $\delta_i = \sqrt{\kappa_i}$ for all $i \in \{s - 1, s, \dots, k - 1\}$. By the same argument, for each $k$-tuple $v$ we have with probability $1 - n^{-\omega(1)}$ over the choice of $G$ that
\begin{align*}
\log Z_v &< \log n^{s - 1} + \sum_{i = s - 1}^{k - 1} \log \left( (1 + 2\delta_i) (n-k) c^{\binom{i}{s - 1}} \right) \\
&< \log n^k + (\log c) \sum_{i = s - 1}^{k - 1} \binom{i}{s - 1} + 2\sum_{i = s - 1}^{k - 1} \delta_i \\
&= \log \left( n^k c^{\binom{k}{s}} \right) + o(1)
\end{align*}
where again $\delta_i \lesssim (\log n)^{\frac{1}{2} + \frac{\epsilon}{2}} n^{-\frac{1}{2} + \frac{1}{2}\alpha \binom{\tau}{s - 1}} = o(1)$ for all $i \le k - 1 < \tau$. Define the event
$$B' = \left\{ Z_v < 2n^k c^{\binom{k}{s}} \text{ for all } v \text{ such that } \{ v_1, v_2, \dots, v_k \} \in \cl_k(G) \right\}$$
Note that $T$ is such that $T \ge Z_v (\log n)^{1 + \epsilon}$ for all $v$ if $B'$ holds. Now repeating the rest of the argument from the $k \ge \tau + 1$ case shows that $\mathbb{P}[B'] \ge 1 - n^{-\omega(1)}$ and that $\textsc{greedy-random-sampling}$ terminates with $S = \cl_k(G)$ with probability $1 - n^{-\omega(1)}$ over its random bits if $G$ is such that $B'$ holds. This completes the proof of the theorem.
\end{proof}

\end{appendix}

\end{document}